\documentclass[12pt,draftcls,onecolumn]{IEEEtran}
\usepackage{latexsym}
\usepackage{graphicx}
\usepackage{array}
\usepackage{amsmath}
\usepackage{amsfonts}
\usepackage{amssymb}
\usepackage{amsthm}


\newtheorem{thm}{Theorem}
\newtheorem{lemma}{Lemma}
\newtheorem{prop}{Proposition}
\newtheorem{defn}{Definition}

\newtheorem{remark}{Remark}
\newtheorem{cor}{Corollary}[thm]

\newcommand{\bx} {\boldsymbol{x}}
\newcommand{\bX} {\boldsymbol{X}}

\newcommand{\bY} {\boldsymbol{Y}}
\newcommand{\bZ} {\boldsymbol{Z}}

\newcommand{\bu} {\boldsymbol{u}}

\newcommand{\sS} {\mathcal{S}}

\def\ba#1\ea{\begin{align}#1\end{align}}
\def\bal#1\eal{\begin{align}#1\end{align}}

\newcommand{{\bR}} {\right)}
\newcommand{\bp} {\begin{proof}}
\newcommand{\ep} {\end{proof}}
\newcommand{\bLF} {\left\{}
\newcommand{{\bRF}} {\right\}}

\newcommand{\uuline}[1]{\underline{\underline{#1}}}
\newcommand{\ooline}[1]{\overline{\overline{#1}}}

\begin{document}

\title{A General Formula for Compound Channel Capacity}

\author{Sergey Loyka,  Charalambos D. Charalambous

\thanks{S. Loyka is with the School of Electrical Engineering and Computer Science, University of Ottawa, Ontario, Canada, e-mail: sergey.loyka@ieee.org}

\thanks{C.D. Charalambous is with the ECE Department, University of Cyprus, Nicosia, Cyprus, e-mail: chadcha@ucy.ac.cy}

\thanks{This paper was presented in part at IEEE Int. Symp. on Information Theory (ISIT-15), Hong Kong, June 14-19, 2015, and at International Zurich Seminar
on Communications (IZS-16), March 2-4, 2016, Zurich, Switzerland.}

}

\maketitle


\begin{abstract}
A general formula for the capacity of arbitrary compound channels with the receiver channel state information is obtained using the information density approach. No assumptions of ergodicity, stationarity or information stability are made and the channel state set is arbitrary. A direct (constructive) proof is given. To prove achievability, we generalize Feinstein Lemma to the compound channel setting, and to prove converse, we generalize Verdu-Han Lemma to the same compound setting. A notion of a uniform compound channel is introduced and the general formula is shown to reduce to the familiar $\sup-\inf$ expression for such channels. As a by-product, the arbitrary varying channel capacity is established under maximum error probability and deterministic coding. Conditions are established under which the worst-case and compound channel capacities are equal so that the full channel state information at the transmitter brings in no advantage.

The compound inf-information rate plays a prominent role  in the general formula. Its properties are studied and a link between information-unstable and information-stable regimes of a compound channel is established. The results are extended to include $\varepsilon$-capacity of compound channels. Sufficient and necessary conditions for the strong converse to hold are given.
\end{abstract}

\begin{IEEEkeywords}
Channel capacity, compound channel, information stability, channel uncertainty, arbitrary-varying channel.
\end{IEEEkeywords}

\section{Introduction}

\IEEEPARstart{C}{hannel} state information (CSI) has a significant impact on channel performance  as well as code design to achieve that performance. This effect is especially pronounced for wireless channels, due to their dynamic nature, limitations of a feedback link (if any), channel estimation errors etc. \cite{Biglieri}.
When only incomplete or inaccurate CSI is available, performance analysis and coding techniques have to be modified properly. The impact of channel uncertainty  has been extensively studied since late 1950s \cite{Dobrushin}-\cite{Csiszar-92}; see \cite{Lapidoth-98A} for an extensive literature review up to late 1990s. Since channel estimation is done at the receiver (Rx) and then transmitted to the transmitter (Tx) via a limited (if any) feedback link, most studies concentrate on limited CSI available at the Tx end (CSI-T) assuming full CSI at the Rx end (CSI-R) \cite{Biglieri}, the assumption we adopt in this paper. The impact of mismatched decoding (i.e. imperfect CSI-R) on the capacity of single-state channels has been studied in \cite{Somekh-Baruch-15}.

There are several typical approaches to model channel uncertainty. In the compound channel model, the channel is unknown to the Tx but is known to belong to a certain set of channels, the uncertainty set. A member of the channel uncertainty set (state set) is selected at the beginning and held constant during the entire transmission \cite{Blackwell}-\cite{Root}, thus modeling a scenario with little dynamics (channel  coherence time significantly exceeds the codeword duration \cite{Biglieri}). A more dynamic approach is that of the arbitrary-varying channel (AVC), where the channel is allowed to vary from symbol to symbol being unknown to the Tx (but also restricted to belong to a certain class of channels) \cite{Csiszar-92}\cite{Lapidoth-98A}. A variation of the compound channel model is that of the composite channel where there is a probability assigned to each member of the compound channel set thus avoiding an over-pessimistic nature of the compound channel capacity when one channel is particularly bad but occurs with small probability \cite{Effros}. Finally, incomplete CSI at the Tx end can be addressed by assuming that the channel is not known but its distribution is known to the Tx, the so-called channel distribution information (CDI) \cite{Biglieri}.

All the studies above of compound channels require members of the uncertainty (state) set to be information-stable (e.g. stationary and ergodic), which limits significantly their applicability, especially in wireless communications, where the channel behaviour is often non-stationary, non-ergodic (as an example, many modulation-induced channels are non-stationary and quasi-static fading channels are non-ergodic). A general approach to information-unstable channels and sources (e.g. non-ergodic, non-stationary etc.), the information-spectrum approach, was pioneered in \cite{Han'93}\cite{Verdu} and developed in detail in \cite{Han}. In this paper, we apply the information-spectrum approach to extend the compound channel model \cite{Dobrushin}-\cite{Lapidoth-98A} to information-unstable scenarios, where mutual information have no operational meaning anymore. This results in a general formula for the capacity of  compound channels with arbitrary channel state sets, which are not necessarily ergodic, stationary or information-stable.

While the standard compound channel model assumes no CSI-R, such information can be obtained via a training sequence with negligible loss in rate for a quasi-static channel (which stays fixed for the entire transmission) \cite{Biglieri} provided that the uncertainty set is not too rich (without this condition, the estimation may not be possible at all, even for a quasi-static channel, as an example in Section \ref{sec.Examples} demonstrates). This justifies the compound channel model with CSI-R. On the other hand, limitations of a feedback channel (if any) result in significant uncertainty in CSI-T thus justifying the present compound channel model where no CSI is available to the Tx.

The capacity of a class of compound information-unstable channels has been studied  earlier in \cite{Han} using the information spectrum  approach. However, (i) its proof is rather involved and indirect (first, a result is established for mixed channels; then, a certain equivalence is established between mixed and compound channels, which establishes the compound channel capacity in a rather elaborate and indirect way); and (ii) its reliability criterion does not require \textit{uniform} convergence of error probability to zero (as the blocklength increases) over the whole class of channels\footnote{Uniform convergence of error probability to zero is the standard requirement for compound channels, see e.g. \cite{Blackwell}-\cite{Lapidoth-98A}\cite{Csiszar-11}, since channel state is unknown and arbitrary-low error probability is desired over the whole class of channels.}, but only for each channel individually, see Definition 3.3.1 in \cite{Han}. As a consequence, arbitrary-low error probability cannot be ensured over the whole class of (infinite-state) channels simultaneously via a sufficiently-large blocklength\footnote{In particular, when the supremum over channel states is taken, the upper bound to error probability at the bottom of p. 199 in \cite{Han} becomes infinite for infinite-state channels. Thus, Theorem 3.3.5 in \cite{Han} ensures reliable communications for finite-state channels only (see Section \ref{sec.Examples} for corresponding examples).} (in the case of finite-state channels, the convergence is automatically uniform and this problem disappears). Our approach avoids this problem by using the standard formulation of the reliability criterion for compound channels, whereby uniform convergence of error probability to zero is required over the whole class of channels simultaneously, not just for each channel individually, see Section \ref{sec.General Formula} for a detailed discussion. We obtain a general formula for the capacity of compound (possibly information-unstable) channels with arbitrary state sets (not only countable or finite) and give a direct proof by extending Feinstein and Verdu-Han Lemmas to the compound channel setting in Theorem 1 (using an algorithmic code construction).

A formulation of channel uncertainty problem based on the information density approach was presented in \cite{Effros} using the composite channel model. This, however, requires a probability measure associated with channel states, so that the channel input-output description is entirely probabilistic and the general formula in \cite{Verdu} applies to such setting. We consider the compound channel setting here, where there is no probability measure associated with channel states and a certain achievable performance has to be demonstrated for any member of the uncertainty set using a single code, for which the general formula in \cite{Verdu} is not applicable.

While the channel capacity theorem ensures the achievability of any rate below the capacity with arbitrary low error probability, there exists a hope to achieve higher rates by allowing slightly higher error probability, since the transition from arbitrary low to high error probability may be slow. Strong converse ensures that this transition is very sharp (for any rate above the capacity, the error probability converges to 1) and hence dispels the hope. In this paper, we establish the sufficient and necessary conditions for the strong converse to hold for the general compound channel. In a nutshell, the conditions require the existence of an information-stable sub-sequence of (bad) channel states (indexed by the blocklength) such that the respective sub-sequence of information densities converges in probability to the compound channel capacity. No assumptions of stationarity, ergodicity or information stability are made for the members of the uncertainty set.

The rest of the paper is organized as follows. Section \ref{sec.Channel Model} introduces a (general) channel model and assumptions. The information density approach \cite{Verdu}\cite{Han} is briefly reviewed in section \ref{sec.review}. In section \ref{sec.General Formula}, a general compound channel capacity formula is obtained in Theorem \ref{thm.C.general} using the information density approach, which holds for a wide class of channels including non-stationary, non-ergodic or information-unstable channels and arbitrary channel state sets (not only countable or finite-state). A compound inf-information rate plays a prominent role in this formula. The notion of a uniform compound channel is introduced and, for this channel, the general formula is reduced to a more familiar $\sup-\inf$ form in Theorem \ref{thm.C.uniform}. The conditions for the worst-case and compound capacities to be the same (and hence the full CSI-T to bring in no advantage) are established in Section \ref{sec.worst-case}. Section \ref{sec.Properties} presents a number of properties of the compound inf(sup)-information rate, which are instrumental to its analysis and capacity evaluation in particular scenarios. In addition to a number of inequalities, we establish the optimality of independent signalling when the compound channel is memoryless and show that the information spectrum induced by any code achieving arbitrary low error probability over the compound channel is a single atom at the code rate also equal to the mutual information rate for any channel state (so that these rates are state-independent). This links information-unstable and information-stable regimes of the compound channel.

As a by-product of the analysis, we establish the arbitrary-varying channel capacity under maximum error probability and deterministic coding with the full CSI-R, which is equal to the respective compound channel capacity (recall that the AVC capacity can be different under random and deterministic coding as well as under maximum and average error probabilities; the deterministic code AVC capacity under maximum error probability is not known in general while some special cases have been settled \cite{Lapidoth-98A}\cite{Csiszar-11}). This result shows that using average (as opposed to maximum) error probability or random (rather than deterministic) coding does not increase the AVC capacity under the full CSI-R.

In Section \ref{sec.strong.conv}, sufficient and necessary conditions for the strong converse to hold are established.  Compound $\varepsilon$-capacity is obtained in Section \ref{sec.C_eps}. The compound channel capacity is compared to that of mixed and composite channels in Section \ref{sec.Mixed.Composite} and illustrative examples are given in Section \ref{sec.Examples}. In particular, an example in Section \ref{sec.Examples}-D demonstrates that our results do not hold without the full Rx CSI assumption in general, thus demonstrating its important role.

\section{Channel Model}
\label{sec.Channel Model}

Let us consider a generic discrete-time channel model shown in Fig. 1, where $X^n = \{X_1^{(n)} ... X_n^{(n)} \}$ is a (random) sequence of $n$ input symbols, $\bX=\{X^n\}_{n=1}^{\infty}$ denotes all such sequences, and $Y^n$ is the corresponding output sequence; $s \in \sS$ denotes the channel state (which may also be a sequence) and $\sS$ is the (arbitrary) uncertainty set; $p_s(y^n|x^n)$ is the channel transition probability; $p(x^n)$ and $p_s(y^n)$ are the input and output distributions under channel state $s$.

Let us assume that the full CSI is available at the receiver but not the transmitter (see e.g. \cite{Biglieri} for a detailed motivation of this assumption; when the channel is quasi-static, i.e. stays fixed for the entire block transmission but may change for the next block, this assumption may be not necessary) and that the channel input $\bX$ and state $s$ are independent of each other. Following the standard approach (see e.g. \cite{Biglieri}), we augment the channel output with the state: $Y^n \rightarrow (Y^n,s)$. The information density \cite{Dobrushin-59}-\cite{Stratonovich-74} between the input and output for a given channel state $s$ and a given input distribution $p(x^n)$  is
\ba \notag
i(x^n;y^n,s) &= \log \frac{p(x^n,y^n,s)}{p(x^n)p(y^n,s)} \\
&= \log \frac{p_s(x^n,y^n)}{p(x^n)p_s(y^n)} \\ \notag
&= i(x^n;y^n|s)
\ea
where we have used the fact that the input $X^n$ and channel state $s$ are independent of each other. Note that we make no assumptions of stationarity, ergodicity or information stability in this paper, so that the normalized information density $n^{-1} i(X^n;Y^n|s)$ does not have to converge to the respective mutual information rate as $n \rightarrow\infty$. There is no need for the consistency assumption on $p_s(y^n|x^n)$ either (e.g. the channel may behave differently for even and odd $n$).

For future use, we give the formal definitions of information stability following \cite{Dobrushin-59}-\cite{Ting} (with a slight extension to the compound setting).

\begin{defn}
Two random sequences $\bX$ and $\bY$ are information-stable if
\ba
\frac{i(X^n;Y^n|s)}{I(X^n;Y^n|s)} \overset{\Pr}{\rightarrow} 1 \ \mbox{as}\ n \rightarrow \infty
\ea
i.e. the information density rate $\frac{1}{n} i(X^n;Y^n|s)$ converges in probability to the respective mutual information rate $\frac{1}{n} I(X^n;Y^n|s)$.
\end{defn}

\begin{defn}
Channel state $s$ is information stable if there exists an input $\bX$ such that
\ba
\frac{i(X^n;Y^n|s)}{I(X^n;Y^n|s)} \overset{\Pr}{\rightarrow} 1,\ \ \frac{I(X^n;Y^n|s)}{C_{ns}} \rightarrow 1\ \mbox{as}\ n \rightarrow \infty,
\ea
where $C_{ns} = \sup_{p(x^n)} I(X^n;Y^n|s)$ is the information capacity.
\end{defn}

As an example, a stationary discrete memoryless channel is information-stable while a non-ergodic fading channel is information-unstable in general. Information stability is both sufficient and necessary for the information capacity (and also the mutual information) to have an operational meaning \cite{Dobrushin-59}\cite{Ting} for a regular (single-state) channel.

Note that the 2nd definition requires effectively the channel to behave ergodically under the optimal input only, and tells us nothing about its behaviour under other inputs (e.g. a practical code) and, in this sense, is rather limiting. To characterize the channel behaviour under different inputs (not only the optimal one), we will consider the information stability of its input $\bX$ and the induced output $\bY$ following Definition 1 and saying that "channel is information-stable under input $\bX$". Further note that, for the compound channel, some channel states may be information stable while others are not.

We will not assume any particular noise or channel distribution so that our results are general and apply to \emph{any} such distribution.

\begin{figure}[htbp]
\centerline{\includegraphics[width=3.in]{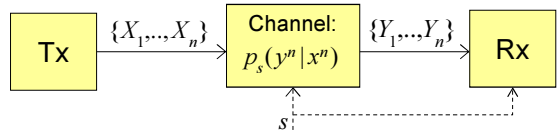}} \caption{A general discrete-time basedband  system model. No assumptions on channel state set are made. The channel is allowed to be information-unstable (e.g. non-stationary non-ergodic).}
\label{fig:Figure_channel}
\end{figure}

\section{Capacity of a Given Channel State}
\label{sec.review}

In this section, we will assume that a channel state $s$ is given and known to both the Tx and Rx (alternatively, one may assume that the channel state set is a singleton) and review the corresponding results in \cite{Verdu}\cite{Han} for this setting.

When the channel is information-stable under input $\bX$, the normalized information density converges to the mutual information rate in probability as $n\rightarrow \infty$ (due to the law of large numbers) \cite{Dobrushin-59}-\cite{Stratonovich-74},
\ba \notag
\frac{1}{n} i(X^n;Y^n|s) &\rightarrow I(\bX;\bY|s) \\
&= \lim_{n\rightarrow\infty} \frac{1}{n}\sum_{x^n,y^n} p_s(x^n,y^n) i(x^n,y^n|s)
\ea
whose operational meaning is the maximum achievable rate for a given input distribution $p(\bx)$, a channel state $s$ and arbitrary small error probability\footnote{while the summation applies to discrete alphabets, it is clear that the same argument holds for continuous alphabets using integration/probability measures instead. This applies throughout the paper unless indicated otherwise.}. Maximizing it over $p(\bx)$ results in the channel capacity. In other cases (information-unstable channels), the normalized information density remains a random variable, even when $n\rightarrow\infty$, whose support set is in general an interval \cite{Verdu}\cite{Han}. Following the analysis in \cite{Verdu}, its infimum $\underline{I}(\bX;\bY|s)$, the inf-information rate, is the largest achievable rate for a given channel state $s$, input distribution $p(\bx)$ and arbitrary-small error probability:
\ba
\underline{I}(\bX;\bY|s) \triangleq \sup_{R}\bLF R: \lim_{n\rightarrow\infty} \Pr\left\{Z_{ns} \le R\right\} =0 \bRF
\ea
where $Z_{ns}=n^{-1} i(X^n;Y^n|s)$ is the information density rate.

Following Theorems 2 and 5 in \cite{Verdu}, the channel capacity, for a given state $s$, is obtained by maximizing $\underline{I}(\bX;\bY|s)$ over $p(\bx)$,
\ba
\label{eq.C.Verdu}
C(s) = \sup_{p(\bx)} \underline{I}(\bX;\bY|s)
\ea

Note that this is a very general result, as the channel is not required to be information-stable (ergodic, stationary, etc.). The converse is proved via Verdu-Han Lemma (a lower bound to error probability, which is a dual of Feinstein bound) \cite{Verdu}\cite{Han}. We definite ($n$, $r_n$, $\varepsilon_{ns}$)-code in the standard way, where $n$ is the block length, $\varepsilon_{ns}$ is the error probability for channel state $s$ (either maximum or average error probability can be used; this has no effect on the capacity), $r_n=\ln M_n/n$ is the code rate and $M_n$ is the number of codewords.

\begin{lemma}[Verdu-Han Lemma \cite{Verdu}\cite{Han}]
\label{lemma.Verdu.Han}
Every $(n,r_n,\varepsilon_{ns})$-code  satisfies the following inequality,
\ba
\label{eq.lemma.Verdu.Han}
\varepsilon_{ns} \ge \Pr\bLF \frac{1}{n} i(X^n;Y^n|s) \le r_n - \gamma \bRF - e^{-\gamma n}
\ea
for any $\gamma > 0$, where $X^n$ is uniformly distributed over all codewords and $Y^n$ is the corresponding channel output under channel state $s$.
\end{lemma}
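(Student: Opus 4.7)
The plan is to prove the contrapositive form $P_c \le \Pr\{Z_{ns} > r_n - \gamma\} + e^{-\gamma n}$, where $P_c = 1 - \varepsilon_{ns}$ is the probability of correct decoding and $Z_{ns} = n^{-1} i(X^n;Y^n|s)$. With $D_m$ denoting the decoding region of the $m$th codeword $x_m^n$ and $D(X^n)$ the region associated with the transmitted $X^n$, I would decompose
\begin{equation}
P_c = \Pr\{Y^n \in D(X^n),\ Z_{ns} > r_n - \gamma\} + \Pr\{Y^n \in D(X^n),\ Z_{ns} \le r_n - \gamma\}.
\end{equation}
Dropping the $\{Y^n \in D(X^n)\}$ restriction in the first summand immediately bounds it by $\Pr\{Z_{ns} > r_n - \gamma\}$, so the whole task reduces to showing that the second summand is at most $e^{-\gamma n}$.

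For this, I would use the defining identity $p_s(y^n|x^n) = e^{i(x^n;y^n|s)} p_s(y^n)$, which on the low-density event $A_m = \{y^n : i(x_m^n;y^n|s) \le n(r_n-\gamma)\}$ yields the pointwise bound $p_s(y^n|x_m^n) \le e^{n(r_n-\gamma)} p_s(y^n)$. Since $X^n$ is uniform over the $M_n = e^{n r_n}$ codewords,
\begin{equation}
\Pr\{Y^n \in D(X^n),\ Z_{ns} \le r_n - \gamma\} = \frac{1}{M_n} \sum_{m=1}^{M_n} \int_{D_m \cap A_m} p_s(y^n|x_m^n)\, dy^n.
\end{equation}
Inserting the pointwise bound, enlarging the region of integration from $D_m \cap A_m$ to $D_m$, and using the disjointness of the decoding regions so that $\sum_m P_s(D_m) \le 1$, the right-hand side collapses to $e^{n(r_n-\gamma)}/M_n = e^{-\gamma n}$, as required. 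Combining this with the earlier decomposition yields the stated inequality.

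The only non-routine ingredient is the pointwise change-of-measure estimate on the low-density event; the rest is accounting that depends only on the uniform prior on codewords and the disjointness of decoding regions (so the proof works equally well under maximum or average error probability, and it is insensitive to the alphabet being discrete or continuous, the integrals being interpreted with respect to the appropriate reference measure). I do not anticipate any real obstacle for a fixed state $s$, since the argument is essentially the standard single-channel Verdu-Han proof applied to the kernel $p_s(y^n|x^n)$. The genuinely harder task -- handling the whole family of states $s\in\sS$ with a single code and obtaining a bound that is uniform in $s$ -- arises only in the compound converse announced for the main capacity theorem of the paper, not in this per-state statement.
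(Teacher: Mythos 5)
Your proof is correct: it is exactly the standard Verd\'u--Han argument (the change-of-measure bound $p_s(y^n|x_m^n)\le e^{n(r_n-\gamma)}p_s(y^n)$ on the low-density event, plus disjointness of the decoding regions and the uniform prior over the $M_n=e^{nr_n}$ codewords), merely written in terms of the correct-decoding probability $1-\varepsilon_{ns}$ instead of, as in Lemma 3.2.2 of Han, bounding $\Pr\{Z_{ns}\le r_n-\gamma\}$ by $\varepsilon_{ns}+e^{-\gamma n}$ --- the two decompositions are complements of one another. Note that the paper gives no proof of this lemma at all (it is cited as a re-wording of Han's Lemma 3.2.2 with the state $s$ made explicit), and your closing remark correctly locates the paper's actual contribution in the compound versions (Lemmas 3 and 4), where a single state-independent code must handle the supremum over $s\in\sS$.
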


This is a slight re-wording of Lemma 3.2.2 in \cite{Han}, where we explicitly indicate channel state $s$ for future use.

On the other hand, the achievability of \eqref{eq.C.Verdu} for a given and known $s$ (i.e. a single, known channel) was proved in \cite{Verdu} via Feinstein Lemma.

\begin{lemma}[see e.g. \cite{Verdu}\cite{Han}]
\label{lemma.Feinstein}
For arbitrary input $X^n$, any $r_n$ and a given channel state $s$, there exists a code satisfying the following inequality,
\ba
\label{eq.lemma.Feinstein}
\varepsilon_{ns} \le \Pr\bLF \frac{1}{n} i(X^n;Y^n|s) \le r_n + \gamma \bRF + e^{-\gamma n}
\ea
for any $\gamma > 0$.
\end{lemma}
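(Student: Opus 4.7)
The plan is the classical greedy (sequential) codebook construction originally due to Feinstein, which operates directly on the information density and therefore requires no structural assumption on the channel under the given state $s$. Fix $\gamma>0$ and let $M=e^{nr_n}$ be the desired number of codewords. For each candidate codeword $x^n$ I introduce a preliminary decoding region
\ba
B(x^n)=\bLF y^n:\tfrac{1}{n}i(x^n;y^n|s)>r_n+\gamma\bRF,
\ea
and set $\eta=\Pr\bLF\tfrac{1}{n}i(X^n;Y^n|s)\le r_n+\gamma\bRF+e^{-\gamma n}$, the target bound on the maximum conditional error probability.

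Next I would select codewords sequentially. Initialize $U_0=\emptyset$, and given $U_k=\bigcup_{j\le k}D_j$, choose any $x_{k+1}^n$ for which
\ba
\Pr\bLF Y^n\in B(x_{k+1}^n)\setminus U_k\mid X^n=x_{k+1}^n,s\bRF\ge 1-\eta,
\ea
assigning the disjoint decoding region $D_{k+1}=B(x_{k+1}^n)\setminus U_k$. By construction each codeword has conditional error probability at most $\eta$, so the task reduces to showing that the process yields at least $M$ codewords. Suppose, for contradiction, it halts at some $k<M$. Then the failure inequality holds for every $x^n$, and averaging against $X^n\sim p(x^n)$ gives
\ba
(1-\alpha)-\Pr_s\bLF Y^n\in U_k\bRF\le\Pr_s\bLF(X^n,Y^n)\in B,\,Y^n\notin U_k\bRF<1-\eta,
\ea
with $\alpha=\eta-e^{-\gamma n}$, and hence $\Pr_s\{Y^n\in U_k\}>e^{-\gamma n}$.

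The opposite bound is immediate from the definition of $B$: on $D_j\subseteq B(x_j^n)$ one has $p_s(y^n)<e^{-n(r_n+\gamma)}p_s(y^n|x_j^n)$, so $\Pr_s\{Y^n\in D_j\}<e^{-n(r_n+\gamma)}$, and disjointness yields $\Pr_s\{Y^n\in U_k\}<ke^{-n(r_n+\gamma)}$. Comparing the two estimates forces $k>e^{nr_n}=M$, contradicting $k<M$. The main obstacle, really the only subtle point, is arranging the peeled regions $D_j$ so that their disjointness combines with the information-density threshold built into $B(x^n)$ to produce the crucial bound $\Pr_s\{Y^n\in U_k\}<ke^{-n(r_n+\gamma)}$; once that bookkeeping is in place the rest is straightforward algebra, and no structural assumption on the channel (stationarity, ergodicity, information stability) enters anywhere, matching the general setting of the lemma.
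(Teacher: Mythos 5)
Your proof is correct and takes essentially the approach the paper relies on: Lemma \ref{lemma.Feinstein} itself is only cited to \cite{Verdu}\cite{Han}, and the paper's own Appendix proof of the compound generalization (Lemma \ref{lemma.comp.Feinstein}) is precisely this greedy Feinstein/Ash iterative construction with an added $\inf_s$ over states, of which your single-state argument is the specialization. The only point worth making explicit is that the averaging step requires the failure inequality for \emph{every} $x^n$, including previously selected codewords --- which holds here because your selection rule does not exclude them (the paper secures the analogous point in the compound proof by observing that $B_s(\bu_k)\setminus\bigcup_{l}D_{ls}=\emptyset$).
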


While this is sufficient to prove achievability for a given and known $s$ (codewords and decoding regions depend on channel state), it does not work for the compound channel setting, since we need a code that works for the entire class of channels, not just a single channel as in \eqref{eq.lemma.Feinstein}.

\section{Compound Channel Capacity}
\label{sec.General Formula}

In this section, we obtain a general formula for compound channel capacity of information-unstable channels by generalizing Lemmas \ref{lemma.Verdu.Han} and \ref{lemma.Feinstein} above to the compound channel setting. This will generalize the corresponding result established in \cite{Han} (Theorem 3.3.5) for finite-state channels to arbitrary compound channels. An ($n$, $r_n$, $\varepsilon_{n}$)-code for a compound channel is defined in the same way as above, with the compound error probability
\ba
\label{eq.e_n.comp}
\varepsilon_n = \sup_{s\in \sS} \varepsilon_{ns}
\ea
where $\mathcal{S}$ is the set of all possible channel states (uncertainty set), and  $\varepsilon_n \rightarrow 0$ as $n \rightarrow\infty$ is required as the reliability criterion, so that
\ba
\label{eq.lim-sup}
\lim_{n\rightarrow\infty} \sup_{s\in \sS} \varepsilon_{ns} = 0
\ea
which ensures arbitrary low error probability \textit{uniformly} over the whole class of channels for sufficiently large $n$ \cite{Biglieri}-\cite{Lapidoth-98A},
\ba
\varepsilon_{ns} \le \varepsilon\ \ \forall s \in \sS,\ \forall n \ge n_0(\varepsilon)
\ea
for any $\varepsilon>0$, where $n_0(\varepsilon)$ is a sufficiently-large blocklength. It should be emphasized that, in the compound setting, it is essential that (i) $\varepsilon_{ns} \le \varepsilon$ holds for all states $s \in \sS$ (so that the reliability is ensured uniformly over the whole class of channels) and that (ii) $n_0(\varepsilon)$ does not depend on $s$ (since the Tx does not know channel state and thus cannot choose codebooks which depend on it). On the other hand, Definition 3.3.1 in \cite{Han} does not require uniform convergence of error probability to zero over the whole class of channels so that its formulation of the reliability criterion is equivalent to
\ba
\label{eq.sup-lim}
\sup_{s\in \sS} \lim_{n\rightarrow\infty} \varepsilon_{ns} = 0
\ea
which implies $\lim_{n\rightarrow\infty} \varepsilon_{ns} = 0$ for all $s\in \sS$ and hence
\ba
\varepsilon_{ns} \le \varepsilon\ \ \forall s \in \sS,\ \forall n \ge n_0(\varepsilon,s)
\ea
i.e. $n_0(\varepsilon,s)$ depends on channel state $s$, which is in contradiction to the compound setting whereby the Tx does not know state $s$ and hence cannot use codebooks that depend on it. Hence, an arbitrary low error probability cannot be ensured simultaneously over the whole class of channels, for any blocklength, does not matter how large, under the criterion in \eqref{eq.sup-lim}. This problem disappears for finite-state channels since the convergence is automatically uniform: one can simply use $n_0(\varepsilon) = \max_s n_0(\varepsilon,s)$. Note also that \eqref{eq.sup-lim} does not imply \eqref{eq.lim-sup} in general; rather,
\ba
\lim_{n\rightarrow\infty} \sup_{s\in \sS} \varepsilon_{ns} \ge \sup_{s\in \sS} \lim_{n\rightarrow\infty} \varepsilon_{ns}
\ea
Examples of Section \ref{sec.Examples} illustrate the cases when the inequality is strict. However, \eqref{eq.sup-lim} is equivalent to \eqref{eq.lim-sup} for finite-state channels, so that Theorem 3.3.5 in \cite{Han} ensures reliable communications in that setting.

In the compound setting of this paper, \eqref{eq.lim-sup} is used as the reliability criterion, which is the standard approach \cite{Biglieri}-\cite{Lapidoth-98A}\cite{Csiszar-11},  codebooks are required to be independent of the actual channel state $s$ while the decision regions are allowed to depend on $s$ (due to the full CSI-R assumption). It is immediate that the worst-case channel capacity is $\inf_{s \in \mathcal{S}} C(s)$ but achieving this requires $s$ to be known to the Tx. If this is not the case, it is far less trivial that the compound channel capacity can be obtained by swapping $\sup$ and $\inf$; see e.g. \cite{Lapidoth-98A} for an extensive discussion of this issue. While the swapping works in many cases, there are examples when it does not \cite{Lapidoth-98B}. This is the case for the general (possibly information-unstable) compound channel considered here, whose capacity is established below.

\begin{thm}
\label{thm.C.general}
Consider the general compound channel where the channel state $s \in \mathcal{S}$ is known to the receiver but not the transmitter and is independent of the channel input; the transmitter knows the (arbitrary) uncertainty set $\mathcal{S}$. Its compound channel capacity is given by
\ba
C_c = \sup_{p(\bx)} \uuline{I}(\bX;\bY)
\ea
where $\uuline{I}(\bX;\bY)$ is the compound inf-information rate:
\ba
\label{eq.uuline{I}}
\uuline{I}(\bX;\bY) \triangleq \sup_{R}\bLF R: \lim_{n\rightarrow\infty} \sup_{s\in \sS} \Pr\left\{ Z_{ns} \le R \right\} =0 \bRF
\ea
where $Z_{ns}=n^{-1} i(X^n;Y^n|s)$ is the information density rate.
\end{thm}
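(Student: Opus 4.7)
I would apply Lemma \ref{lemma.Verdu.Han} state-by-state: for any $(n,r_n,\varepsilon_n)$-compound code with $X^n$ uniform over its codewords, and every $s\in\sS$, $\varepsilon_{ns}\ge \Pr\{Z_{ns}\le r_n-\gamma\}-e^{-\gamma n}$. Taking $\sup_{s\in\sS}$ and using \eqref{eq.e_n.comp} gives
\[
\varepsilon_n \;\ge\; \sup_{s\in\sS}\Pr\{Z_{ns}\le r_n-\gamma\}-e^{-\gamma n}.
\]
If a rate $r > \sup_{p(\bx)}\uuline{I}(\bX;\bY)$ were achievable, then for the codebook-induced $p(\bx)$ we have $r > \uuline{I}(\bX;\bY)$, and the defining property of $\uuline{I}(\bX;\bY)$ keeps the first term bounded away from $0$ for every $\gamma>0$ with $r-\gamma > \uuline{I}(\bX;\bY)$. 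Hence $\liminf_n \varepsilon_n > 0$, contradicting $\varepsilon_n\to 0$, so $C_c \le \sup_{p(\bx)}\uuline{I}(\bX;\bY)$.

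\textbf{Direct.} For $R < \sup_{p(\bx)}\uuline{I}(\bX;\bY)$, pick $p(\bx)$ with $R < \uuline{I}(\bX;\bY)$ and $\gamma>0$ with $R+2\gamma < \uuline{I}(\bX;\bY)$, so that $\theta_n := \sup_{s\in\sS}\Pr\{Z_{ns}\le R+\gamma\}\to 0$. The main step is a compound extension of Lemma \ref{lemma.Feinstein}: produce codewords $x_1^n,\ldots,x_M^n$ with $M=\lceil e^{nR}\rceil$ (the same for all states, since CSI-T is absent) and, for each $s\in\sS$, disjoint decoding regions $D_1^s,\ldots,D_M^s$ (state-dependent, by full CSI-R) with
\[
\sup_{s\in\sS}\max_i \Pr\{Y^n\notin D_i^s\mid X^n=x_i^n,\,s\}\;\le\; \theta_n+e^{-\gamma n}.
\]
Combined with $\theta_n\to 0$ this yields $\varepsilon_n\to 0$ at rate $R$, giving $C_c\ge \sup_{p(\bx)}\uuline{I}(\bX;\bY)$.

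\textbf{Compound Feinstein via a greedy construction, and the main obstacle.} I would build the code sequentially: with $B_s(x^n):=\{y^n: i(x^n;y^n|s)>n(R+\gamma)\}$ and $D_j^s:=B_s(x_j^n)\setminus\bigcup_{\ell<j}D_\ell^s$, the single-state Feinstein estimate $p_s(B_s(x^n))\le (M e^{\gamma n})^{-1}$ propagates to $p_s(\bigcup_{\ell<k}D_\ell^s)\le e^{-\gamma n}$ uniformly in $s$, and averaging $\psi_s(x):=p_s(B_s(x)\setminus\bigcup_{\ell<k}D_\ell^s\mid x)$ against $X^n\sim p$ yields $\mathbb{E}[\psi_s(X^n)]\ge 1-\theta_n-e^{-\gamma n}$ for every $s$. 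What needs real care is the extraction, at each step $k$, of a deterministic $x_k^n$ with $\psi_s(x_k^n)\ge 1-\theta_n-e^{-\gamma n}-o(1)$ \emph{simultaneously for every} $s\in\sS$: this uniform-in-$s$ selection is precisely where the argument must go beyond Theorem 3.3.5 in \cite{Han}, whose union bound over $\sS$ degrades as $|\sS|$ grows. I expect the resolution to rely on randomising the codeword selection and observing that $\nu\mapsto\inf_{s\in\sS}\mathbb{E}_{x\sim\nu}[\psi_s(x)]$ is concave on the simplex of selection laws on $\mathcal{X}^n$ and is already $\ge 1-\theta_n-e^{-\gamma n}$ at $\nu=p$, followed by a measurable-selection / derandomisation step that exploits $\psi_s\in[0,1]$ and the uniformity of $\theta_n\to 0$ in $s$ to produce the required deterministic $x_k^n$; iterating this greedy step $M$ times then closes the direct part.
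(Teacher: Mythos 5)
Your converse is essentially the paper's argument: applying Lemma \ref{lemma.Verdu.Han} state by state and taking $\sup_{s\in\sS}$ is exactly Lemma \ref{lemma.comp.Verdu.Han}, and the contradiction with $\varepsilon_n\to0$ goes through (one small imprecision: the definition of $\uuline{I}$ only guarantees $\limsup_n\sup_s\Pr\{Z_{ns}\le r-\gamma\}>0$ for $r-\gamma>\uuline{I}(\bX;\bY)$, which suffices, but does not give your stated $\liminf_n\varepsilon_n>0$). The genuine gap is in the direct part, at precisely the step you flag and then leave unresolved: from $E\{\psi_s(X^n)\}\ge 1-\theta_n-e^{-\gamma n}$ \emph{for every} $s$ one cannot extract a single deterministic $x_k^n$ with $\psi_s(x_k^n)\ge 1-\theta_n-e^{-\gamma n}-o(1)$ simultaneously for all $s$ when $\sS$ is arbitrary: the good sets of inputs may be essentially disjoint across states (in caricature, $\psi_{s_1}\approx 1_A$, $\psi_{s_2}\approx 1_{A^c}$ gives both averages $\approx 1/2$ while $\inf_s\psi_s(x)=0$ for every $x$), and a per-state Markov bound cannot be intersected over an uncountable $\sS$. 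Your proposed repair does not work as sketched: $\nu\mapsto\inf_s E_{x\sim\nu}\{\psi_s(x)\}$ is indeed concave (an infimum of linear functionals), but concavity points the wrong way for derandomisation — such a function is maximised in the interior of the simplex, not at vertices, and the whole difficulty is exactly that the mixture $\nu=p$ can score well while every point mass scores badly. No concrete mechanism is given that exploits the only structure available (the uniformity $\theta_n\to0$ in $s$) to produce the deterministic codeword, so the compound Feinstein Lemma \ref{lemma.comp.Feinstein}, and with it the achievability half of the theorem, remains unproved.

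For comparison, the paper closes this step without any derandomisation of the averaged bound. It never demands that the new codeword be within $o(1)$ of the per-state average; instead the acceptance threshold is fixed at $1-\lambda_n$ with $\lambda_n=\sup_{s}\Pr\{i(X^n;Y^n|s)\le\ln\alpha\}+M_n/\alpha$, $\alpha=M_ne^{n\gamma}$, i.e. the slack (the $\sup_s$ information-spectrum tail plus $e^{-n\gamma}$) is built into the threshold itself. Codewords are added greedily as long as some $x^n$ passes the uniform-in-$s$ test $\inf_s P_s\bigl(B_s(x^n)\setminus\bigcup_{k<K}D_{ks}\mid x^n\bigr)\ge 1-\lambda_n$, and Lemma \ref{lemma.K>Mn} shows by a counting/contradiction argument that the process cannot stop before $M_n$ codewords are chosen: at stopping, the failure condition is exploited at a single (worst-case) state $s_0$, averaged over $x^n\sim p$, and compared against the $\sup_s$ inside $\lambda_n$, forcing $K>M_n$; the maximum error probability of the resulting code is then at most $\lambda_n$, which is \eqref{eq.comp.Feinstein}. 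In short, the uniformity in $s$ is paid for by the definition of $\lambda_n$ together with this stopping-time argument, not by selecting a codeword that is near-average for every state — this is the ingredient your direct part is missing.
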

\begin{proof}
To prove achievability and converse, we generalize Lemmas 1 and 2 above to the compound channel setting.

\begin{lemma}[Feinstein Lemma for compound channels]
\label{lemma.comp.Feinstein}
For arbitrary input $X^n$ and uncertainty set $\sS$ and any code rate $r_n$, there exists a $(n, r_n, \varepsilon_{n})$-code (where the codewords are independent of channel state $s$), satisfying the following inequality,
\ba
\label{eq.comp.Feinstein}
\varepsilon_{n} \le \sup_{s\in \sS} \Pr\bLF \frac{1}{n} i(X^n;Y^n|s) \le r_n + \gamma \bRF + e^{-\gamma n}
\ea
for any $\gamma > 0$.
\end{lemma}
\begin{proof} see Appendix.
\end{proof}

It is clear from the proof that the same inequality holds for both maximum and average error probability, and hence the capacity is also the same. Next, we generalize Verdu-Han Lemma to the compound channel setting.

\begin{lemma}[Verdu-Han Lemma for compound channels]
\label{lemma.comp.Verdu.Han}
For any uncertainty set $\sS$, every $(n,r_n,\varepsilon_{n})$-code  satisfies the following inequality,
\ba
\varepsilon_{n} \ge \sup_{s\in \sS} \Pr\bLF \frac{1}{n} i(X^n;Y^n|s) \le r_n - \gamma \bRF - e^{-\gamma n}
\ea
for any $\gamma > 0$, where $X^n$ is uniformly distributed over all codewords and $Y^n$ is the corresponding channel output under channel state $s$.
\end{lemma}
\begin{proof}
To prove this inequality, invoke \eqref{eq.lemma.Verdu.Han} for a given channel state $s$ and then maximize both sides over all possible channel states to obtain:
\ba
\varepsilon_{n} = \sup_s \varepsilon_{ns} \ge \sup_{s} \Pr\bLF Z_{ns} \le r_n - \gamma \bRF - e^{-\gamma n}
\ea
A subtle point here is that the original Verdu-Han Lemma allows codewords to depend on channel state while the compound codewords are independent of channel state. Since such a dependence can only decrease error probability, the desired inequality still holds.
\end{proof}

Now, to prove achievability in Theorem \ref{thm.C.general}, fix $p(\bx)$ and set $r_n \le \uuline{I}(\bX;\bY) - 2\gamma$ for any $\gamma>0$. From Lemma \ref{lemma.comp.Feinstein},
\ba
\lim_{n\rightarrow\infty} \varepsilon_{n} &\le \lim_{n\rightarrow\infty} \sup_{s\in \sS} \Pr\bLF Z_{ns} \le \uuline{I}(\bX;\bY) - \gamma \bRF = 0
\ea
which shows that $\uuline{I}(\bX;\bY) - 2\gamma$ is achievable $\forall \gamma >0$, so that $C_c \ge \sup_{p(\bx)} \uuline{I}(\bX;\bY)$.

To prove the converse, let $\uuline{I}^* = \sup_{p(\bx)}\uuline{I}(\bX;\bY)$ and select a codebook with $r_n \ge \uuline{I}^* + 2\gamma$  for some $\gamma>0$ and sufficiently large $n$, and use Lemma \ref{lemma.comp.Verdu.Han} to obtain for this codebook
\ba \notag
\lim_{n\rightarrow\infty}\varepsilon_{n} &\ge \lim_{n\rightarrow\infty}\sup_{s\in \sS} \Pr\bLF Z_{ns} \le \uuline{I}^* + \gamma \bRF\\ \notag
 &\ge \lim_{n\rightarrow\infty}\sup_{s\in \sS} \Pr\bLF Z_{ns} \le \uuline{I}(\bX;\bY) + \gamma \bRF\\
&\ge \varepsilon_0 > 0
\ea
for some fixed $\varepsilon_0 > 0$, where the last two inequalities follow from the definition of $\uuline{I}$ and 2nd inequality follows from $\uuline{I}^* \ge \uuline{I}(\bX;\bY)$, so that no rate above $\uuline{I}^*$ is achievable: $C_c \le \uuline{I}^*$.

It is clear from the proof that the same capacity holds under the maximum as well as average error probability.
\end{proof}

\begin{remark}
It is $\uuline I(\bX,\bY)$ that extends $\underline{I}(\bX,\bY|s)$ to the compound channel setting, not $\underline{I}(\bX,\bY) \triangleq  \inf_s \underline{I}(\bX,\bY|s)$, in the general case.
\end{remark}

The relationship between $\uuline I(\bX,\bY)$ and $\underline{I}(\bX,\bY)$ is established below.

\begin{prop}
\label{prop.uuline(I)<=underline(I)}
The following inequality holds for a general compound channel
\ba
\label{eq.prop.uuline(I)<=underline(I)}
\uuline I(\bX,\bY) \le \underline{I}(\bX,\bY) \triangleq \inf_s \underline{I}(\bX,\bY|s)
\ea
\end{prop}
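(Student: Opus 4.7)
The plan is a short manipulation directly from the two sup-definitions. Fix any $R < \uuline{I}(\bX,\bY)$; I want to show that $R \le \underline{I}(\bX,\bY|s)$ for every $s \in \sS$, so that taking the infimum on the right and then the supremum over such $R$ on the left yields the claim.

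First, I would note that for each $n$ and each $s$ the map $R \mapsto \Pr\{Z_{ns} \le R\}$ is non-decreasing (it is a CDF), and hence so is $R \mapsto \sup_{s \in \sS} \Pr\{Z_{ns} \le R\}$. Consequently, the set
\ba \notag
A \triangleq \bLF R : \lim_{n\to\infty} \sup_{s\in \sS} \Pr\{Z_{ns} \le R\} = 0 \bRF
\ea
is downward-closed in $\mathbb{R}$, so any $R < \uuline{I}(\bX,\bY) = \sup A$ belongs to $A$.

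Next, for such $R$ and any fixed $s \in \sS$, the trivial inequality $\Pr\{Z_{ns} \le R\} \le \sup_{s' \in \sS} \Pr\{Z_{ns'} \le R\}$ and $R \in A$ give $\lim_{n\to\infty} \Pr\{Z_{ns} \le R\} = 0$. Thus $R$ meets the defining condition of $\underline{I}(\bX,\bY|s)$ in \eqref{eq.C.Verdu}'s preamble, so $R \le \underline{I}(\bX,\bY|s)$. Since $s \in \sS$ is arbitrary, $R \le \inf_s \underline{I}(\bX,\bY|s) = \underline{I}(\bX,\bY)$. Taking the supremum over all $R < \uuline{I}(\bX,\bY)$ delivers \eqref{eq.prop.uuline(I)<=underline(I)}.

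There is no real obstacle in this argument; it is essentially a definition-chase. The only subtle point worth spelling out is the monotonicity step, which is what licenses replacing ``$R \in A$'' by ``$R < \sup A$'' and keeps the argument from requiring that the supremum in the definition of $\uuline{I}(\bX,\bY)$ be attained. Strictness of the inequality in \eqref{eq.prop.uuline(I)<=underline(I)} (which can occur) would be a separate matter, illustrated by explicit examples later in the paper rather than by this proof.
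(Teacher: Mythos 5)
Your proof is correct and takes essentially the same route as the paper: both hinge on the pointwise bound $\Pr\{Z_{ns}\le R\}\le\sup_{s'\in\sS}\Pr\{Z_{ns'}\le R\}$ applied to a rate $R$ strictly below $\uuline{I}(\bX,\bY)$. The paper packages this as a contradiction at the midpoint $R=(\uuline{I}(\bX,\bY)+\underline{I}(\bX,\bY))/2$, while you argue directly via downward-closedness of the defining set, but the content is the same.
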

\begin{proof}
The proof is by contradiction. Let $\uuline I = \uuline I(\bX,\bY)$, $\underline{I} = \underline{I}(\bX,\bY)$ and assume that $\uuline I > \underline{I}$, set $R=(\uuline I + \underline{I})/2 > \underline{I}$ and observe that $R< \uuline{I}$ and therefore
\ba\notag
\lim_{n\rightarrow\infty}\sup_{s} \Pr\bLF Z_{ns} \le R \bRF &\ge \sup_{s} \lim_{n\rightarrow\infty}\Pr\bLF Z_{ns} \le R \bRF\\
&\ge \varepsilon_0 > 0
\ea
for some $\varepsilon_0 > 0$ - a contradiction, where the last two inequalities are from the definition of $\underline{I}$. Therefore, $\uuline I \le \underline{I}$.
\end{proof}

%
\subsection{Uniform compound channels}
It can be demonstrated, via examples (see Examples 1 and 2 in Section \ref{sec.Examples}), that the inequality in \eqref{eq.prop.uuline(I)<=underline(I)} can be strict. To see when the equality is achieved, we need the following definition.

\begin{defn}
\label{defn.uniform.ch}
A compound channel is uniform if there exists $\delta\ge 0$ such that for any input $\bX_{\delta}$ satisfying $\uuline{I}(\bX_\delta;\bY_{\delta}) \ge C_c- \delta$ (i.e.  $\bX_{\delta}$ is $\delta$-suboptimal), where $\bY_{\delta}$ is the corresponding output, the convergence in
\ba
\label{eq.uniform.conf}
\Pr\bLF n^{-1} i(X^n_{\delta};Y^n_{\delta}|s) \le  \underline{I}(\bX_{\delta},\bY_{\delta}) - \gamma \bRF \rightarrow 0
\ea
as $n\rightarrow\infty$ is uniform in $s\in \sS$ for all sufficiently small $\gamma >0$.
\end{defn}

Note that while the point-wise convergence is ensured for each $s$ from the definition of $\underline{I}(\bX_{\delta},\bY_{\delta})$, it does not have to be uniform and, indeed, examples can be constructed where it is not (see Section \ref{sec.Examples}). In a sense, the uniform convergence here ensures that the channel does not behave "too badly" as $n$ increases. It is straightforward to see that if the uniform convergence in \eqref{eq.uniform.conf} holds for some $\gamma=\gamma_0 >0$, then it also holds for any $\gamma >\gamma_0$, so that the condition needs to be checked for arbitrary small $\gamma>0$ only. If the supremum in $C_c= \sup_{p(\bx)} \uuline{I}(\bX,\bY)$ is achieved, then one may take $\delta=0$ and use the optimal input only. All finite-state compound channels are uniform under any input (i.e. one may take $\delta=C_c$).

For a uniform compound channel, one obtains the following result.

\begin{prop}
\label{prop.uuline(I)=underline(I)}
The following equality holds for any $\bX_{\delta}$ if and only if the compound channel is uniform,
\ba
\label{eq.prop.uuline(I)=underline(I)}
\uuline I(\bX_{\delta},\bY_{\delta}) = \underline{I}(\bX_{\delta},\bY_{\delta})
\ea
If $\delta = C_c$, then this holds for any input.
\end{prop}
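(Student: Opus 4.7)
The plan is to leverage Proposition \ref{prop.uuline(I)<=underline(I)}, which already gives $\uuline{I}(\bX_\delta,\bY_\delta) \le \underline{I}(\bX_\delta,\bY_\delta)$ for free, and then observe that uniformity is precisely the hypothesis needed for the reverse inequality. Both directions of the iff therefore reduce to straightforward manipulations of the definitions of $\uuline{I}$ and $\underline{I}$, with uniformity acting as the bridge that lets the supremum over $s$ commute with the limit in $n$.

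For the $(\Leftarrow)$ direction, fix a $\delta$-suboptimal input $\bX_\delta$ and a small $\gamma>0$. By the uniformity hypothesis applied to $\bX_\delta$, the convergence in \eqref{eq.uniform.conf} is uniform in $s$, i.e.\ $\lim_{n\to\infty}\sup_{s\in\sS}\Pr\{Z_{ns}\le \underline{I}(\bX_\delta,\bY_\delta)-\gamma\}=0$. The definition \eqref{eq.uuline{I}} of $\uuline{I}$ as a supremum then gives $\uuline{I}(\bX_\delta,\bY_\delta)\ge \underline{I}(\bX_\delta,\bY_\delta)-\gamma$, and letting $\gamma\downarrow 0$ combined with Proposition \ref{prop.uuline(I)<=underline(I)} yields the claimed equality.

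For the $(\Rightarrow)$ direction, assume the equality holds for every $\delta$-suboptimal input, and fix such an $\bX_\delta$ together with a small $\gamma>0$. Setting $R=\underline{I}(\bX_\delta,\bY_\delta)-\gamma$, the hypothesis gives $R < \uuline{I}(\bX_\delta,\bY_\delta)$, so by the supremum definition of $\uuline{I}$ together with monotonicity of the map $R\mapsto \Pr\{Z_{ns}\le R\}$, one obtains $\lim_{n\to\infty}\sup_{s\in\sS}\Pr\{Z_{ns}\le R\}=0$, which is exactly the uniform convergence required in \eqref{eq.uniform.conf}, establishing uniformity with the same witness $\delta$.

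For the closing claim with $\delta=C_c$, the $\delta$-suboptimality condition reads $\uuline{I}(\bX,\bY)\ge 0$, which I would verify once and for all by a one-line change-of-measure argument: Markov's inequality applied to the Radon--Nikodym derivative $p_s(X^n,Y^n)/(p(X^n)p_s(Y^n))$ gives $\Pr\{Z_{ns}\le -\gamma\}\le e^{-n\gamma}$ for every $s$ and every $\gamma>0$, hence $\sup_{s\in\sS}\Pr\{Z_{ns}\le -\gamma\}\le e^{-n\gamma}\to 0$, so $\uuline{I}(\bX,\bY)\ge 0$ for any input. Thus every input is $\delta$-suboptimal when $\delta=C_c$, and the equality established in the main part of the proposition propagates to all inputs. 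The main conceptual point, rather than any computational hurdle, is recognizing that uniformity controls exactly the sup--lim commutation that distinguishes $\uuline{I}$ from $\underline{I}$; once this is identified, the proof is essentially definition-chasing.
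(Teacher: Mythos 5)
Your proof is correct and follows essentially the same route as the paper: you use Proposition \ref{prop.uuline(I)<=underline(I)} for the inequality $\uuline I\le\underline I$ and the equivalence between uniform convergence in $s$ and $\lim_{n\to\infty}\sup_s(\cdot)=0$ (which the paper isolates as a separate lemma) for both directions of the iff, handling the supremum definition of $\uuline I$ via monotonicity just as the paper implicitly does. The extra Markov-inequality verification that $\uuline I(\bX;\bY)\ge 0$ for the $\delta=C_c$ case is a harmless addition consistent with the paper's Proposition \ref{prop.properties.I}.
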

\begin{proof}
see Appendix.
\end{proof}

We are now in a position to establish the capacity of uniform compound channels.

\begin{thm}
\label{thm.C.uniform}
Consider the general compound channel where the channel state $s \in \mathcal{S}$ is known to the receiver but not the transmitter and is independent of the channel input; the transmitter knows the (arbitrary) uncertainty set $\mathcal{S}$. Its compound channel capacity is bounded by
\ba
\label{eq.thm.C.uniform}
C_c \le \sup_{p(\bx)} \inf_{s\in \sS} \underline{I}(\bX;\bY|s)
\ea
with equality for a uniform compound channel. In particular, this holds when $\sS$ is of finite cardinality.
\end{thm}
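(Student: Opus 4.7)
The proof will be short because the heavy lifting is already done in Theorem~\ref{thm.C.general} and Propositions~\ref{prop.uuline(I)<=underline(I)} and~\ref{prop.uuline(I)=underline(I)}. The plan is to split the statement into three pieces: (i) the upper bound in \eqref{eq.thm.C.uniform} for arbitrary $\sS$, (ii) the equality claim under the uniform-channel hypothesis, and (iii) the observation that finite $|\sS|$ forces the uniform-channel property.

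For (i), I would simply chain Theorem~\ref{thm.C.general} and Proposition~\ref{prop.uuline(I)<=underline(I)}: taking a $\sup$ over $p(\bx)$ preserves the inequality $\uuline{I}(\bX;\bY)\le \underline{I}(\bX;\bY)=\inf_{s\in\sS}\underline{I}(\bX;\bY|s)$, so that
\[
C_c \;=\; \sup_{p(\bx)} \uuline{I}(\bX;\bY) \;\le\; \sup_{p(\bx)} \inf_{s\in\sS} \underline{I}(\bX;\bY|s),
\]
which is \eqref{eq.thm.C.uniform}. This step is essentially a one-liner.

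For (ii), assume the compound channel is uniform with parameter $\delta$ as in Definition~\ref{defn.uniform.ch}. Fix $\varepsilon>0$ and pick an input $\bX$ satisfying $\uuline{I}(\bX;\bY)\ge C_c - \min(\varepsilon,\delta)$; this exists by the defining supremum in Theorem~\ref{thm.C.general}. Such $\bX$ is $\delta$-suboptimal, so Proposition~\ref{prop.uuline(I)=underline(I)} gives $\underline{I}(\bX;\bY)=\uuline{I}(\bX;\bY)\ge C_c-\varepsilon$. Therefore
\[
\sup_{p(\bx)} \inf_{s\in\sS}\underline{I}(\bX;\bY|s) \;=\; \sup_{p(\bx)} \underline{I}(\bX;\bY) \;\ge\; C_c - \varepsilon,
\]
and letting $\varepsilon\to 0$ yields the reverse of (i). The only delicate point is that the $\sup$ over $p(\bx)$ need not be attained, which is why I work with an $\varepsilon$-approximate maximizer rather than an exact one; this is the part I expect to be most error-prone, but it is handled cleanly by the two-parameter $\min(\varepsilon,\delta)$ trick.

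For (iii), I would show that when $|\sS|<\infty$ the uniformity in Definition~\ref{defn.uniform.ch} is automatic for every input (so one may take $\delta=C_c$, as the text already remarks). For any input $\bX$ and any $\gamma>0$, pointwise in $s$ we have $\Pr\{Z_{ns}\le \underline{I}(\bX;\bY)-\gamma\}\le \Pr\{Z_{ns}\le \underline{I}(\bX;\bY|s)-\gamma\}\to 0$ as $n\to\infty$, using $\underline{I}(\bX;\bY)\le \underline{I}(\bX;\bY|s)$ and the definition of the inf-information rate for state $s$. Since $\sS$ is finite, taking a maximum over finitely many sequences converging to $0$ preserves the limit, so $\sup_{s\in\sS}\Pr\{Z_{ns}\le \underline{I}(\bX;\bY)-\gamma\}\to 0$, which is exactly the uniform-convergence condition in \eqref{eq.uniform.conf}. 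Applying (ii) then finishes the proof.
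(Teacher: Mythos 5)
Your proposal is correct and follows essentially the same route as the paper: the upper bound via Proposition~\ref{prop.uuline(I)<=underline(I)} applied to Theorem~\ref{thm.C.general}, the equality via Proposition~\ref{prop.uuline(I)=underline(I)} restricted to $\delta$-suboptimal inputs, and the finite-$\sS$ case by verifying the uniform-convergence condition directly. The only difference is that you spell out the two details the paper leaves as parenthetical remarks --- the $\varepsilon$-approximate maximizer argument showing that the supremum over $\bX_\delta$ suffices, and the pointwise-plus-finite-maximum verification of \eqref{eq.uniform.conf} --- both of which are handled correctly.
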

\begin{proof}
The inequality follows from \eqref{eq.prop.uuline(I)<=underline(I)}. The equality part is established by using Proposition \ref{prop.uuline(I)=underline(I)} in Theorem \ref{thm.C.general} (note that taking the supremum over all $\bX_{\delta}$ is sufficient). It is straightforward to verify that a finite cardinality of $\sS$ implies the uniform convergence condition in \eqref{eq.uniform.conf} for any input (not only $\delta$-suboptimal).
\end{proof}

As far as the compound channel capacity is concerned, the uniform convergence condition in \eqref{eq.uniform.conf} needs to hold for optimal or suboptimal inputs only for \eqref{eq.thm.C.uniform} to hold with equality. Note also that Theorems \ref{thm.C.general} and \ref{thm.C.uniform} hold for any alphabet and any uncertainty set. In many cases of practical interest (e.g. when the set of feasible input distributions $p(\bx)$ and/or the uncertainty set $\mathcal{S}$ are compact and $\underline{I}(\bX;\bY|s)$ is well-behaving), $\sup$ and/or $\inf$ can be substituted by $\max$ and/or $\min$. Unlike Theorem 3.3.5 in \cite{Han}, the present result applies to arbitrary channel uncertainty sets and its proof is direct (i.e. not relying on mixed channels but directly constructing capacity-approaching codes for compound channels in Lemma \ref{lemma.comp.Feinstein}). The examples in Section \ref{sec.Examples} demonstrate that the inequality can be strict.

We remark that many well-known results (e.g. \cite{Root}) are  special cases of Theorem \ref{thm.C.general} and \ref{thm.C.uniform}. The latter is pleasantly similar to known results for information-stable channels, which also include $\sup - \inf$ expression. When $\sS$ is of finite cardinality, \eqref{eq.thm.C.uniform} coincides with the compound capacity in Theorem 3.3.5 in \cite{Han}, i.e. the compound and mixed channels have the same capacity in this case. Examples 1 and 2 in Section \ref{sec.Examples} show that the compound capacity can be strictly less than the corresponding mixed channel capacity in the general case.

One may ask whether the $\sup-\inf$ capacity formula in Theorems \ref{thm.C.uniform} apply to a broader class of channels than those in Definitions \ref{defn.uniform.ch}, i.e. without imposing the uniform convergence condition. We consider this below.

\begin{defn}
A sequence of functions $f_n(s)$ is weakly decreasing if there exists $\delta_m \ge 0$ such that $\delta_m \rightarrow 0$ as $m \rightarrow \infty$ and
\bal
f_n(s) \le f_m(s) +\delta_m\ \forall n\ge m,\ \forall s
\eal
\end{defn}

\begin{prop}
If the uncertainty set $\mathcal{S}$ is compact (e.g. closed and bounded) and there exists such $\delta \ge 0$ that
\bal
f_n(s) = \Pr\bLF n^{-1} i(X^n_{\delta};Y^n_{\delta}|s) \le \underline{I}(\bX_{\delta};\bY_{\delta}) - \gamma \bRF,
\eal
is upper semi-continuous in $s$ and weakly decreasing for all sufficiently small $\gamma>0$ and all sufficiently large $n$, and for any $\delta$-suboptimal input $\bX_{\delta}$, then \eqref{eq.prop.uuline(I)=underline(I)} holds for any $\bX_{\delta}$ and hence the equality in \eqref{eq.thm.C.uniform} follows.
\end{prop}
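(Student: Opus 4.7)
The plan is to show that the three structural hypotheses -- compactness of $\mathcal{S}$, upper semi-continuity of $f_n(\cdot)$, and the weakly decreasing property -- upgrade the automatic pointwise convergence $f_n(s)\to 0$ into uniform convergence over $s\in\mathcal{S}$. Once uniform convergence is in hand, the compound channel satisfies Definition \ref{defn.uniform.ch} for every $\delta$-suboptimal input $\bX_\delta$, so Proposition \ref{prop.uuline(I)=underline(I)} yields \eqref{eq.prop.uuline(I)=underline(I)}, and substituting into Theorem \ref{thm.C.uniform} produces the claimed equality in \eqref{eq.thm.C.uniform}.

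First I would record the (free) pointwise step. Because $\underline{I}(\bX_\delta;\bY_\delta) = \inf_s \underline{I}(\bX_\delta;\bY_\delta|s) \le \underline{I}(\bX_\delta;\bY_\delta|s)$ for every $s$, the definition of $\underline{I}(\bX_\delta;\bY_\delta|s)$ immediately implies $f_n(s)\to 0$ as $n\to\infty$ for each fixed $s$ and each sufficiently small $\gamma>0$. No hypothesis beyond the definition of $\underline{I}$ is needed here.

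Next I would carry out a Dini-type argument tailored to upper semi-continuity and to the weakly decreasing (rather than monotone) regime. Fix $\varepsilon>0$. By the weakly decreasing property, choose $m_0$ so large that $\delta_m<\varepsilon/3$ for all $m\ge m_0$ and so that $f_n$ is upper semi-continuous for $n\ge m_0$. For each $s\in\mathcal{S}$, pointwise convergence provides $N(s)\ge m_0$ with $f_{N(s)}(s)<\varepsilon/3$; upper semi-continuity of $f_{N(s)}$ then yields an open neighborhood $U(s)$ of $s$ on which $f_{N(s)}(s')<2\varepsilon/3$. Compactness extracts a finite subcover $U(s_1),\dots,U(s_k)$; set $M=\max_i N(s_i)$. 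For any $n\ge M$ and any $s'\in\mathcal{S}$, pick $i$ with $s'\in U(s_i)$; applying the weakly decreasing inequality with $m=N(s_i)\ge m_0$ gives
\begin{align}
f_n(s') \le f_{N(s_i)}(s') + \delta_{N(s_i)} < \tfrac{2\varepsilon}{3} + \tfrac{\varepsilon}{3} = \varepsilon.
\end{align}
Hence $\sup_{s\in\mathcal{S}} f_n(s)\to 0$, which is precisely the uniform-convergence requirement \eqref{eq.uniform.conf}.

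The main obstacle, I expect, is exactly this Dini-type lift: the classical theorem demands continuity and pointwise monotonicity, whereas here we have only one-sided (upper) semi-continuity and the relaxed, slack-tolerant \emph{weakly decreasing} property. The trick above is to absorb the one-sided jump admitted by upper semi-continuity into the slack $\delta_{N(s_i)}$ by insisting on $N(s)\ge m_0$ from the outset, which keeps the total budget bounded by $\varepsilon$ uniformly in $s$. With this in hand, the final step -- invoking Proposition \ref{prop.uuline(I)=underline(I)} to conclude $\uuline I(\bX_\delta;\bY_\delta)=\underline{I}(\bX_\delta;\bY_\delta)$ and passing to the supremum over $\bX_\delta$ through Theorem \ref{thm.C.uniform} -- is routine.
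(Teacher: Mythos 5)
Your proof is correct. Conceptually it rests on the same pillar as the paper's argument -- a generalized Dini-type interchange of limit and supremum for upper semi-continuous functions on a compact state set -- but the execution differs: the paper simply invokes Theorem A.1.5(b) of \cite{Bauerle} to obtain the first equality in \eqref{eq.lim sup = sup lim} and then deduces \eqref{eq.prop.uuline(I)=underline(I)} from the definition of $\underline{I}(\bX_{\delta};\bY_{\delta})$, whereas you prove the needed uniform convergence from scratch via the $\varepsilon/3$ finite-subcover argument and then route the conclusion through Definition \ref{defn.uniform.ch} and Proposition \ref{prop.uuline(I)=underline(I)}. Your version buys two things: it is self-contained, and it explicitly verifies that the relaxed \emph{weakly decreasing} property (with slack $\delta_m$) suffices in place of genuine monotonicity -- a point the citation leaves implicit, since off-the-shelf statements of the generalized Dini theorem assume a pointwise decreasing sequence; your device of forcing $N(s)\ge m_0$ so that the semicontinuity jump and the slack $\delta_{N(s_i)}$ share the $\varepsilon$ budget is exactly the right repair, and the requirement that $n\ge M\ge N(s_i)$ makes the application of the weakly decreasing inequality legitimate. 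The paper's route is shorter and, by passing through Lemma \ref{lemma.lim sup = sup lim}, makes the equivalence with uniform convergence explicit; both routes finish identically by observing that the supremum in the capacity formula may be restricted to $\delta$-suboptimal inputs, so the equality in \eqref{eq.thm.C.uniform} follows.
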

\begin{proof}
Using Theorem A.1.5(b) in \cite{Bauerle} under the stated conditions ensures the 1st equality in \eqref{eq.lim sup = sup lim} while the 2nd equality follows from the definition of $\underline{I}(\bX_{\delta};\bY_{\delta})$, from which the first statement follows. The 2nd statement can be obtained by observing that the supremum can be taken over $\bX_{\delta}$ only without any loss.
\end{proof}

It is straightforward to see that the uniform convergence in Definition \ref{defn.uniform.ch} implies the weakly-decreasing property but the converse is not necessarily true. On the other hand, there is no requirement for $\sS$ to be compact in Definition \ref{defn.uniform.ch}, so that these formulations are complementary to each other. It can be shown that any finite-state compound channel is uniform and thus a special case for Theorems \ref{thm.C.uniform} and \ref{thm.Ce.uniform}. The weakly-decreasing property represents the natural case where the performance improves  with blocklength while the continuity property holds for many channel models. Note that $\mathcal{S}$ is not required here to be countably-finite or even countable (but it has to be bounded and closed).

\subsection{Worst-case channel capacity}
\label{sec.worst-case}

One may also consider the worst-case channel capacity $C_w$ (i.e. the capacity of the worst-case channel in the uncertainty set),
\bal
\label{eq.Cw}
C_w = \inf_{s\in \sS} \sup_{p(\bx)} \underline{I}(\bX;\bY|s)
\eal
which has the operational meaning under the full Tx CSI. It is well-known that $C_w \ge C_c$ (since any code for the compound channel must also work on the worst-case channel) and there are many cases where the inequality is strict. Below, we establish conditions under which they are equal for the general compound channel.

\begin{defn}
A saddle-point property is said to hold if
\bal
\inf_{s\in \sS} \sup_{p(\bx)} \underline{I}(\bX;\bY|s) = \sup_{p(\bx)} \inf_{s\in \sS} \underline{I}(\bX;\bY|s)
\eal
\end{defn}

Note that this definition does not impose any operational meaning on the quantities involved. The following proposition establishes the conditions under which $C_w = C_c$ for the general compound channel.

\begin{prop}
Consider the general compound channel under the full Rx CSI such that: (i) the saddle-point property holds, and (ii) the compound channel is uniform. Then, the worst-case and compound capacities are the same,
\bal
\label{eq.Cw=Cc}
C_w = \inf_{s\in \sS} \sup_{p(\bx)} \underline{I}(\bX;\bY|s) = \sup_{p(\bx)} \uuline{I}(\bX;\bY) = C_c
\eal
The 1st condition is also necessary.
\end{prop}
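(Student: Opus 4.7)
The plan is to read both directions off a single chain of inequalities assembled from earlier results in the paper. The starting observation is that, for any compound channel with full Rx CSI,
\bal
C_c = \sup_{p(\bx)} \uuline{I}(\bX;\bY) \le \sup_{p(\bx)} \inf_{s\in\sS} \underline{I}(\bX;\bY|s) \le \inf_{s\in\sS} \sup_{p(\bx)} \underline{I}(\bX;\bY|s) = C_w,
\eal
where the first equality is Theorem \ref{thm.C.general}; the first inequality is Proposition \ref{prop.uuline(I)<=underline(I)} used inside the $\sup_{p(\bx)}$ (recalling $\underline{I}(\bX;\bY)=\inf_s\underline{I}(\bX;\bY|s)$); the middle inequality is the elementary weak max-min relation; and the final equality combines the definition \eqref{eq.Cw} with the per-state formula \eqref{eq.C.Verdu}, namely $C(s)=\sup_{p(\bx)}\underline{I}(\bX;\bY|s)$.

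For sufficiency, I would invoke assumption (ii) to upgrade the first $\le$ in the chain to equality via Theorem \ref{thm.C.uniform}, giving $C_c=\sup_{p(\bx)}\inf_{s\in\sS}\underline{I}(\bX;\bY|s)$. Assumption (i) then turns the middle $\le$ into equality by the very definition of the saddle-point property, and the final equality of the chain identifies this common value with $C_w$, delivering \eqref{eq.Cw=Cc}.

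For the necessity of (i), I would argue by collapsing the chain: if $C_w=C_c$, then both intermediate $\le$'s must in fact be equalities, and in particular
\bal
\sup_{p(\bx)}\inf_{s\in\sS}\underline{I}(\bX;\bY|s) = \inf_{s\in\sS}\sup_{p(\bx)}\underline{I}(\bX;\bY|s),
\eal
which is precisely the saddle-point property. Notice that this direction uses only the chain and does not invoke uniformity, consistent with the claim that only (i) is stated to be necessary.

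The only real obstacle is notational bookkeeping, namely keeping straight which earlier result upgrades which $\le$; once the chain is written down, both implications are essentially immediate and no new analytical machinery is required beyond Proposition \ref{prop.uuline(I)<=underline(I)}, Theorems \ref{thm.C.general} and \ref{thm.C.uniform}, and the trivial max-min inequality.
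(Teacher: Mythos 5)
Your proposal is correct and follows essentially the same route as the paper: the same three-term chain $C_c=\sup_{p(\bx)}\uuline{I}\le\sup_{p(\bx)}\inf_s\underline{I}(\bX;\bY|s)\le\inf_s\sup_{p(\bx)}\underline{I}(\bX;\bY|s)=C_w$, with (ii) upgrading the first inequality (via the uniform-channel equality) and (i) the second, and necessity of (i) read off by collapsing the chain when $C_w=C_c$. The paper's proof is just a terser statement of this same argument.
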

\begin{proof}
Consider the following chain inequality:
\bal\notag
C_w &= \inf_{s\in \sS} \sup_{p(\bx)} \underline{I}(\bX;\bY|s)\\ \notag
 &\ge \sup_{p(\bx)}\inf_{s\in \sS} \underline{I}(\bX;\bY|s)\\
 &\ge \sup_{p(\bx)} \uuline{I}(\bX;\bY) = C_c
\eal
where the 2nd inequality is due to \eqref{eq.prop.uuline(I)<=underline(I)}, and observe that the inequalities become the equalities under the conditions in (i) and (ii).
\end{proof}

The significance of this result is due to the fact that while achieving the worst-case capacity allows the codebooks to depend on the channel state, achieving the compound channel capacity does not allow this, so that the presence of the full Tx CSI does not bring in any advantage in this case. It can be further extended as follows.

\begin{defn}
A compound channel is (stochastically) degraded if there exists such channel state $s_w$ that is degraded with respect to any other channel state $s$ in the uncertainty set, i.e. if there exists such fictitious channel $q_s(y^n_{s_w}|y^n_s)$ that
\bal
p_{s_w}(y^n_{s_w}|x^n) = \sum_{y^n_s} p_s(y^n_s|x^n) q_s(y^n_{s_w}|y^n_s)
\eal
e.g. if $X^n \rightarrow Y^n_s \rightarrow Y^n_{s_w}$ is a Markov chain for any $s$ and any $n$.
\end{defn}

\begin{prop}
If the general compound channel is degraded, then its worst-case and compound capacities are same, as in \eqref{eq.Cw=Cc}.
\end{prop}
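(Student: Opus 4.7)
The plan is to prove $C_c \ge C_w$; the reverse inequality is noted in Section~\ref{sec.worst-case} and holds with no assumption, since any compound code is in particular a code for the worst-case state. I would split the proof into two ingredients: a reduction of $C_w$ to a single-state capacity at $s=s_w$, and a transfer of a code designed for $s_w$ to the whole uncertainty set via receiver-side simulation of the degradation.

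For the first ingredient, fix any input $\bX$. The Markov chain $X^n \to Y^n_s \to Y^n_{s_w}$ expresses $\bY_{s_w}$ as a stochastic degradation of $\bY_s$ for every $s\in\sS$, so by the data-processing inequality for inf-information rates (Theorem~3.6.1 in \cite{Han}) one has $\underline{I}(\bX;\bY|s_w) \le \underline{I}(\bX;\bY|s)$. Consequently $\inf_{s} \underline{I}(\bX;\bY|s) = \underline{I}(\bX;\bY|s_w)$, and combining this with the trivial bound $C_w \le \sup_{p(\bx)} \underline{I}(\bX;\bY|s_w)$ (obtained by setting $s=s_w$ in the infimum defining $C_w$) yields
\bal
C_w \;=\; \sup_{p(\bx)} \underline{I}(\bX;\bY|s_w).
\eal

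For the second ingredient, fix $p(\bx)$ and any $R < \underline{I}(\bX;\bY|s_w)$. Applied to the single state $s_w$, Lemma~\ref{lemma.Feinstein} produces an $(n,R,\varepsilon_{n,s_w})$-code $(f_n,g_n)$ with $\varepsilon_{n,s_w}\to 0$. I would use the same encoder $f_n$ on the compound channel (it is state-independent, as required) and, exploiting the full Rx CSI, let the decoder first pass its observation $Y^n_s$ through the known degradation kernel $q_s(\cdot|\cdot)$ to generate an auxiliary sequence $\tilde Y^n$, then output $g_n(\tilde Y^n)$. The Markov property makes the joint law of $(X^n,\tilde Y^n)$ under state $s$ identical to the joint law of $(X^n,Y^n_{s_w})$ under state $s_w$, so the error probability under state $s$ is \emph{exactly} $\varepsilon_{n,s_w}$. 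Since this value does not depend on $s$, $\varepsilon_n = \sup_s \varepsilon_{ns} = \varepsilon_{n,s_w} \to 0$, which is precisely the reliability criterion \eqref{eq.lim-sup}. Maximising over $R$ and $p(\bx)$ gives $C_c \ge \sup_{p(\bx)} \underline{I}(\bX;\bY|s_w) = C_w$, and combining with $C_c\le C_w$ closes the argument.

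There is no genuinely hard obstacle: the crux is recognising that full Rx CSI promotes the abstract degradation $q_s$ to a concrete operation the decoder can perform, thereby transporting an $s_w$-designed code across $\sS$ without any rate loss or blocklength dependence on $s$. Two points I would write carefully: (i) data processing is invoked for the \emph{inf-information rate}, not merely for mutual information, so Han's Theorem~3.6.1 (or an equivalent direct argument on information-density spectra) must be cited; and (ii) the uniformity in $s$ of $\varepsilon_{ns}\to 0$ is built in automatically because the error probability is literally equal, not just bounded, across the uncertainty set, which is exactly what distinguishes the criterion \eqref{eq.lim-sup} from \eqref{eq.sup-lim} discussed in Section~\ref{sec.General Formula}.
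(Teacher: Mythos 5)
Your proposal is correct and follows essentially the same route as the paper's own (much terser) proof: $C_w\ge C_c$ is trivial, and the reverse direction comes from transporting a code built for the worst state $s_w$ to every $s\in\sS$ by letting the receiver, which knows $s$, emulate the degradation kernel $q_s(y^n_{s_w}|y^n_s)$ before decoding, so the error probability is identical (hence uniform) across the uncertainty set. The only remark is that your data-processing step establishing $C_w=\sup_{p(\bx)}\underline{I}(\bX;\bY|s_w)$ is superfluous, since the trivial bound $C_w\le\sup_{p(\bx)}\underline{I}(\bX;\bY|s_w)$ (from $s_w\in\sS$) already suffices once the code-transfer argument gives $C_c\ge\sup_{p(\bx)}\underline{I}(\bX;\bY|s_w)$.
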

\begin{proof}
In general, $C_w \ge C_c$. For a degraded compound channel, any code that is good for the worst-case channel, is also good for any other channel in the uncertainty set (since the receiver can emulate the artificial channel $q_s(y^n_{s_w}|y^n_s)$ while making the decisions) and hence $C_w \le C_c$, from which the equality follows.
\end{proof}

\section{Properties of Compound Inf-Information Rate}
\label{sec.Properties}

Below we study the properties of the compound inf-information rate $\uuline{I}(\bX,\bY)$, which are instrumental in evaluating this quantity and the compound channel capacity for specific channels.

First, let us establish inequalities for compound random sequences (i.e. sequences of random variables indexed by a common state) which are instrumental for further development. We will need the following definitions.

\begin{defn}
Let $\bX=\{X_{sn}\}_{n=1}^{\infty}$ be a compound random sequence where $s$ is a state. The compound infimum $\uuline{\{\cdot\}}$ and supremum $\ooline{\{\cdot\}}$ operators are defined as follows:
\ba
\label{eq.comp.inf-sup}
\uuline{\bX}=\uuline{\{X_{sn}\}} &\triangleq \sup \bLF x: \lim_{n\rightarrow\infty} \sup_{s} \Pr\left\{ X_{sn} \le x \right\} =0 \bRF\\
\ooline{\bX}=\ooline{\{X_{sn}\}} &\triangleq \inf \bLF x: \lim_{n\rightarrow\infty} \sup_{s} \Pr\left\{ X_{sn} \ge x \right\} =0 \bRF
\ea
\end{defn}
These operators generalize the respective sup $\overline{\bX}$ and inf $\underline{\bX}$ operators for regular (single-state) sequences. They have the following important properties, which facilitate their evaluation and analysis.

\begin{prop}
\label{prop.prop.XY}
Let $\{X_{ns}\}_{n=1}^{\infty}$ and $\{Y_{ns}\}_{n=1}^{\infty}$ be two (arbitrary) compound random sequences and $s$ is a (common) state. Then, the following holds:
\ba
\label{eq.prop.1}
&\uuline{\bX}\le \ooline{\bX},\\
\label{eq.prop.2}
&\ooline{\bX} = -\uuline{(-\bX)},\\
\label{eq.prop.3} \notag
&\uuline{\bX}+\uuline{\bY} \le \uuline{(\bX+\bY)}\\ \notag
&\qquad\quad\ \le \min\{\uuline{\bX} + \ooline{\bY}, \ooline{\bX} + \uuline{\bY}\}\\ \notag
&\qquad\quad\ \le \uuline{\bX} + \ooline{\bY}\\
&\qquad\quad\ \le \ooline{\bX} + \ooline{\bY},\\
\label{eq.prop.4}\notag
&\ooline{\bX}+\ooline{\bY} \ge \ooline{(\bX+\bY)}\\  \notag
 &\qquad\quad\ \ge \max\{\uuline{\bX} + \ooline{\bY}, \ooline{\bX} + \uuline{\bY}\}\\ \notag
  & \qquad\quad\ \ge \ooline{\bX}+\uuline{\bY}\\
& \qquad\quad\ \ge \uuline{\bX}+\uuline{\bY}
\ea
\end{prop}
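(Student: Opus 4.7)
The plan is to prove the four displayed chains in order: handle the single-sequence facts (1) and (2) directly from the definitions of $\uuline{(\cdot)}$ and $\ooline{(\cdot)}$, then do the real work on the sum in (3), and finally obtain (4) by a sign-flip reduction to (3).

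For (1), I argue by contradiction: if $\uuline{\bX} > \ooline{\bX}$, pick any $x$ strictly between them. From $x < \uuline{\bX}$, the defining condition together with monotonicity of $\Pr\{X_{ns}\le \cdot\}$ gives $\lim_n \sup_s \Pr\{X_{ns} \le x\} = 0$, while $x > \ooline{\bX}$ and the analogous monotonicity give $\lim_n \sup_s \Pr\{X_{ns} \ge x\} = 0$. Since the two events cover the sample space for every $s$, both suprema cannot simultaneously vanish, a contradiction. Property (2) is a change of variables: $\Pr\{-X_{ns} \le x\} = \Pr\{X_{ns} \ge -x\}$ turns the defining set of $\uuline{(-\bX)}$ into $\{-y : \lim_n \sup_s \Pr\{X_{ns} \ge y\} = 0\}$, and applying $-\sup$ to this set matches $\inf$ of the set that defines $\ooline{\bX}$.

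The heart of the proposition is (3). For the lower bound $\uuline{\bX} + \uuline{\bY} \le \uuline{(\bX+\bY)}$, I fix $\alpha < \uuline{\bX}$ and $\beta < \uuline{\bY}$ and use the set inclusion $\{X_{ns} + Y_{ns} \le \alpha + \beta\} \subseteq \{X_{ns} \le \alpha\} \cup \{Y_{ns} \le \beta\}$, so that $\sup_s \Pr\{X_{ns} + Y_{ns} \le \alpha + \beta\} \le \sup_s \Pr\{X_{ns} \le \alpha\} + \sup_s \Pr\{Y_{ns} \le \beta\} \to 0$, which yields $\uuline{(\bX+\bY)} \ge \alpha + \beta$; sending $\alpha \uparrow \uuline{\bX}$ and $\beta \uparrow \uuline{\bY}$ closes it. The harder upper bound $\uuline{(\bX+\bY)} \le \uuline{\bX} + \ooline{\bY}$ rests on the converse inclusion $\{X_{ns} \le \alpha\} \subseteq \{X_{ns} + Y_{ns} \le \alpha + \beta\} \cup \{Y_{ns} > \beta\}$, which gives the pointwise bound $\Pr\{X_{ns} + Y_{ns} \le \alpha + \beta\} \ge \Pr\{X_{ns} \le \alpha\} - \Pr\{Y_{ns} \ge \beta\}$. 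The main obstacle here is that $\sup_s$ does not distribute through subtraction, so I would work along a subsequence: for $\alpha > \uuline{\bX}$ the failure of $\sup_s \Pr\{X_{ns} \le \alpha\} \to 0$ yields $\delta > 0$ and pairs $(n_k, s_k)$ with $\Pr\{X_{n_k s_k} \le \alpha\} \ge \delta$, while $\beta > \ooline{\bY}$ forces uniform vanishing $\sup_s \Pr\{Y_{ns} \ge \beta\} \to 0$. Evaluating the pointwise bound at $(n_k, s_k)$ gives $\Pr\{X_{n_k s_k} + Y_{n_k s_k} \le \alpha + \beta\} \ge \delta/2$ for all large $k$, so $\limsup_n \sup_s \Pr\{X_{ns} + Y_{ns} \le \alpha + \beta\} > 0$, and by monotonicity of this quantity in the threshold, $\uuline{(\bX+\bY)} \le \alpha + \beta$; letting $\alpha \downarrow \uuline{\bX}$ and $\beta \downarrow \ooline{\bY}$ closes the bound. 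The symmetric inequality $\uuline{(\bX+\bY)} \le \ooline{\bX} + \uuline{\bY}$ is obtained by swapping the roles of $\bX$ and $\bY$, yielding the $\min$; the remaining steps $\min\{\cdot,\cdot\} \le \uuline{\bX} + \ooline{\bY} \le \ooline{\bX} + \ooline{\bY}$ are immediate from (1).

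Finally, for (4) I would not redo any limit computations. Instead, apply the completed chain (3) to the pair $(-\bX, -\bY)$ and invoke (2) repeatedly in the forms $\uuline{(-\bX)} = -\ooline{\bX}$, $\ooline{(-\bY)} = -\uuline{\bY}$, and $\uuline{((-\bX)+(-\bY))} = -\ooline{(\bX+\bY)}$. Multiplying every term in (3) by $-1$ reverses the direction of each inequality and converts each $\uuline{(\cdot)}$ of a negative into $-\ooline{(\cdot)}$ of the original, producing precisely the stated chain for $\ooline{(\bX+\bY)}$.
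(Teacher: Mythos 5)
Your proposal is correct, and its overall skeleton matches the paper's: parts \eqref{eq.prop.1} and \eqref{eq.prop.2} are handled directly from the definitions (the paper's proof of \eqref{eq.prop.1} is the same contradiction argument, phrased through complements), and \eqref{eq.prop.4} is obtained exactly as the paper does, by applying \eqref{eq.prop.3} to $(-\bX,-\bY)$ and invoking \eqref{eq.prop.2}. Where you genuinely diverge is inside \eqref{eq.prop.3}. The paper proves only the upper bound $\uuline{(\bX+\bY)}\le \uuline{\bX}+\ooline{\bY}$ from scratch, via a conditioning decomposition on the event $B_{ns}=\{Y_{ns}\le\ooline{\bY}+\delta\}$ (sandwiching $\lim_n\sup_s$ of the terms $P_{1,ns},P_{2,ns},P'_{1,ns},P'_{2,ns}$ between zeros), and then deduces the lower bound $\uuline{\bX}+\uuline{\bY}\le\uuline{(\bX+\bY)}$ from that same upper bound by the relabeling trick $\uuline{\bX}=\uuline{((\bX+\bY)+(-\bY))}\le\uuline{(\bX+\bY)}-\uuline{\bY}$. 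You instead prove the lower bound directly with the union-bound inclusion $\{X_{ns}+Y_{ns}\le\alpha+\beta\}\subseteq\{X_{ns}\le\alpha\}\cup\{Y_{ns}\le\beta\}$, and you prove the upper bound in contrapositive form, extracting a subsequence of near-maximizing pairs $(n_k,s_k)$ on which $\Pr\{X_{n_k s_k}\le\alpha\}$ stays bounded away from zero while $\sup_s\Pr\{Y_{ns}\ge\beta\}$ vanishes uniformly. Both routes rest on the same elementary inclusion $\{X\le\alpha\}\cap\{Y\le\beta\}\subseteq\{X+Y\le\alpha+\beta\}$; yours is somewhat more self-contained and avoids the conditional-probability bookkeeping, correctly sidestepping the non-distributivity of $\sup_s$ over differences via the subsequence device (strictly, the subsequence gives $\Pr\{X_{n_k s_k}\le\alpha\}\ge\delta-\epsilon$ for near-maximizing $s_k$, a harmless adjustment of constants), while the paper's version stays in the direct ``implication'' form that mirrors the definition of $\uuline{(\cdot)}$ and needs no subsequence extraction. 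Either way all five inequalities in \eqref{eq.prop.3} and, by the sign flip, all of \eqref{eq.prop.4} are established.
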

\begin{proof}
See Appendix.
\end{proof}

\begin{remark}
Note that the inequalities in Proposition \ref{prop.prop.XY} do not follow directly from the respective inequalities for $\underline{(\bX+\bY)}$ in \cite{Han} for single-state sequences since (i) $\sup_s$ may result in different maximizing states for $X_{ns}, Y_{ns}$ and $X_{ns}+Y_{ns}$ sequences, and (ii) $\lim$ and $\sup$ may not be swapped in general (unless the uniform convergence holds, in which case the compound inequalities can be obtained from non-compound ones in \cite{Han} by using an equality similar to that in \eqref{eq.prop.uuline(I)=underline(I)}).
\end{remark}

The following result will be needed below.

\begin{prop}
\label{prop.zns}
Consider a compound random sequence $\{Z_{ns}\}_{n=1}^{\infty}$ where $\sigma_{ns}^2$ is the variance of $Z_{ns}$ such that
\ba
\label{eq.sigmans=0}
\lim_{n\rightarrow\infty}\sup_ s \sigma_{ns}^2 =0
\ea
Then,
\ba
\uuline{\bZ} \triangleq \uuline{\{Z_{ns}\}} = \liminf_{n\rightarrow\infty}\inf_ s E\{Z_{ns}\} \triangleq \tilde{Z}
\ea
\end{prop}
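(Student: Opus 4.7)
My plan is to use Chebyshev's inequality, which is the natural tool here since the hypothesis \eqref{eq.sigmans=0} is precisely a uniform-in-$s$ variance control. Write $\mu_{ns}=E\{Z_{ns}\}$ and $\sigma_{ns}^2=\text{Var}(Z_{ns})$, so that for every $\eta>0$, $\Pr\{|Z_{ns}-\mu_{ns}|\ge \eta\}\le \sigma_{ns}^2/\eta^2$, and by \eqref{eq.sigmans=0} the right-hand side tends to $0$ uniformly in $s$. This gives a tight two-sided link between the tail events $\{Z_{ns}\le x\}$ and the location of $\mu_{ns}$, and the rest is essentially bookkeeping around the $\liminf$ and the compound $\sup_s$.

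For the direction $\uuline{\bZ}\ge \tilde Z$, I would pick any $x<\tilde Z$ and choose $\delta>0$ with $x+2\delta<\tilde Z$. By definition of $\tilde Z=\liminf_n \inf_s \mu_{ns}$, for all sufficiently large $n$ one has $\inf_s \mu_{ns}\ge x+2\delta$, hence $\mu_{ns}-x\ge 2\delta$ uniformly in $s$. Therefore
\begin{align}
\sup_s \Pr\{Z_{ns}\le x\}\;\le\; \sup_s \Pr\{|Z_{ns}-\mu_{ns}|\ge 2\delta\}\;\le\; \frac{\sup_s \sigma_{ns}^2}{4\delta^2}\longrightarrow 0,
\end{align}
so $x$ lies in the set in \eqref{eq.comp.inf-sup}. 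Taking supremum over such $x$ yields $\uuline{\bZ}\ge \tilde Z$.

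For the direction $\uuline{\bZ}\le \tilde Z$, I would pick any $x>\tilde Z$ and choose $\delta>0$ with $x-2\delta>\tilde Z$. By the definition of $\liminf$, there is a subsequence $n_k\to\infty$ with $\inf_s \mu_{n_k s}<x-2\delta$, and hence for each $k$ a state $s_k$ with $\mu_{n_k s_k}<x-\delta$. Then
\begin{align}
\sup_s \Pr\{Z_{n_k s}\le x\}\;\ge\; \Pr\{Z_{n_k s_k}\le x\}\;\ge\; 1-\Pr\{|Z_{n_k s_k}-\mu_{n_k s_k}|\ge \delta\}\;\ge\; 1-\frac{\sigma_{n_k s_k}^2}{\delta^2}.
\end{align}
Since $\sigma_{n_k s_k}^2\le \sup_s \sigma_{n_k s}^2\to 0$, the right-hand side tends to $1$ along the subsequence, so $\lim_{n\to\infty}\sup_s \Pr\{Z_{ns}\le x\}$ cannot equal $0$. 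Hence $x$ is not in the set in \eqref{eq.comp.inf-sup}, giving $\uuline{\bZ}\le x$; letting $x\downarrow \tilde Z$ finishes the proof.

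The only step that needs care is the second direction, where one must extract the state $s_k$ witnessing the $\liminf$ at each level of the subsequence and remember that the definition \eqref{eq.comp.inf-sup} uses $\lim$ (not $\liminf$) over $n$, so exhibiting failure along a subsequence already excludes $x$ from the admissible set. The corner cases $\tilde Z=\pm\infty$ are handled by taking $x$ arbitrarily large (resp.\ small) in the same arguments.
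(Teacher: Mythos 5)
Your proof is correct and follows essentially the same route as the paper: both directions rest on Chebyshev's inequality applied uniformly in $s$ (the paper packages this as its compound convergence-in-probability lemma), with the lower bound obtained from the eventual bound $\inf_s E\{Z_{ns}\}\ge \tilde Z-\delta$ and the upper bound from extracting a subsequence $n_k$ and witness states $s_k$ nearly attaining the infimum of the means. The only cosmetic difference is that you argue both directions directly in terms of membership of $x$ in the defining set, whereas the paper phrases the upper-bound direction as a proof by contradiction.
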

\begin{proof}
See Appendix.
\end{proof}

Note that Proposition \ref{prop.zns} equates two very different quantities: one includes no averaging ($\uuline{\bZ}$) and the other is based on averaging ($\tilde{Z}$).

To proceed further, we extend the definitions in \cite{Verdu}\cite{Han} to the compound setting here.

\begin{defn}
Let $X^n$ and $Y^n$ be two compound random sequences with distributions $p_{s x^n}$ and $p_{s y^n}$ where $s$ is a state. The compound inf-divergence rate is defined as
\ba
\label{eq.D(x||y)}
\uuline{D}(\bX;\bY) \triangleq \uuline{\bLF \frac{1}{n} \ln \frac{p_{s x^n}(X^n)}{p_{s y^n}(X^n)}\bRF }
\ea
and likewise for the compound inf-entropy rate $\uuline{H}(\bX)$ and sup-entropy rate $\ooline{H}(\bX)$:
\ba
\uuline{H}(\bX) \triangleq \uuline{\{h_{sn}(X^n)\}}, \ \ \ooline{H}(\bX) \triangleq \ooline{\{h_{sn}(X^n)\}},
\ea
where $h_{sn}(x^n) = -n^{-1} \ln p_{s x^n}(x^n)$. The compound conditional inf-entropy rate $\uuline{H}(\bY|\bX)$ and sup-entropy rate $\ooline{H}(\bY|\bX)$ are defined analogously (with respect to joint distribution $p_{s x^n y^n}$), and $\ooline{I}(\bX;\bY)$ is similarly defined.
\end{defn}

The proposition below gives the properties useful in evaluation of compound inf-information rate $\uuline{I}(\bX;\bY)$ (which extend the respective properties in \cite{Verdu} to the compound setting).

\begin{prop}
\label{prop.properties.I}
Let $\bX$, $\bY$ and $\bZ$ be (arbitrary) compound random sequences. The following holds:
\ba
\label{eq.properties.1}
&\uuline{D}(\bX||\bY) \ge 0 \\
\label{eq.properties.2}
&\ooline{I}(\bX;\bY) \ge \uuline{I}(\bX;\bY) \ge 0 \\
\label{eq.properties.3}
&\uuline{I}(\bX;\bY) = \uuline{I}(\bY;\bX) \\
\label{eq.properties.4}
&\uuline{I}(\bX;\bY) \le \ooline{H}(\bY) - \ooline{H}(\bY|\bX)\\
\label{eq.properties.5}
&\uuline{I}(\bX;\bY) \le \uuline{H}(\bY) - \uuline{H}(\bY|\bX)\\
\label{eq.properties.6}
&\uuline{I}(\bX;\bY) \ge \uuline{H}(\bY) - \ooline{H}(\bY|\bX)\\
\label{eq.properties.4a}
&\ooline{H}(\bY) \ge \ooline{H}(\bY|\bX)\\
\label{eq.properties.5a}
&\ooline{H}(\bY) \ge \uuline{H}(\bY) \ge \uuline{H}(\bY|\bX)\\
\label{eq.properties.7}
&\uuline{I}(\bX,\bY;\bZ) \ge \uuline{I}(\bX;\bZ) + \uuline{I}(\bY;\bZ|\bX) \ge \uuline{I}(\bX;\bZ)
\ea
with equality if $\ooline{I}(\bY;\bZ|\bX)=0$.

If the alphabets are discrete, then
\ba
\label{eq.properties.8}
&0\le \uuline{H}(\bX|\bY) \le \uuline{H}(\bX) \le \ooline{H}(\bX) \le \ln N_x\\
\label{eq.properties.9}\notag
&0\le \uuline{I}(\bX;\bY) \le \min\{\uuline{H}(\bX), \uuline{H}(\bY)\}\\
 &\qquad\qquad\qquad \le \min\{\ln N_x, \ln N_y\} \\
\label{eq.properties.9a}\notag
&\uuline{I}(\bX;\bY) = \min\{\uuline{H}(\bX), \uuline{H}(\bY)\}\\
 &\qquad\qquad\qquad\ \mbox{if}\ \min\{\ooline{H}(\bY|\bX), \ooline{H}(\bX|\bY)\}=0 \\
\label{eq.properties.10}\notag
&0\le \ooline{I}(\bX;\bY) \le \min\{\ooline{H}(\bX), \ooline{H}(\bY)\}\\
&\qquad\qquad\qquad\le \min\{\ln N_x, \ln N_y\}
\ea
where the last inequalities in \eqref{eq.properties.8}-\eqref{eq.properties.10} hold if the alphabets are of finite cardinality $N_x, N_y$.
\end{prop}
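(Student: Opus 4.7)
The plan is to reduce every claim in Proposition~\ref{prop.properties.I} to an application of Proposition~\ref{prop.prop.XY} (the algebra of the compound $\uuline{\{\cdot\}}$ and $\ooline{\{\cdot\}}$ operators), together with the two pointwise identities
\bal
\frac{1}{n} i(x^n;y^n|s) &= h_{sn}(y^n) - h_{sn}(y^n|x^n),\notag\\
i(x^n,y^n;z^n|s) &= i(x^n;z^n|s) + i(y^n;z^n|x^n,s),\notag
\eal
and the symmetry $i(x^n;y^n|s) = i(y^n;x^n|s)$ built into the definition of information density. The only step that is not pure operator bookkeeping is the nonnegativity of the compound divergence in \eqref{eq.properties.1}, which is where all other nonnegativity claims ultimately come from.

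I would start with \eqref{eq.properties.1}. A change-of-measure argument exactly as in the single-state proof in \cite{Han} yields the pointwise bound
\bal
\Pr\bLF \frac{1}{n}\ln\frac{p_{sx^n}(X^n)}{p_{sy^n}(X^n)} \le -\gamma \bRF \le e^{-n\gamma}\notag
\eal
for every $s$ and every $\gamma>0$. Because the right-hand side does not depend on $s$, $\sup_s$ of the probability is also at most $e^{-n\gamma}$, which tends to zero; by the definition \eqref{eq.D(x||y)} this gives $\uuline{D}(\bX||\bY)\ge -\gamma$ for every $\gamma>0$, hence \eqref{eq.properties.1}. Applied to $p_{sx^ny^n}$ versus $p_{sx^n}p_{sy^n}$ it gives $\uuline{I}(\bX;\bY)\ge 0$, which is the second half of \eqref{eq.properties.2}; the first half is \eqref{eq.prop.1}, and \eqref{eq.properties.3} is immediate from the symmetry above.

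Next I would obtain \eqref{eq.properties.4}--\eqref{eq.properties.6} and \eqref{eq.properties.4a}--\eqref{eq.properties.5a} by applying the chains \eqref{eq.prop.3}--\eqref{eq.prop.4} to the compound sequences $U_{ns}\triangleq h_{sn}(Y^n)$ and $V_{ns}\triangleq -h_{sn}(Y^n|X^n)$, whose pointwise sum is $n^{-1}i(X^n;Y^n|s)$. By \eqref{eq.prop.2}, $\uuline{\{V_{ns}\}} = -\ooline{H}(\bY|\bX)$ and $\ooline{\{V_{ns}\}} = -\uuline{H}(\bY|\bX)$, so the two upper bounds in \eqref{eq.prop.3} yield \eqref{eq.properties.4} and \eqref{eq.properties.5}, the lower bound yields \eqref{eq.properties.6}, and combining these with $\uuline{I}\ge 0$ and \eqref{eq.prop.1} produces \eqref{eq.properties.4a}--\eqref{eq.properties.5a}. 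The chain-rule inequality \eqref{eq.properties.7} follows by decomposing $n^{-1}i(X^n,Y^n;Z^n|s)$ via the second identity above and applying \eqref{eq.prop.3}: the lower bound $\uuline{(U+V)}\ge \uuline{U}+\uuline{V}$ together with conditional nonnegativity $\uuline{I}(\bY;\bZ|\bX)\ge 0$ (the Markov argument of the previous paragraph applied conditionally on $(s,X^n)$) gives both inequalities in \eqref{eq.properties.7}, and the equality case under $\ooline{I}(\bY;\bZ|\bX)=0$ follows from the upper bound $\uuline{(U+V)}\le \uuline{U}+\ooline{V}$.

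Finally, the finite-alphabet bounds \eqref{eq.properties.8}--\eqref{eq.properties.10} come from combining the preceding inequalities with the deterministic bounds $0\le h_{sn}(x^n)\le \ln N_x$, which give $\uuline{H}(\bX)\ge 0$ and $\ooline{H}(\bX)\le \ln N_x$ directly from \eqref{eq.comp.inf-sup}; for \eqref{eq.properties.9a}, when $\ooline{H}(\bY|\bX)=0$, \eqref{eq.properties.6} forces $\uuline{I}(\bX;\bY)\ge \uuline{H}(\bY)$ while \eqref{eq.properties.9} gives $\uuline{I}(\bX;\bY)\le \uuline{H}(\bY)$, and the extra inequality $\uuline{H}(\bY)\le\uuline{H}(\bX)$ needed to match the minimum is obtained by applying \eqref{eq.prop.3} to $h_{sn}(X^n)+h_{sn}(Y^n|X^n)=h_{sn}(X^n,Y^n)\ge h_{sn}(Y^n)$ together with the obvious monotonicity of $\uuline{\{\cdot\}}$; the case $\ooline{H}(\bX|\bY)=0$ is symmetric. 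The step that deserves real care is the uniform-in-$s$ Markov bound above, where one has to verify that the $s$-dependent change of measure still produces a right-hand side that does not depend on $s$, so that $\sup_s$ can be absorbed before the limit in $n$; once that is in place, everything else is routine algebra of $\uuline{\{\cdot\}}$ and $\ooline{\{\cdot\}}$ driven by Proposition~\ref{prop.prop.XY}.
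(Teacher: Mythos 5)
Your argument follows the paper's own proof almost step for step: the uniform-in-$s$ change-of-measure bound for \eqref{eq.properties.1}, reading $\uuline{I}(\bX;\bY)$ as a compound divergence to get \eqref{eq.properties.2}, the pointwise decomposition $n^{-1}i(X^n;Y^n|s)=h_{sn}(Y^n)-h_{sn}(Y^n|X^n)$ fed into Proposition \ref{prop.prop.XY} (with \eqref{eq.prop.2} to flip the conditional term) for \eqref{eq.properties.4}--\eqref{eq.properties.5a}, and the chain-rule identity plus \eqref{eq.prop.3} for \eqref{eq.properties.7}. Your treatment of \eqref{eq.properties.9a} is in fact more complete than the paper's (which merely cites \eqref{eq.properties.6}): supplying the auxiliary inequality $\uuline{H}(\bY)\le\uuline{H}(\bX,\bY)\le\uuline{H}(\bX)+\ooline{H}(\bY|\bX)=\uuline{H}(\bX)$ under $\ooline{H}(\bY|\bX)=0$ is exactly what is needed to identify the minimum, and that detail is welcome.

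The one genuine flaw is the claim that $0\le h_{sn}(x^n)\le\ln N_x$ is a \emph{deterministic} bound from which $\ooline{H}(\bX)\le\ln N_x$ follows ``directly from \eqref{eq.comp.inf-sup}''. The lower bound is indeed pointwise (since $p_{sx^n}(x^n)\le 1$), but the upper bound is not: $-n^{-1}\ln p_{sx^n}(x^n)$ can exceed $\ln N_x$ by an arbitrary amount on low-probability sequences (e.g.\ a binary alphabet with some sequences of probability $e^{-2n}$ gives $h_{sn}(x^n)=2>\ln 2$). As written, this step fails, and with it the last inequalities in \eqref{eq.properties.8}--\eqref{eq.properties.10}. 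The correct argument, which is the one the paper uses, is probabilistic: with $Z_{ns}=-n^{-1}\ln p_s(X^n)$, the set $\{x^n: p_s(x^n)\le e^{-n(\ln N_x+\delta)}\}$ has at most $N_x^n$ elements, so $\Pr\{Z_{ns}\ge \ln N_x+\delta\}\le N_x^n e^{-n(\ln N_x+\delta)}=e^{-n\delta}$, a bound independent of $s$; taking $\lim_{n\rightarrow\infty}\sup_s$ gives $\ooline{H}(\bX)\le \ln N_x+\delta$ for every $\delta>0$. This is the same uniform Markov/counting device you already used for \eqref{eq.properties.1}, so the repair is immediate, but the ``deterministic'' justification you gave is not valid and the bound on $\ooline{H}(\bX)$ must be proved through the tail probability, not through the support size alone.
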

\begin{proof}
See Appendix.
\end{proof}

Note that many of these properties mimic the respective properties of mutual information and entropy, e.g. "conditioning cannot increase the entropy" and "mutual information is non-negative, symmetric and bounded by the entropy of the alphabet". Similar properties can also be established for compound sup-information rate $\ooline{I}(\bX;\bY)$. The next Proposition establishes the data processing inequality in terms of compound inf-information rates.

\begin{prop}[Data processing inequality]
Let $\bX \rightarrow \bY \rightarrow \bZ$ be a compound Markov chain. Then,
\ba
\uuline{I}(\bX;\bY) \ge \uuline{I}(\bX;\bZ)
\ea
with equality if $\ooline{I}(\bX;\bY|\bZ)=0$.
\end{prop}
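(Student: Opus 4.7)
The plan is to derive the inequality as a direct consequence of the chain-rule bound \eqref{eq.properties.7} in Proposition \ref{prop.properties.I}, combined with the Markov hypothesis to force one of the conditional rates to vanish and the symmetry \eqref{eq.properties.3} to pass between the two orderings of a joint sequence. No new machinery is required beyond what is already in place.

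I will invoke \eqref{eq.properties.7} twice, once with the triple $(\bY,\bZ;\bX)$ and once with $(\bZ,\bY;\bX)$ in place of $(\bX,\bY;\bZ)$, obtaining
\begin{align*}
\uuline{I}(\bY,\bZ;\bX) &\ge \uuline{I}(\bY;\bX)+\uuline{I}(\bZ;\bX|\bY),\\
\uuline{I}(\bZ,\bY;\bX) &\ge \uuline{I}(\bZ;\bX)+\uuline{I}(\bY;\bX|\bZ),
\end{align*}
with equality whenever $\ooline{I}(\bZ;\bX|\bY)=0$ or $\ooline{I}(\bY;\bX|\bZ)=0$, respectively. The two left-hand sides coincide because the information density of a pair against $\bX$ depends only on the joint distribution, which is insensitive to the ordering of coordinates. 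Next I invoke the Markov hypothesis $\bX\to\bY\to\bZ$: conditional on $(Y^n,s)$, the sequences $X^n$ and $Z^n$ are independent, so $p_s(x^n,z^n|y^n)=p_s(x^n|y^n)p_s(z^n|y^n)$ and the conditional information density $i(X^n;Z^n|Y^n,s)$ vanishes identically for every $n$ and every state $s$. By the definition \eqref{eq.comp.inf-sup} this forces $\ooline{I}(\bZ;\bX|\bY)=0$, so the first decomposition saturates and, by \eqref{eq.properties.3}, reduces to $\uuline{I}(\bY,\bZ;\bX)=\uuline{I}(\bY;\bX)=\uuline{I}(\bX;\bY)$. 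Substituting this into the second decomposition and using the non-negativity of the compound conditional inf-information rate — which follows by the same Markov-inequality argument underlying \eqref{eq.properties.2}, now applied pointwise to the conditional distribution $p_s(x^n,y^n|z^n)$ — I obtain
\begin{align*}
\uuline{I}(\bX;\bY)\ge \uuline{I}(\bX;\bZ)+\uuline{I}(\bY;\bX|\bZ)\ge \uuline{I}(\bX;\bZ),
\end{align*}
which is the desired inequality.

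The equality clause is then automatic: if $\ooline{I}(\bX;\bY|\bZ)=0$, the analogue of \eqref{eq.properties.2} forces $\uuline{I}(\bY;\bX|\bZ)=0$ and the second decomposition saturates, making both inequalities equalities. The only step warranting genuine care is the passage from the pointwise identity $i(X^n;Z^n|Y^n,s)\equiv 0$ to the compound statement $\ooline{I}(\bZ;\bX|\bY)=0$, but this is a one-line verification against \eqref{eq.comp.inf-sup} since the compound supremum of an identically-zero sequence is $0$ uniformly in $s$. Everything else is bookkeeping on top of the chain rule already established in Proposition \ref{prop.properties.I}.
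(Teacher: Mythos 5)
Your proof is correct, but it is organized differently from the paper's. The paper works at the level of information densities: it uses the Markov property to get the pointwise identity $i(x^n;y^n,z^n|s)=i(x^n;y^n|s)$, combines it with the chain decomposition $i(x^n;y^n,z^n|s)=i(x^n;z^n|s)+i(x^n;y^n|z^n s)$, and then applies the compound superadditivity inequality \eqref{eq.prop.3} once (and, for the equality clause, the companion upper bound with $\ooline{I}(\bX;\bY|\bZ)$). You instead stay entirely at the level of compound rates and black-box \eqref{eq.properties.7}, applying it twice with the two orderings of the pair $(\bY,\bZ)$ against $\bX$, using the Markov-induced vanishing $i(X^n;Z^n|Y^n,s)\equiv 0$ only to trigger the equality clause of the first application, together with the (correct) observation that the joint rate is invariant under reordering of the pair and the symmetry \eqref{eq.properties.3}. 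The two arguments are ultimately isomorphic—both rest on the same density chain rule and the superadditivity of $\uuline{\{\cdot\}}$—but yours buys modularity (nothing beyond already-proved compound-level properties, so no fresh density manipulation), at the cost of needing the reordering-invariance remark and the conditional analogues of \eqref{eq.properties.2}; the paper's version is more self-contained and makes the role of the Markov property visible in a single pointwise identity. One small presentational point: in the equality part, what actually saturates the second decomposition is the equality clause of \eqref{eq.properties.7} under $\ooline{I}(\bY;\bX|\bZ)=\ooline{I}(\bX;\bY|\bZ)=0$ (exactly the hypothesis, via symmetry of the conditional density), rather than the vanishing of $\uuline{I}(\bY;\bX|\bZ)$ per se, which by itself only gives the $\ge$ direction; you do have the needed hypothesis, so this is a matter of emphasis, not a gap.
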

\begin{proof}
Observe that
\ba\notag
i(x^n;y^n,z^n|s) &= \ln \frac{p_s(x^n|y^n z^n)}{p_s(x^n)}\\
 &= \ln \frac{p_s(x^n|y^n)}{p_s(x^n)}\\ \notag
& = i(x^n;y^n|s)
\ea
where 2nd equality is due to conditional independence of $X^n$ and $Z^n$ given $Y^n$, and that
\ba\notag
i(x^n;y^n,z^n|s) &= \ln \frac{p_s(x^n|z^n)}{p_s(x^n)} + \ln \frac{p_s(x^n|y^n z^n)}{p_s(x^n|z^n)}\\
&= i(x^n;z^n|s)+ i(x^n;y^n|z^n s)
\ea
so that
\ba
i(x^n;y^n|s) = i(x^n;z^n|s)+ i(x^n;y^n|z^n s)
\ea
Taking $\uuline{(\cdot)}$ of both sides and using the inequality in \eqref{eq.prop.3}, one obtains
\ba
\label{eq.prop.data.process.5}
\uuline{I}(\bX,\bY) \ge \uuline{I}(\bX;\bZ) + \uuline{I}(\bX;\bY|\bZ) \ge \uuline{I}(\bX;\bZ)
\ea
where the last inequality is due to $\uuline{I}(\bX;\bY|\bZ) \ge 0$. To prove the equality part, observe that
\ba
\uuline{I}(\bX,\bY) \le \uuline{I}(\bX;\bZ) + \ooline{I}(\bX;\bY|\bZ) = \uuline{I}(\bX;\bZ)
\ea
and use \eqref{eq.prop.data.process.5}.
\end{proof}

Next Proposition links the compound inf-information rate to the mutual information rates.

\begin{prop}
\label{prop.uulineI<=liminf.inf.I}
Consider the general compound channel. Its compound inf-information rate is bounded as follows:
\ba
\label{eq.uulineI<=liminf.inf.I}\notag
\uuline{I}(\bX,\bY) &\overset{(a)}{\le} \liminf_{n\rightarrow\infty} \inf_s \frac{1}{n}I(X^n;Y^n|s)\\
 &\overset{(b)}{\le} \liminf_{n\rightarrow\infty} \inf_s \frac{1}{n}\sum_{k=1}^n I(X_k;Y_k|s)\\ \notag
&\overset{(c)}{\le} \liminf_{n\rightarrow\infty}\inf_s I(\tilde{X}_n;\tilde{Y}_n|s)
\ea
where (b) holds if the channel is memoryless (not necessarily stationary or information-stable) and (c) holds if the channel is also stationary and $\tilde{X}_n$ is distributed according to $p_n(x)=\frac{1}{n}\sum_{k=1}^n p_{x_k}(x)$, where $\tilde{Y}_n$ is induced by $\tilde{X}_n$.
\end{prop}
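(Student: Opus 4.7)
The three inequalities are of rather different character. Inequality (a) is the substantive one, translating the probabilistic definition of $\uuline{I}(\bX,\bY)$ into a bound on the ordinary mutual information rate; (b) is standard single-letterization from the memoryless structure; and (c) is a concavity argument on top of (b).

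For (a), the plan is to show $\liminf_n \inf_s E\{Z_{ns}\} \ge R$ for every $R < \uuline{I}(\bX,\bY)$, since $E\{Z_{ns}\} = n^{-1}I(X^n;Y^n|s)$. By the definition in \eqref{eq.uuline{I}}, $\alpha_n := \sup_s \Pr\{Z_{ns} \le R\} \to 0$. I would split
\ba \notag
E\{Z_{ns}\} = E\{Z_{ns}\mathbf{1}\{Z_{ns}>R\}\} + E\{Z_{ns}\mathbf{1}\{Z_{ns}\le R\}\}.
\ea
The first term is bounded below by $R\,\Pr\{Z_{ns}>R\} \ge R - |R|\alpha_n$ and is immediate. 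The main obstacle is controlling the second term from below uniformly in $s$, because $Z_{ns}$ is unbounded below and the bound on $\alpha_n$ alone cannot contain the contribution of a heavy negative tail. The key tool is the universal identity $E\{e^{-nZ_{ns}}\} \le 1$, valid for every $s$, which via Markov gives the uniform exponential tail bound $\sup_s \Pr\{Z_{ns} \le -t\} \le e^{-nt}$ for all $t>0$. Splitting the event $\{Z_{ns}\le R\}$ at the level $-M$ for any fixed $M>0$, one controls the bounded part by $-M\alpha_n$ (from $Z_{ns} \ge -M$ on that subset) and the deep-tail contribution by something of order $-(M+1/n)e^{-nM}$, both uniform in $s$. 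Taking $\inf_s$ and then $\liminf_n$ yields $\liminf_n \inf_s E\{Z_{ns}\} \ge R$, and letting $R \uparrow \uuline{I}(\bX,\bY)$ gives (a).

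For (b), once the channel is memoryless, $H(Y^n|X^n,s) = \sum_{k=1}^n H(Y_k|X_k,s)$ by conditional independence of the $Y_k$'s given $X^n$, while subadditivity of entropy gives $H(Y^n|s) \le \sum_k H(Y_k|s)$; subtracting yields $I(X^n;Y^n|s)\le \sum_{k=1}^n I(X_k;Y_k|s)$ pointwise in $s$, after which $\inf_s$ and $\liminf_n$ pass through termwise. For (c), adding stationarity $p_{s,k}(y|x)=p_s(y|x)$ makes $I(X_k;Y_k|s)$ a concave functional of the $k$-th input marginal for fixed $s$, and Jensen applied to that functional gives $\frac{1}{n}\sum_{k=1}^n I(X_k;Y_k|s) \le I(\tilde X_n;\tilde Y_n|s)$ with $\tilde X_n$ distributed as the averaged marginal $p_n$ and $\tilde Y_n$ induced through $p_s(y|x)$; taking $\inf_s$ and $\liminf_n$ completes (c). The hardest and most novel step is the uniform negative-tail control in (a); the other two inequalities are essentially bookkeeping on top of classical identities.
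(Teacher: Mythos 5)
Your proof is correct, and for part (a) it follows the same overall strategy as the paper: truncate the expectation $E\{Z_{ns}\}=\frac{1}{n}I(X^n;Y^n|s)$, use the definition of $\uuline{I}(\bX,\bY)$ to show that the probability mass above the threshold tends to $1$ uniformly in $s$, control the negative contribution uniformly in $s$, and then let the threshold tend to $\uuline{I}(\bX,\bY)$. The one genuine difference is the key tool for the negative tail: the paper splits at $0$ and bounds $E\{Z_{ns}1[Z_{ns}\le 0]\}\ge -\frac{1}{ne}$ in a single line via the pointwise inequality $w\ln w\ge -1/e$ applied to $w_{ns}=p_s(y^n|x^n)/p_s(y^n)$, which needs no cutoff level $M$, no Markov bound and no tail integration; you instead invoke the change-of-measure identity $E\{e^{-nZ_{ns}}\}\le 1$, deduce $\sup_s\Pr\{Z_{ns}\le -t\}\le e^{-nt}$, and integrate the tail beyond $-M$. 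Both give bounds that are uniform in $s$ and vanish with $n$, so your route is sound; the paper's is shorter, while yours isolates a reusable uniform exponential tail bound (the same identity that underlies the Verdu--Han converse). For (b) and (c) the paper simply says they follow in the standard way, and your arguments (conditional-entropy factorization plus subadditivity of $H(Y^n|s)$ for (b), concavity of mutual information in the input distribution plus Jensen under stationarity for (c)) are exactly those standard steps, applied pointwise in $s$ before taking $\inf_s$ and $\liminf_n$.
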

\begin{proof}
See Appendix.
\end{proof}

Note that Proposition \ref{prop.uulineI<=liminf.inf.I} links the compound inf-information rate, whose definition does not include expectation, to the mutual information rate, i.e. an expected quantity, and (a) holds in full generality.
A sufficient condition to achieve the equality in (b) in \eqref{eq.uulineI<=liminf.inf.I} is well-known. Below, we obtain a sufficient condition for (a) to become equality.
\begin{prop}
\label{prop.uulineI=liminf.inf.I}
Consider a compound channel such that
\ba
\label{eq.prop.uulineI=liminf.inf.I.1}
\liminf_{n\rightarrow\infty} \inf_s \Pr\{|Z_{ns} - \uuline{I}(\bX,\bY)|> \delta\} = 0 \ \forall \delta>0
\ea
where $Z_{ns}=\frac{1}{n} i(X^n;Y^n|s)$, and at least one alphabet (input or/and output) is of finite cardinality. Then, its compound inf-information rate satisfies the following:
\ba
\label{eq.prop.uulineI=liminf.inf.I.2}
\uuline{I}(\bX,\bY) = \liminf_{n\rightarrow\infty} \inf_s \frac{1}{n}I(X^n;Y^n|s)
\ea
\end{prop}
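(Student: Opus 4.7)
The plan is to leverage part (a) of Proposition \ref{prop.uulineI<=liminf.inf.I}, which gives $\uuline{I}(\bX,\bY)\le \liminf_{n\to\infty}\inf_s n^{-1}I(X^n;Y^n|s)$ in full generality, and reduce the proof to the reverse inequality $\liminf_{n\to\infty}\inf_s E[Z_{ns}]\le \uuline{I}(\bX,\bY)$. Hypothesis \eqref{eq.prop.uulineI=liminf.inf.I.1} supplies, for each $\delta>0$, a subsequence $(n_k,s_k)$ along which $Z_{n_k s_k}$ concentrates at $\uuline{I}(\bX,\bY)$ in probability; the finite-alphabet hypothesis is what will convert this probabilistic concentration into an expectation bound.

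Writing $\mu=\uuline{I}(\bX,\bY)\ge 0$ (by Proposition \ref{prop.properties.I}), for any $\delta>0$ I would split
\ba
E[Z_{ns}]\le (\mu+\delta) + E\bigl[Z_{ns}\,\mathbf{1}\{Z_{ns}>\mu+\delta\}\bigr]
\ea
and focus on the tail term. A naive pointwise bound of the form $Z_{ns}\le \ln N_y$ fails (information density is not deterministically bounded even with a finite output alphabet of cardinality $N_y$). Instead, the bound $Z_{ns}\le -n^{-1}\ln p_s(Y^n)$ combined with the counting estimate
\ba
\Pr\{-n^{-1}\ln p_s(Y^n)>K\}\le N_y^n e^{-nK}=e^{-n(K-\ln N_y)},
\ea
which holds uniformly in $s$, yields a uniform exponential tail: for any fixed $K>\ln N_y$, $E[Z_{ns}\mathbf{1}\{Z_{ns}>K\}]\to 0$ uniformly in $s$ as $n\to\infty$ via the standard $E[X\mathbf{1}\{X>K\}]=K\Pr\{X>K\}+\int_K^\infty \Pr\{X>t\}dt$ identity.

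Fixing $K>\max\{\ln N_y,\mu+\delta\}$ and splitting the tail further into the slab $\{\mu+\delta<Z_{ns}\le K\}$, whose contribution is at most $K\cdot\Pr\{Z_{ns}>\mu+\delta\}$, and the deep tail $\{Z_{ns}>K\}$, I would evaluate along the subsequence $(n_k,s_k)$: the slab probability vanishes by hypothesis \eqref{eq.prop.uulineI=liminf.inf.I.1}, and the deep-tail expectation vanishes uniformly in $s$. This gives $\liminf_k E[Z_{n_k s_k}]\le \mu+\delta$, hence $\liminf_n \inf_s E[Z_{ns}]\le \mu+\delta$; letting $\delta\downarrow 0$ and combining with part (a) of Proposition \ref{prop.uulineI<=liminf.inf.I} yields \eqref{eq.prop.uulineI=liminf.inf.I.2}. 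The symmetric case of finite input alphabet is handled via the analogous bound $Z_{ns}\le -n^{-1}\ln p(X^n)$ and its counting estimate.

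The step I expect to be the main obstacle is the uniform-in-$s$ control of the tail expectation: convergence in probability along a subsequence does not by itself guarantee convergence of expectations, and without one alphabet being finite the required uniform integrability of $Z_{ns}^+$ need not hold. The finite-alphabet hypothesis enters the proof not as a pointwise bound on $Z_{ns}$, but as a uniform exponential tail estimate on $-n^{-1}\ln p_s(Y^n)$ (or $-n^{-1}\ln p(X^n)$) through the elementary counting argument above, which is precisely what bridges probability-concentration and mean-convergence.
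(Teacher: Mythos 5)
Your proposal is correct and takes essentially the same route as the paper's proof: both reduce the problem to showing $\liminf_{n\rightarrow\infty}\inf_s E\{Z_{ns}\}\le \uuline{I}(\bX,\bY)+\delta$ by splitting the expectation at $\uuline{I}(\bX,\bY)+\delta$ and at a large cutoff, killing the intermediate slab along the subsequence supplied by \eqref{eq.prop.uulineI=liminf.inf.I.1}, killing the deep tail uniformly in $s$ via a finite-alphabet counting estimate, and then invoking part (a) of Proposition \ref{prop.uulineI<=liminf.inf.I} for the reverse inequality. The only cosmetic differences are that the paper cuts the tail at $\ln N+\delta$ and bounds it through the monotonicity of $-w\ln w$ applied to the (without loss of generality finite) input distribution, whereas you use a generic cutoff $K>\ln N_y$ and the layer-cake identity applied to $-n^{-1}\ln p_s(Y^n)$.
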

\begin{proof}
See Appendix.
\end{proof}

\begin{remark}
Note that Proposition \ref{prop.uulineI=liminf.inf.I} holds even if the compound channel is information-unstable. Condition \eqref{eq.prop.uulineI=liminf.inf.I.1} means that there exists such sub-sequence $n_k$, $k=1...\infty$, and such channel states $s_k=s(n_k)$ that the sub-sequence of normalized information densities $Z_{n_k s_k}$ converges in probability to $\uuline{I}(\bX,\bY)$, i.e. that sub-sequence is information-stable.
\end{remark}

\begin{remark}
An equivalent to Proposition \ref{prop.uulineI=liminf.inf.I} is that
\ba
\exists \delta>0:\ \liminf_{n\rightarrow\infty} \inf_s \Pr\{|Z_{ns} - \uuline{I}(\bX,\bY)|> \delta\} > 0
\ea
is a necessary condition for the strict inequality in (a) in \eqref{eq.uulineI<=liminf.inf.I}, i.e. there exists no information-stable sub-sequence in the compound channel that would converge to $\uuline{I}(\bX,\bY)$.
\end{remark}

Next, let us establish a lower bound for the compound sup-information rate. Let
\ba
I_{n}(a) = \sup_s E\{Z_{ns} 1[Z_{ns} \le a]\}
\ea
and $I_n = \lim_{a\rightarrow\infty} I_n(a)$. Under the uniform (in $n$) convergence requirement for $I_n(a)\rightarrow I_n$, the following bound on the sup-information rate holds.

\begin{prop}
\label{prop.limsup.sup.I<=oolineI}
The following inequalities hold for the general compound channel:
\ba\notag
\label{eq.uuI<=liminfI<limsupI<=ooI}
\uuline{I}(\bX,\bY) &\le \liminf_{n\rightarrow\infty} \inf_s \frac{1}{n}I(X^n;Y^n|s)\\
&\le \limsup_{n\rightarrow\infty} \sup_s \frac{1}{n}I(X^n;Y^n|s)\\ \notag
 &\le \ooline{I}(\bX,\bY)
\ea
where the first two inequalities hold in full generality and the last inequality holds when the convergence $I_n(a)\rightarrow I_n$ as $a\rightarrow \infty$ is uniform in $n$. In particular, this holds when at least one alphabet is of finite cardinality.
\end{prop}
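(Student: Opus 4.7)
The plan is to treat the three inequalities separately. The first, $\uuline{I}(\bX,\bY) \le \liminf_n \inf_s n^{-1} I(X^n;Y^n|s)$, is exactly part (a) of Proposition \ref{prop.uulineI<=liminf.inf.I}, and the middle one is trivial: $\liminf_n \inf_s g(n,s) \le \limsup_n \sup_s g(n,s)$ for any real-valued doubly-indexed sequence. For the third inequality, I would first record the identity $\sup_s n^{-1} I(X^n;Y^n|s) = \sup_s E\{Z_{ns}\} = I_n$; here the second equality uses that $f_s(a) \triangleq E\{Z_{ns}\,1[Z_{ns} \le a]\}$ is nondecreasing in $a$ on $a \ge 0$ with pointwise limit $E\{Z_{ns}\}$ as $a \to \infty$ (by dominated convergence, assuming integrability of $Z_{ns}$), so for this monotone family the supremum over $s$ and the limit in $a$ commute. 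The target thus becomes $\limsup_n I_n \le \ooline{I}(\bX,\bY)$.

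The crux is a double-threshold split. Fix $\epsilon > 0$ and, by the uniform-convergence hypothesis, pick $A > \ooline{I}(\bX,\bY)$ such that $I_n \le I_n(A) + \epsilon$ for every $n$. Choose any $a$ with $\ooline{I}(\bX,\bY) < a < A$ and decompose, for each $s$,
\[
E\{Z_{ns}\,1[Z_{ns}\le A]\} = E\{Z_{ns}\,1[Z_{ns}\le a]\} + E\{Z_{ns}\,1[a < Z_{ns} \le A]\} \le a + A\,\Pr\{Z_{ns} > a\},
\]
using $Z_{ns}\,1[Z_{ns}\le a] \le a$ pointwise (valid because $a > 0$) for the first summand and the integrand bound $Z_{ns} \le A$ on the indicated event for the second. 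Taking the supremum in $s$ yields $I_n(A) \le a + A\,\sup_s \Pr\{Z_{ns} > a\}$; since $a > \ooline{I}(\bX,\bY)$, the definition of $\ooline{I}$ forces $\sup_s \Pr\{Z_{ns} > a\} \to 0$, so $\limsup_n I_n(A) \le a$. Letting $a \downarrow \ooline{I}(\bX,\bY)$ and combining with $I_n \le I_n(A) + \epsilon$ gives $\limsup_n I_n \le \ooline{I}(\bX,\bY) + \epsilon$, and $\epsilon \downarrow 0$ closes the proof. The finite-alphabet special case is then immediate since the uniform-convergence hypothesis is automatic in that setting.

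The main obstacle I anticipate is pairing the two hypotheses correctly. A tempting shortcut would be to bound $\sup_s E\{Z_{ns}\,1[Z_{ns}>A]\}$ directly by $I_n - I_n(A)$ and then invoke uniform convergence to kill the tail, but this fails because the inequality only runs the opposite way, $I_n - I_n(A) \le \sup_s E\{Z_{ns}\,1[Z_{ns}>A]\}$ (the supremum does not distribute over the difference $E\{Z_{ns}\} - f_s(A)$). The resolution is the double-threshold split above: uniform convergence at the outer threshold $A$ absorbs the deep tail $(A,\infty)$, while the definition of $\ooline{I}$ at the inner threshold $a$ controls the middle layer $(a,A]$ through the bound $A\,\Pr\{Z_{ns}>a\}$, which vanishes uniformly in $s$ once $A$ is held fixed and $n\to\infty$.
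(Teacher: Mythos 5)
Your proof is correct and follows essentially the same route as the paper: the first inequality is quoted from Proposition \ref{prop.uulineI<=liminf.inf.I}, the middle one is trivial, and for the last one your device of fixing a single large threshold $A$ (valid for all $n$ by the uniform convergence of $I_n(a)\rightarrow I_n$) and then splitting at an inner threshold $a>\ooline{I}(\bX,\bY)$, so that the middle layer is controlled by $A\,\sup_s\Pr\{Z_{ns}>a\}\rightarrow 0$, is exactly the paper's swap of $\lim_{a\rightarrow\infty}$ with $\limsup_n\sup_s$ followed by the split at $\ooline{I}+\delta$. The only point you leave unproved is the final sentence of the statement: that a finite alphabet makes the convergence $I_n(a)\rightarrow I_n$ uniform in $n$ is not quite ``immediate''---the paper establishes it by bounding the tail via $Z_{ns}\le n^{-1}\ln\bigl(1/p(X^n)\bigr)$, which gives $\sup_s E\{Z_{ns}1[Z_{ns}>a]\}\le a\,e^{-n(a-\ln N_x)}\le a\,e^{-a+\ln N_x}\rightarrow 0$ uniformly in $n$; adding this short computation would make your argument complete.
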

\begin{proof}
See Appendix.
\end{proof}

We are now in a position to establish the optimality of independent inputs for a compound memoryless (not necessarily stationary or information-stable) channel.

\begin{thm}[Optimality of Independent Inputs]
\label{thm.indep.input}
Consider a compound memoryless channel. Let $\bX$ and $\bY$ be its input and output sequences, and $\tilde{\bX}$, $\tilde{\bY}$ be sequences of independent symbols with the same per-symbol statistics as those of $\bX$ and $\bY$. Assume that
\ba
\label{eq.thm.sigmans=0}
\lim_{n\rightarrow\infty}\sup_ s \sigma_{ns}^2 =0
\ea
where $\sigma_{ns}^2$ is the variance of information density rate under independent inputs:
\ba
\sigma_{ns}^2 = \textsf{var} \bLF \frac{1}{n} \sum_{i=1}^{n} \ln \frac{p_s(\tilde{Y_i}|\tilde{X_i})}{p_s(\tilde{Y_i})} \bRF
\ea
Then,
\ba
\label{eq.thm.indep.3}
\uuline{I}(\bX;\bY) \le \uuline{I}(\tilde{\bX};\tilde{\bY})
\ea
i.e. independent signaling is optimal, and the compound channel capacity is
\ba
\label{eq.thm.indep.Cc}\notag
C_c &= \sup_{p(\tilde{\bx})} \uuline{I}(\tilde{\bX};\tilde{\bY})\\
 &= \liminf_{n\rightarrow\infty} \sup_{p(\tilde{x}^n)} \inf_ s \frac{1}{n} \sum_{k=1}^{n} I(\tilde{X}_k;\tilde{Y}_k|s)
\ea
where $I(X_k;Y_k|s)=E\{i(X_k;Y_k|s)\}$ is $k$-th symbol mutual information and $p(\tilde{x}^n)=\prod_{k=1}^n p_k(\tilde{x}_k)$ is memoryless input.
\end{thm}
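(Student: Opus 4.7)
The plan is to combine the upper bound on $\uuline{I}(\bX;\bY)$ from Proposition \ref{prop.uulineI<=liminf.inf.I} with the uniform variance condition via Proposition \ref{prop.zns} to turn that bound into an equality for the independent-input surrogate. Memorylessness of the channel and matching per-symbol marginals of $\bX$ and $\tilde\bX$ will make the two sides comparable.

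First I would bound an arbitrary input. Since the channel is memoryless, Proposition \ref{prop.uulineI<=liminf.inf.I}, parts (a) and (b), gives
\ba \notag
\uuline{I}(\bX;\bY)
\le \liminf_{n\to\infty}\inf_{s}\frac{1}{n} I(X^n;Y^n|s)
\le \liminf_{n\to\infty}\inf_{s}\frac{1}{n}\sum_{k=1}^n I(X_k;Y_k|s).
\ea
By hypothesis $\tilde X_k$ has the same marginal as $X_k$; because the channel is memoryless, $\tilde Y_k$ then has the same conditional (and marginal) distribution as $Y_k$, so $I(X_k;Y_k|s)=I(\tilde X_k;\tilde Y_k|s)$ for every $k$ and $s$. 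Therefore
\bal \notag
\uuline{I}(\bX;\bY)\;\le\;\liminf_{n\to\infty}\inf_{s}\frac{1}{n}\sum_{k=1}^n I(\tilde X_k;\tilde Y_k|s).
\eal

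Next I would compute $\uuline{I}(\tilde\bX;\tilde\bY)$ exactly. Independence of the $\tilde X_k$ together with memorylessness yields the additive decomposition $i(\tilde X^n;\tilde Y^n|s)=\sum_{k=1}^n i(\tilde X_k;\tilde Y_k|s)$, so
\bal
E\{Z_{ns}\}=\frac{1}{n}\sum_{k=1}^n I(\tilde X_k;\tilde Y_k|s),\qquad Z_{ns}=\tfrac{1}{n}i(\tilde X^n;\tilde Y^n|s).
\eal
The assumption $\lim_{n\to\infty}\sup_s\sigma_{ns}^2=0$ is exactly the hypothesis of Proposition \ref{prop.zns}, which delivers
\bal
\uuline{I}(\tilde\bX;\tilde\bY)\;=\;\liminf_{n\to\infty}\inf_{s}\frac{1}{n}\sum_{k=1}^n I(\tilde X_k;\tilde Y_k|s).
\eal
Chaining this with the previous display proves \eqref{eq.thm.indep.3}.

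For the capacity formula, Theorem \ref{thm.C.general} gives $C_c=\sup_{p(\bx)}\uuline{I}(\bX;\bY)$. Restricting the supremum to memoryless (independent) inputs — a subset of all inputs — yields $C_c\ge\sup_{p(\tilde\bx)}\uuline{I}(\tilde\bX;\tilde\bY)$, while the reverse inequality follows by applying \eqref{eq.thm.indep.3} pointwise: for each joint $p(\bx)$, let $\tilde\bX$ be built from its marginals, then $\uuline{I}(\bX;\bY)\le\uuline{I}(\tilde\bX;\tilde\bY)\le\sup_{p(\tilde\bx)}\uuline{I}(\tilde\bX;\tilde\bY)$. Hence $C_c=\sup_{p(\tilde\bx)}\uuline{I}(\tilde\bX;\tilde\bY)$, which together with Step 2 gives the first equality of \eqref{eq.thm.indep.Cc}. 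The second equality reduces to exchanging $\sup_{p(\tilde\bx)}$ and $\liminf_n$ in
\bal
\sup_{p(\tilde\bx)}\liminf_{n\to\infty}\inf_{s}\frac{1}{n}\sum_{k=1}^n I(\tilde X_k;\tilde Y_k|s)
\;=\;\liminf_{n\to\infty}\sup_{p(\tilde x^n)}\inf_{s}\frac{1}{n}\sum_{k=1}^n I(\tilde X_k;\tilde Y_k|s).
\eal
One direction ($\le$) is automatic. For the other direction I would pick, for each $n$, an $\varepsilon_n$-optimal product distribution $p^*_n=\prod_{k=1}^n p^*_{n,k}$ and use a diagonal/truncation construction to assemble an infinite product $\{p_k\}$ whose $\liminf_n$ inner value matches the outer $\liminf$; concavity of $\inf_s\frac1n\sum_k I(p_k,s)$ in $(p_1,\dots,p_n)$ (being an infimum of jointly concave functions) makes the per-$n$ supremum well-behaved, e.g.\ attained at a single letter in the stationary case, so the construction works trivially there.

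I expect the $\liminf$–$\sup$ exchange in the last paragraph to be the main obstacle, since $p^*_n$ generally depends on the blocklength and need not extend consistently to an infinite input process. Outside of the stationary memoryless case — where symmetry/concavity collapse both sides to $\sup_p\inf_s I(p,s)$ — one needs the concavity-plus-compactness argument above (or an explicit block-constant construction) to produce a single input sequence whose per-$n$ normalized mutual information catches up to the $\liminf$ of the per-$n$ suprema; everything else in the theorem is a clean assembly of Propositions \ref{prop.zns} and \ref{prop.uulineI<=liminf.inf.I} with Theorem \ref{thm.C.general}.
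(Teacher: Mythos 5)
Your derivation of \eqref{eq.thm.indep.3} is correct and is essentially the paper's own argument: Proposition \ref{prop.uulineI<=liminf.inf.I}(a),(b) bounds $\uuline{I}(\bX;\bY)$ by $\liminf_{n}\inf_s \frac{1}{n}\sum_{k=1}^n I(X_k;Y_k|s)$, the identification $I(X_k;Y_k|s)=I(\tilde X_k;\tilde Y_k|s)$ holds because the channel is memoryless and the per-symbol marginals match, and Proposition \ref{prop.zns} under hypothesis \eqref{eq.thm.sigmans=0} turns that common value into $\uuline{I}(\tilde\bX;\tilde\bY)$ exactly; the first equality in \eqref{eq.thm.indep.Cc} then follows from Theorem \ref{thm.C.general} just as you argue.

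The one place you depart from the paper is the final $\sup$--$\liminf$ exchange, which you single out as the main obstacle and resolve only by a sketched diagonalization-plus-concavity argument. That obstacle is illusory in this paper's framework: in the information-spectrum setting adopted here (see the channel model of Section \ref{sec.Channel Model}, where $X^n=\{X_1^{(n)},\dots,X_n^{(n)}\}$), an input $\bX=\{X^n\}_{n=1}^{\infty}$ is merely a sequence of finite-dimensional distributions with \emph{no} consistency requirement across blocklengths, and a memoryless input is a sequence of product distributions $p(\tilde x^n)=\prod_{k=1}^n p_k^{(n)}(\tilde x_k)$ whose factors may depend on $n$. Hence, for the nontrivial direction, simply choose for each $n$ a product distribution within $1/n$ of $\sup_{p(\tilde x^n)}\inf_s\frac{1}{n}\sum_{k=1}^n I(\tilde X_k;\tilde Y_k|s)$; this collection is itself a legitimate memoryless input $\tilde\bX^*$, and Proposition \ref{prop.zns} (the variance condition \eqref{eq.thm.sigmans=0} being assumed for independent inputs) gives $\uuline{I}(\tilde\bX^*;\tilde\bY^*)=\liminf_n\inf_s E\{Z_{ns}\}\ge\liminf_n\bigl[\sup_{p(\tilde x^n)}\inf_s\frac{1}{n}\sum_{k=1}^n I(\tilde X_k;\tilde Y_k|s)-1/n\bigr]$, which is the right-hand side of \eqref{eq.thm.indep.Cc}; the reverse direction is, as you note, automatic. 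So no block-constant construction, concavity, compactness, or stationarity is needed, and your proof becomes complete once this observation replaces the sketched step.
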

\begin{proof}
In view of Proposition \ref{prop.uulineI<=liminf.inf.I}, the inequality in \eqref{eq.thm.indep.3}  is established by establishing
\ba
\uuline{I}(\tilde{\bX},\tilde{\bY}) = \liminf_{n\rightarrow\infty} \inf_s \frac{1}{n}\sum_{k=1}^n I(X_k;Y_k|s)
\ea
To see this, let $Z_{ns}=n^{-1}\sum_{k=1}^n i(\tilde{X}_k;\tilde{Y}_k|s)$ and apply Proposition \ref{prop.zns}. \eqref{eq.thm.indep.Cc} follows from \eqref{eq.thm.indep.3}.
\end{proof}

If, in addition, the channel is also stationary, then i.i.d. input is optimal and the familiar single-letter capacity expression results:
\bal
\label{eq.Cc.XY}
C_c=  \sup_{p(x)} \inf_ s I(X;Y|s).
\eal
Furthermore, since the uncertainty set $\sS$ is arbitrary, one can also treat the state $s$ as a sequence $s^n=\{s_1,..,s_n\}$ so that the memoryless channel model becomes
\bal\notag
p_{s^n}(y^n|x^n)= \prod_{k=1}^n p_{s_k}(y_k|x_k)
\eal
which is exactly the arbitrary varying channel (AVC)\footnote{This connection was pointed to us by Y. Steinberg.} \cite{Csiszar-92}\cite{Lapidoth-98A}. It follows from \eqref{eq.thm.indep.Cc} that its capacity $C_{AVC}$ is the same as the compound capacity in \eqref{eq.Cc.XY}, $C_c=C_{AVC}$, under the full CSI-R. Note that this result holds for deterministic coding and maximum as well as average error probability (recall that the AVC capacity can be different under average and maximum error probabilities, and also under deterministic and random coding; the AVC capacity under deterministic coding and maximum error probability is not known in general while some special cases have been settled \cite{Lapidoth-98A}\cite{Csiszar-11}). This extends the earlier result in \cite{Stambler-75} (established under average error probability) to the maximum error probability as well as to arbitrary input/output alphabets and channel state sets. It follows that allowing random (as opposed to deterministic) coding and/or average instead of maximum error probability does not increase the AVC capacity under the full CSI-R.

\begin{remark}
The condition in \eqref{eq.thm.sigmans=0} holds if any of the following holds:
\begin{enumerate}
    \item the variances of per-symbol information densities are uniformly bounded:
         \ba
         \sigma_{ks}^2 = \textsf{var}\{i(\tilde{X}_k;\tilde{Y}_k|s)\} \le A < \infty
         \ea
        which is the case if at least one alphabet is of finite cardinality (see Remark 3.1.1 in \cite{Han}, which is straightforward to extend to the compound setting);
    \item the per-symbol variances are bounded: $\sigma_{ks}^2 \le A_k < \infty$ and
        \ba
         \lim_{n\rightarrow\infty} \frac{1}{n^2} \sum_{k=1}^{n} A_k =0
        \ea
\end{enumerate}
\end{remark}

Let us now consider a $(n,\varepsilon_n,r_n)$-code for an arbitrary compound channel such that
\ba
\label{eq.code.1}
\lim_{n\rightarrow\infty} \varepsilon_n = 0, \lim_{n\rightarrow\infty} r_n = R
\ea
i.e. it achieves rate $R$ and arbitrary low error probability over that channel. What is the information density distribution (spectrum) induced by this code?

\begin{thm}
\label{thm.code}
Consider the code above operating on an arbitrary compound channel such that \eqref{eq.code.1} holds. If the input $X^n$ is uniformly distributed over the codewords, then the induced information density rate $n^{-1}i(X^n;Y^n|s)$ converges in probability to the code rate $R$ uniformly over the whole class of channels:
\ba
\label{eq.thm.code.1}
\lim_{n\rightarrow\infty} \sup_s \Pr\{|n^{-1}i(X^n;Y^n|s)-R| > \delta\}=0 \ \forall \delta>0
\ea
so that
\ba
\label{eq.thm.code.2}
\uuline{I}(\bX,\bY) = \ooline{I}(\bX,\bY) = \lim_{n\rightarrow\infty}  \frac{1}{n}I(X^n;Y^n|s)=R \ \forall s
\ea
\end{thm}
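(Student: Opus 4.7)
The plan is to prove the uniform convergence in probability (\ref{eq.thm.code.1}) first, from which $\uuline I(\bX,\bY)=\ooline I(\bX,\bY)=R$ follow directly from the definitions of the compound inf/sup-information rates, and then to prove the mean convergence $n^{-1}I(X^n;Y^n|s)\to R$ separately via Fano's inequality. The two main tools are the compound Verdu--Han Lemma \ref{lemma.comp.Verdu.Han} for the lower tail of $Z_{ns}=n^{-1}i(X^n;Y^n|s)$, and an elementary pointwise upper bound that I will use to dispose of the upper tail.

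The crucial observation I would exploit is that when $X^n$ is uniform over the $M_n=e^{nr_n}$ codewords $\{x_1,\ldots,x_{M_n}\}$, the induced output distribution satisfies $p_s(Y^n)=M_n^{-1}\sum_i p_s(Y^n|x_i)\ge M_n^{-1}p_s(Y^n|X^n)$ almost surely, for every state $s$ (keep only the $X^n$-term). This yields the deterministic bound $Z_{ns}\le r_n$ a.s., uniformly in $s$. Hence, for any $\delta>0$ and all $n$ large enough that $r_n<R+\delta$, one has $\Pr\{Z_{ns}\ge R+\delta\}=0$ uniformly in $s$, settling the upper tail.

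For the lower tail, I would apply Lemma \ref{lemma.comp.Verdu.Han} with $\gamma=\delta/2$ to obtain $\sup_s\Pr\{Z_{ns}\le r_n-\delta/2\}\le \varepsilon_n+e^{-n\delta/2}\to 0$, and then use $r_n\to R$ to pass to $\sup_s\Pr\{Z_{ns}\le R-\delta\}\to 0$. Together with the upper-tail step this proves (\ref{eq.thm.code.1}). The identities in (\ref{eq.thm.code.2}) involving $\uuline I$ and $\ooline I$ now follow immediately: for every $R'<R$, eventually $r_n-\delta/2>R'$, so $\sup_s\Pr\{Z_{ns}\le R'\}\to 0$, giving $\uuline I\ge R'$; for every $R'>R$, eventually $r_n<R'$, so $\sup_s\Pr\{Z_{ns}\ge R'\}=0$, giving $\ooline I\le R'$. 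Combined with $\uuline I\le \ooline I$ (Proposition \ref{prop.prop.XY}), this forces $\uuline I=\ooline I=R$.

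For the mean convergence, I would invoke Fano's inequality, $H(X^n|Y^n,s)\le h(\varepsilon_{ns})+\varepsilon_{ns}\ln M_n$, together with $H(X^n)=\ln M_n=nr_n$, to sandwich
\begin{equation}
r_n(1-\varepsilon_{ns})-\frac{h(\varepsilon_{ns})}{n}\;\le\;\frac{1}{n}I(X^n;Y^n|s)\;\le\;r_n
\end{equation}
uniformly in $s$, the upper bound being just $I(X^n;Y^n|s)\le H(X^n)$. Since $\sup_s\varepsilon_{ns}=\varepsilon_n\to 0$ and $r_n\to R$, both sides converge to $R$, proving the last equality in (\ref{eq.thm.code.2}). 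The step I would have anticipated as the main obstacle is the upper tail of $Z_{ns}$, because $Z_{ns}$ is a priori unbounded and the information-spectrum lemmas available (Feinstein, Verdu--Han) produce only lower-tail bounds on $Z_{ns}$ in terms of error probability, not the reverse; the fact that a uniform input over any codebook forces the hard deterministic ceiling $Z_{ns}\le r_n$ is what removes that obstacle without any channel-specific assumption.
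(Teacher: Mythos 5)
Your proposal is correct. For the convergence in probability \eqref{eq.thm.code.1} you follow essentially the same route as the paper: the deterministic ceiling $Z_{ns}\le r_n$ (which the paper gets from $p_s(x^n|y^n)\le 1$ and $p(X^n)=1/M_n$, and you get equivalently by lower-bounding the mixture output $p_s(Y^n)\ge M_n^{-1}p_s(Y^n|X^n)$) disposes of the upper tail, and Lemma \ref{lemma.comp.Verdu.Han} handles the lower tail; the identities $\uuline{I}(\bX,\bY)=\ooline{I}(\bX,\bY)=R$ then follow from the definitions exactly as in the paper. Where you genuinely diverge is the last equality in \eqref{eq.thm.code.2}: the paper deduces $\lim_n n^{-1}I(X^n;Y^n|s)=R$ by invoking the chain \eqref{eq.uuI<=liminfI<limsupI<=ooI} of Proposition \ref{prop.limsup.sup.I<=oolineI}, whose last inequality formally requires uniform convergence of the truncated mutual informations (trivially satisfied here because $Z_{ns}\le r_n$ is bounded above, though the paper leaves this implicit), whereas you sandwich $n^{-1}I(X^n;Y^n|s)$ between $r_n(1-\varepsilon_{ns})-h(\varepsilon_{ns})/n$ and $r_n$ via Fano's inequality and $I(X^n;Y^n|s)\le H(X^n)=\ln M_n$. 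Your route is more elementary and self-contained (no appeal to the truncation/uniform-convergence machinery), and it even yields the convergence of the mutual information rates uniformly in $s$ directly; its only implicit requirements --- distinct codewords so that $H(X^n)=\ln M_n$, and applicability of Fano to the state-$s$ decoder --- are the same assumptions the paper itself uses. What the paper's route buys instead is a derivation purely inside the information-spectrum framework, reusing a proposition it needs elsewhere anyway.
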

\begin{proof}
Since $R - \delta \le r_n \le R + \delta$ for any $\delta>0$ and sufficiently large $n$,
\ba \notag
\frac{1}{n}i(X^n;Y^n|s) &= \frac{1}{n}\ln \frac{p_s(X^n|Y^n)}{p(X^n)}\\ \notag
&\le \frac{1}{n}\ln \frac{1}{p(X^n)}\\
\label{eq.thm.code.3}
&= r_n \le R+\delta
\ea
where the last equality is due to $p(X^n)=1/M_n$. On the other hand, using Lemma \ref{lemma.comp.Verdu.Han},
\ba \notag
\varepsilon_{n} &\ge \sup_{s} \Pr\bLF n^{-1} i(X^n;Y^n|s) \le r_n - \delta \bRF - e^{-\delta n} \\ \notag
&\ge \sup_{s} \Pr\bLF n^{-1} i(X^n;Y^n|s) \le R - 2\delta \bRF - e^{-\delta n} \\ \notag
\ea
for any $\delta>0$, so that taking $\lim_{n\rightarrow\infty}$ on both sides, one obtains
\ba
\lim_{n\rightarrow\infty}\sup_{s} \Pr\bLF n^{-1} i(X^n;Y^n|s) \le R - 2\delta \bRF =0 \ \forall \delta>0
\ea
Combining this with \eqref{eq.thm.code.3}, \eqref{eq.thm.code.1} follows. To prove \eqref{eq.thm.code.2}, note that 1st equality follows from \eqref{eq.thm.code.1} and 2nd equality (and the existence of corresponding limit) follows from \eqref{eq.uuI<=liminfI<limsupI<=ooI}.
\end{proof}

Theorem \ref{thm.code} generalizes Theorem 3.2.3 in \cite{Han}\footnote{this theorem has appeared before, albeit in a different form, in \cite{Ting}.} to the compound channel setting and the convergence in probability holds for the whole class of channels  uniformly in $s$, not just for each channel individually. Even though the compound channel is allowed to be information-unstable, the code-induced information density is information-stable and the corresponding information spectrum is a single atom equal to the code rate and also the mutual information rate under any channel state in the uncertainty set (so that the mutual information rate is state-independent), as long as (i) the error probability converges to zero, and (ii) the sequence of code rates converges. In a sense, this constitutes a link between information-unstable (non-ergodic, non-stationary) and information-stable regimes of a compound channel. Combining Theorem \ref{thm.code} with Lemma \ref{lemma.comp.Feinstein}, one concludes that information stability over a compound channel is both necessary and sufficient for a code in \eqref{eq.code.1} to exist.

\section{Strong Converse for the General Compound Channel}
\label{sec.strong.conv}

Strong converse ensures that a slightly larger error probability cannot be traded off for a higher data rate (since the transition from arbitrary low to high error probability is sharp). Another motivation is to consider a scenario where a capacity-achieving code is designed for a given SNR and the actual system SNR drops below this value so that the system operates at a rate above the channel capacity. If the strong converse holds, this results in large error rate while only gradual degradation occurs otherwise. A formal definition follows.

\begin{defn}
A compound channel is said to satisfy strong converse if
\ba
\label{eq.strong.conv.d1}
\lim_{n\rightarrow\infty} \varepsilon_n = 1
\ea
for any code satisfying
\ba
\label{eq.strong.conv.d2}
\liminf_{n\rightarrow\infty} r_n > C_c
\ea
\end{defn}

To obtain conditions for strong converse, let $\check{I}(\bX;\bY)$ be the "worst-case" sup-information rate,
\ba
\label{eq.tilde{I}}
\check{I}(\bX;\bY) \triangleq \inf_{R}\bLF R: \lim_{n\rightarrow\infty} \inf_{s\in \sS} \Pr\left\{Z_{ns} > R \right\} =0 \bRF
\ea
where $Z_{ns} = n^{-1}i(X^n;Y^n|s)$ is the information density rate, and $I_{ns}(a)$ be the truncated mutual information,
\ba
I_{ns}(a) \triangleq E\{Z_{ns}1[Z_{ns} \le a]\},\ I_{ns} = \lim_{a\rightarrow\infty} I_{ns}(a)
\ea
where $1[\cdot]$ is the indicator function and $I_{ns}=I(X^n;Y^n|s)$ is the mutual information under channel state $s$. The sup-information rate $\bar{I}(\bX;\bY|s)$ under channel state $s$ is defined as
\ba
\bar{I}(\bX;\bY|s) \triangleq \inf_{R}\bLF R: \lim_{n\rightarrow\infty} \Pr\left\{Z_{ns} \ge R \right\} =0 \bRF
\ea

Fig. \ref{fig:Figure_I_hat} illustrates various information rates for a two-state channel.

\begin{figure}[htbp]
\centerline{\includegraphics[width=2in]{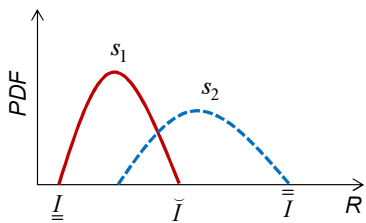}} \caption{An illustration of the information rates $\uuline{I}$, $\check{I}$ and $\ooline{I}$ for a two-state channel. Solid and dashed lines indicate the asymptotic distributions of the information density rate $n^{-1}i(X^n;Y^n|s)$ under the two states $s_1$ and $s_2$.}
\label{fig:Figure_I_hat}
\end{figure}

The following Proposition establishes an ordering of various information rates.

\begin{prop}
\label{prop.ineq.tildeI}
The following inequalities hold for any input
\ba \notag
\uuline{I}(\bX;\bY) &\le \check{I}(\bX;\bY)\\ \notag
 &\le \inf_s \bar{I}(\bX;\bY|s)\\ \notag
  &\le \sup_s \bar{I}(\bX;\bY|s)\\
  &\le \ooline{I}(\bX;\bY)
\ea
In addition,
\ba
\label{eq.uulineI<=liminf.inf.I.2}
\uuline{I}(\bX,\bY) \le \liminf_{n\rightarrow\infty} \inf_s \frac{1}{n}I(X^n;Y^n|s) \le \check{I}(\bX;\bY)
\ea
where the 2nd inequality holds if the convergence in $I_{ns}(a) \rightarrow I_{ns}$ is uniform.
\end{prop}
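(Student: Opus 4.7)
I would prove the five inequalities one at a time, starting from the main chain and ending with the auxiliary bound \eqref{eq.uulineI<=liminf.inf.I.2}. The four chain inequalities are largely direct consequences of the definitions of $\uuline{I}$, $\ooline{I}$, $\check{I}$ and $\bar{I}(\cdot|s)$, together with the elementary fact that $\inf_s(\cdot)\le (\cdot)_{s_0}\le \sup_s(\cdot)$ and that the defining sets are of interval type in $R$. The only nontrivial step is the second inequality in \eqref{eq.uulineI<=liminf.inf.I.2}, where the uniform-convergence hypothesis is needed.

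\textbf{Step 1 ($\uuline{I}\le \check{I}$).} Fix $R<\uuline{I}(\bX;\bY)$. By definition of $\uuline{I}$, $\lim_n \sup_s \Pr\{Z_{ns}\le R\}=0$, hence $\liminf_n \inf_s \Pr\{Z_{ns}>R\}\ge 1-\lim_n\sup_s \Pr\{Z_{ns}\le R\}=1\ne 0$, so $R$ cannot belong to the defining set of $\check{I}$, giving $R\le \check{I}$. Taking the supremum over such $R$ yields the claim.

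\textbf{Step 2 ($\check{I}\le \inf_s \bar{I}(\bX;\bY|s)$).} Fix $s_0\in\sS$ and any $R>\bar{I}(\bX;\bY|s_0)$. From the definition of $\bar{I}(\cdot|s_0)$, $\lim_n\Pr\{Z_{ns_0}\ge R\}=0$, so a fortiori $\lim_n\Pr\{Z_{ns_0}>R\}=0$ and therefore $\inf_{s}\Pr\{Z_{ns}>R\}\le \Pr\{Z_{ns_0}>R\}\to 0$, which places $R$ in the defining set for $\check{I}$ and yields $\check{I}\le R$. Letting $R\downarrow \bar{I}(\bX;\bY|s_0)$ and then taking $\inf_{s_0}$ gives the inequality. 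The middle inequality $\inf_s\bar{I}\le\sup_s\bar{I}$ is trivial.

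\textbf{Step 3 ($\sup_s\bar{I}(\bX;\bY|s)\le \ooline{I}$).} For $R>\ooline{I}$, $\lim_n\sup_s\Pr\{Z_{ns}\ge R\}=0$, so in particular $\lim_n\Pr\{Z_{ns}\ge R\}=0$ for every $s$, implying $\bar{I}(\bX;\bY|s)\le R$ for every $s$; taking $\sup_s$ then $R\downarrow\ooline{I}$ finishes this step. Combined with Proposition~\ref{prop.uulineI<=liminf.inf.I}(a), this also gives the first inequality of \eqref{eq.uulineI<=liminf.inf.I.2}.

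\textbf{Step 4 (the uniform-convergence inequality, main difficulty).} Set $I_{ns}^\ast\triangleq n^{-1}I(X^n;Y^n|s)=E\{Z_{ns}\}$ and $\tau_{ns}(a)\triangleq I_{ns}^\ast - I_{ns}(a)=E\{Z_{ns}\,1[Z_{ns}>a]\}$. The uniform convergence $I_{ns}(a)\to I_{ns}^\ast$ ensures that for every $\varepsilon>0$ there exists $a$ (independent of $n,s$) with $\tau_{ns}(a)<\varepsilon$. For any $R>\max(\check{I},0)$ and $a>R$ split
\bal
I_{ns}^\ast &= E\{Z_{ns}1[Z_{ns}\le R]\}+E\{Z_{ns}1[R<Z_{ns}\le a]\}+\tau_{ns}(a)\\
 &\le R+a\Pr\{Z_{ns}>R\}+\varepsilon,
\eal
using $E\{Z_{ns}1[Z_{ns}\le R]\}\le R\Pr\{Z_{ns}\le R\}\le R$ for $R\ge 0$ and $E\{Z_{ns}1[R<Z_{ns}\le a]\}\le a\Pr\{Z_{ns}>R\}$. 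Taking $\inf_s$ then $\liminf_n$, and invoking $\lim_n\inf_s\Pr\{Z_{ns}>R\}=0$ from the defining property of $\check{I}$ (valid since $R>\check{I}$), yields $\liminf_n\inf_s I_{ns}^\ast \le R+\varepsilon$. Sending $\varepsilon\downarrow 0$ (which fixes $a$) and then $R\downarrow \check{I}$ gives $\liminf_n\inf_s I_{ns}^\ast\le \check{I}$.

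The main obstacle is Step 4: without the uniform-in-$(n,s)$ control of $\tau_{ns}(a)$, the tail contribution $E\{Z_{ns}1[Z_{ns}>a]\}$ cannot be made simultaneously negligible for all $n,s$, and the argument collapses, which is precisely why the second inequality in \eqref{eq.uulineI<=liminf.inf.I.2} is stated under the uniform-convergence assumption (which, as noted after Proposition~\ref{prop.limsup.sup.I<=oolineI}, is automatic when at least one alphabet is of finite cardinality).
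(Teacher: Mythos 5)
Your proposal is correct and follows essentially the paper's own route: the four chain inequalities are definitional (the paper phrases them as midpoint contradictions while you argue directly via monotonicity of the defining sets), the first inequality of \eqref{eq.uulineI<=liminf.inf.I.2} is exactly Proposition \ref{prop.uulineI<=liminf.inf.I}(a), and your Step 4 is the paper's truncation argument with the uniform convergence invoked in $\varepsilon$-form (a uniform-in-$(n,s)$ tail bound) rather than as an interchange of $\lim_{a\rightarrow\infty}$ with $\liminf_{n\rightarrow\infty}\inf_s$. The only point to make explicit is that letting $R\downarrow\check{I}$ after restricting to $R>\max(\check{I},0)$ tacitly uses $\check{I}\ge 0$, which does hold because $\check{I}(\bX;\bY)\ge\uuline{I}(\bX;\bY)\ge 0$ by your Step 1 together with \eqref{eq.properties.2}.
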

\begin{proof}
see the Appendix.
\end{proof}

It can be shown, via examples, that all inequalities can be strict. Using this Proposition, sufficient and necessary conditions for the strong converse to hold can be established.

\begin{thm}
\label{thm.strong.conv}
A sufficient and necessary condition for the general compound channel to satisfy strong converse is
\ba
\label{eq.strong.conv.1}
\sup_{p(\bx)} \uuline{I}(\bX;\bY) = \sup_{p(\bx)} \check{I}(\bX;\bY)
\ea
If this holds and the convergence $I_{ns}(a)\rightarrow I_{ns}$ is uniform in $n, s$ for any input $\bX^*$ satisfying $\uuline{I}(\bX^*;\bY^*)> C_c - \delta$ for some $\delta>0$ (i.e. the input $\bX^*$ is $\delta$-suboptimal), then
\ba
\label{eq.strong.conv.1a}
C_c =\sup_{p(\bx)} \check{I}(\bX;\bY) = \liminf_{n\rightarrow\infty} \sup_{p(x^n)} \inf_s \frac{1}{n} I(X^n;Y^n|s)
\ea
The condition \eqref{eq.strong.conv.1} is equivalent to any of the following:

1) for any $\delta>0$ and any input $\bX^*$ satisfying $\uuline{I}(\bX^*;\bY^*)> C_c - \delta$,
\ba
\label{eq.strong.conv.2}
\lim_{n\rightarrow\infty} \inf_s \Pr\{|Z_{ns}^* - C_c|> \delta\} = 0
\ea
where $Z_{ns}^*=\frac{1}{n} i({X^n}^*;{Y^n}^*|s)$ is the information density rate under input $\bX^*$.

2) for any input $\bX$ and any $\delta>0$,
\ba
\label{eq.strong.conv.2a}
\lim_{n\rightarrow\infty} \inf_s \Pr\{Z_{n s} > C_c + \delta\} = 0
\ea
\end{thm}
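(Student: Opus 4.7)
The plan is to prove sufficiency and necessity of \eqref{eq.strong.conv.1} for strong converse using the compound Verdu-Han and Feinstein inequalities (Lemmas \ref{lemma.comp.Verdu.Han} and \ref{lemma.comp.Feinstein}), then to deduce the capacity formula \eqref{eq.strong.conv.1a} from Proposition \ref{prop.ineq.tildeI}, and finally to verify the equivalences with \eqref{eq.strong.conv.2} and \eqref{eq.strong.conv.2a}. For sufficiency, take any code with $\liminf_n r_n > C_c$, so that $r_n \ge C_c + 2\gamma$ for some $\gamma > 0$ and all sufficiently large $n$, and let $X^n$ be uniform over its codewords. Then $\check{I}(\bX;\bY) \le \sup_{p(\bx)} \check{I}(\bX;\bY) = C_c$ by \eqref{eq.strong.conv.1}, so the definition of $\check{I}$ gives $\lim_n \sup_s \Pr\{Z_{ns} \le C_c + \gamma\} = 1$. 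Applying Lemma \ref{lemma.comp.Verdu.Han} with this $\gamma$ yields
\[
\varepsilon_n \ge \sup_s \Pr\{Z_{ns} \le r_n - \gamma\} - e^{-\gamma n} \ge \sup_s \Pr\{Z_{ns} \le C_c + \gamma\} - e^{-\gamma n},
\]
forcing $\varepsilon_n \to 1$, which is the strong converse.

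Necessity is by contradiction. If \eqref{eq.strong.conv.1} fails, then using $\sup_{p(\bx)} \check{I} \ge \sup_{p(\bx)} \uuline{I} = C_c$ from Proposition \ref{prop.ineq.tildeI}, there exists an input $\bX^*$ with $\check{I}(\bX^*;\bY^*) > C_c + 2\gamma$ for some $\gamma > 0$; the definition of $\check{I}$ then delivers $\limsup_n \inf_s \Pr\{Z_{ns}^* > C_c + 2\gamma\} = \eta_0 > 0$. Applying Lemma \ref{lemma.comp.Feinstein} to $\bX^*$ with constant rate $r_n = C_c + \gamma$ and the same $\gamma$ produces a code with
\[
\varepsilon_n \le \sup_s \Pr\{Z_{ns}^* \le C_c + 2\gamma\} + e^{-\gamma n} = 1 - \inf_s \Pr\{Z_{ns}^* > C_c + 2\gamma\} + e^{-\gamma n},
\]
which is at most $1 - \eta_0/2$ along an infinite subsequence. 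Since $\liminf_n r_n = C_c + \gamma > C_c$, this contradicts strong converse, establishing \eqref{eq.strong.conv.1}.

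For the capacity formula, under the uniform-convergence hypothesis Proposition \ref{prop.ineq.tildeI} (in particular \eqref{eq.uulineI<=liminf.inf.I.2}) gives the chain
\[
C_c = \sup_{p(\bx)}\uuline{I}(\bX;\bY) \le \sup_{p(\bx)} \liminf_n \inf_s \tfrac{1}{n} I(X^n;Y^n|s) \le \liminf_n \sup_{p(x^n)} \inf_s \tfrac{1}{n} I(X^n;Y^n|s) \le \sup_{p(\bx)} \check{I}(\bX;\bY) = C_c,
\]
where the last inequality follows by concatenating, for each $n$, a near-maximizer $p_n^*$ of $\sup_{p(x^n)} \inf_s \tfrac{1}{n} I(X^n;Y^n|s)$ into a single sequence $\bX^*$ and invoking \eqref{eq.uulineI<=liminf.inf.I.2} for that $\bX^*$. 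The equivalences run as follows: \eqref{eq.strong.conv.1} $\Leftrightarrow$ \eqref{eq.strong.conv.2a} comes directly from the definition of $\check{I}$ together with $\check{I} \ge \uuline{I}$ (which forces $\sup_{p(\bx)} \check{I} \ge C_c$ automatically, so the equality reduces to $\check{I}(\bX;\bY) \le C_c$ for every input); \eqref{eq.strong.conv.2a} $\Rightarrow$ \eqref{eq.strong.conv.2} uses the elementary bound $\inf_s \Pr\{|Z_{ns}^* - C_c| > \delta\} \le \inf_s \Pr\{Z_{ns}^* > C_c + \delta\} + \sup_s \Pr\{Z_{ns}^* \le C_c - \delta\}$ combined with the $\delta$-suboptimality assumption $\uuline{I}(\bX^*;\bY^*) > C_c - \delta$; and \eqref{eq.strong.conv.2} $\Rightarrow$ \eqref{eq.strong.conv.1} is obtained by routing through strong converse, repeating the Feinstein construction of the preceding paragraph on a near-optimal violating input.

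The main obstacle is the implication \eqref{eq.strong.conv.2} $\Rightarrow$ \eqref{eq.strong.conv.1}: the hypothesis only constrains $\delta$-suboptimal inputs, while the conclusion universally quantifies over all $p(\bx)$. A putative $\bX_0$ witnessing $\check{I}(\bX_0;\bY_0) > C_c$ may have $\uuline{I}(\bX_0;\bY_0) \ll C_c$, so \eqref{eq.strong.conv.2} cannot be invoked on $\bX_0$ directly; one must first convert $\bX_0$ into a near-optimal input that still witnesses $\check{I} > C_c$ (e.g.\ by time-sharing or distribution mixing with a near-optimal input), and verifying that such a perturbation preserves the $\check{I}$-excess in the information-unstable compound setting is the delicate point. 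The cleanest route is to bypass a direct equivalence proof and instead close the cycle \eqref{eq.strong.conv.2} $\Rightarrow$ strong converse $\Rightarrow$ \eqref{eq.strong.conv.1} using the necessity argument already established above.
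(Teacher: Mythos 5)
Your proof of the main equivalence (sufficiency/necessity of \eqref{eq.strong.conv.1} for strong converse) is essentially the paper's: sufficiency via Lemma \ref{lemma.comp.Verdu.Han} exactly as in the paper, and necessity via Lemma \ref{lemma.comp.Feinstein}, only phrased contrapositively (you build one Feinstein code on a single input witnessing $\check{I}(\bX^*;\bY^*)>C_c$, whereas the paper runs the same Feinstein bound for an \emph{arbitrary} input under the strong-converse hypothesis and concludes $\check{I}(\bX;\bY)\le C_c$ for every input); both are valid. For \eqref{eq.strong.conv.1a} you invoke \eqref{eq.uulineI<=liminf.inf.I.2} just as the paper does, and your concatenation of per-$n$ near-maximizers into a single $\bX^*$ actually supplies a detail the paper leaves implicit (with the same caveat, shared by the paper, that the uniform-convergence hypothesis must cover that constructed input). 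Your proof of \eqref{eq.strong.conv.1} $\Leftrightarrow$ \eqref{eq.strong.conv.2a} directly from the definition of $\check{I}$ (the equality reduces to $\check{I}(\bX;\bY)\le C_c$ for every input) is a mild shortcut relative to the paper, which gets \eqref{eq.strong.conv.2a} from the strong converse through the Feinstein bound; and your derivation of \eqref{eq.strong.conv.2a}$\Rightarrow$\eqref{eq.strong.conv.2} via the two-tail bound plus $\delta$-suboptimality is the paper's argument.

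The one genuine gap is the direction \eqref{eq.strong.conv.2} $\Rightarrow$ \eqref{eq.strong.conv.1}. You correctly identify the obstacle (\eqref{eq.strong.conv.2} constrains only $\delta$-suboptimal inputs, while \eqref{eq.strong.conv.1} needs $\check{I}(\bX;\bY)\le C_c$ for \emph{every} input), but your proposed repair---closing a cycle \eqref{eq.strong.conv.2} $\Rightarrow$ strong converse $\Rightarrow$ \eqref{eq.strong.conv.1}---does not close anything as written, because the first arrow is never supplied and is exactly as hard as the direct implication: the input induced by a code of rate above $C_c$ need not be $\delta$-suboptimal, so \eqref{eq.strong.conv.2} cannot be applied to it, and the time-sharing/mixing perturbation you gesture at is not shown to preserve the $\check{I}$-excess in the information-unstable compound setting. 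To be fair, the paper's own appendix proof establishes only \eqref{eq.strong.conv.1} $\Rightarrow$ \eqref{eq.strong.conv.2} (and both directions for \eqref{eq.strong.conv.2a}); the converse direction for statement 1) is not proved there either, so you have located a step the paper itself glosses over rather than a place where your argument falls short of the paper's. Everything else in the proposal matches the paper's proof.
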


\begin{proof}
see the Appendix.
\end{proof}

\begin{remark}
In the case of a single-state channel,
\ba
\uuline{I}(\bX;\bY) = \underline{I}(\bX;\bY),\ \check{I}(\bX;\bY) = \overline{I}(\bX;\bY)
\ea
where $\underline{I}(\bX;\bY),\ \overline{I}(\bX;\bY)$ are inf and sup-information rates for the regular (single-state) channel, and Theorem \ref{thm.strong.conv} reduces to the corresponding Theorem in \cite{Verdu}\cite{Han}.
\end{remark}

\begin{remark}
Note that, under the conditions of Theorem \ref{thm.strong.conv} that lead to \eqref{eq.strong.conv.1a}, the compound channel behaves \textit{ergodically} (the mutual information has operational meaning) even though \textit{no assumption of ergodicity} or information stability was made upfront.
\end{remark}

Below, we consider a special case when the supremum in \eqref{eq.strong.conv.1} is achieved.

\begin{cor}
\label{cor.strong.conv.cor.1}
If the channel satisfies strong converse and the supremum in $\sup_{p(\bx)} \uuline{I}(\bX;\bY)$ is achieved, i.e.
\ba
\exists \bX^*:\ \uuline{I}(\bX^*;\bY^*) = C_c
\ea
then $\check{I}(\bX^*;\bY^*) = C_c$ and there exists such sequence of channel states $s(n)$ that the corresponding sequence of normalized information densities $Z_{n s(n)}^*$ (under input $\bX^*$) converges in probability to the compound channel capacity $C_c$,
\ba
\label{eq.strong.conv.cor.1}
\lim_{n\rightarrow\infty} \Pr\{|Z_{n s(n)}^* - C_c|> \delta\} = 0 \ \forall \delta>0
\ea
i.e. this sequence (which represents worst-case channels in the uncertainty set) is information-stable.
\end{cor}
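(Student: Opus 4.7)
The plan has two clean steps. \textbf{Step 1:} Identify $\check{I}(\bX^*;\bY^*)$. By Theorem \ref{thm.strong.conv}, strong converse is equivalent to $\sup_{p(\bx)}\uuline{I}(\bX;\bY) = \sup_{p(\bx)}\check{I}(\bX;\bY)$, and each side equals $C_c$ by Theorem \ref{thm.C.general}. Applying Proposition \ref{prop.ineq.tildeI} at the specific input $\bX^*$ gives the sandwich
\[
C_c = \uuline{I}(\bX^*;\bY^*) \le \check{I}(\bX^*;\bY^*) \le \sup_{p(\bx)}\check{I}(\bX;\bY) = C_c,
\]
which forces $\check{I}(\bX^*;\bY^*) = C_c$ and settles the first claim essentially for free.

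\textbf{Step 2:} Construct $s(n)$. Unpacking the two rate definitions at $C_c$, for every $\delta>0$,
\[
\lim_{n\to\infty}\sup_{s}\Pr\{Z_{ns}^* \le C_c - \delta\}=0, \qquad \lim_{n\to\infty}\inf_{s}\Pr\{Z_{ns}^* > C_c + \delta\}=0.
\]
The first display already holds along \emph{any} sequence of states, so the lower tail of $Z_{n,s(n)}^*$ is controlled automatically. The entire task therefore reduces to choosing $s(n)$ so that the upper-tail probability $\Pr\{Z_{n,s(n)}^* > C_c + \delta\}$ also vanishes, for every $\delta>0$.

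For this I would use a standard diagonalisation. Take $\delta_k = 1/k$. By the second display, pick an increasing sequence $n_k$ with $\inf_s \Pr\{Z_{ns}^* > C_c + \delta_k\} < \delta_k$ for all $n\ge n_k$, and for each $n \in [n_k, n_{k+1})$ choose $s(n)$ nearly attaining this infimum, say with $\Pr\{Z_{n,s(n)}^* > C_c + \delta_k\} < 2\delta_k$. Then for any fixed $\delta>0$ and any $K>1/\delta$, every $n\ge n_K$ lies in some block indexed by $k\ge K$, so
\[
\Pr\{Z_{n,s(n)}^* > C_c + \delta\} \le \Pr\{Z_{n,s(n)}^* > C_c + \delta_k\} < 2\delta_k \le 2/K,
\]
giving $\Pr\{Z_{n,s(n)}^* > C_c + \delta\}\to 0$. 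Combined with the automatic lower-tail bound this is $Z_{n,s(n)}^* \to C_c$ in probability, i.e. \eqref{eq.strong.conv.cor.1}. The only genuine obstacle is this diagonalisation: the hypothesis $\check{I}(\bX^*;\bY^*) = C_c$ produces, for each \emph{fixed} $\delta$, an eventually good state $s_n(\delta)$, but the conclusion asks for a single sequence working \emph{uniformly} in $\delta$. Shrinking the tolerance $\delta_k$ along the blocks $[n_k,n_{k+1})$ is precisely what reconciles the two.
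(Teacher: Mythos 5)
Your proposal is correct, and Step 1 coincides exactly with the paper's argument: the same sandwich $C_c=\uuline{I}(\bX^*;\bY^*)\le\check{I}(\bX^*;\bY^*)\le\sup_{p(\bx)}\check{I}(\bX;\bY)=C_c$ via Proposition \ref{prop.ineq.tildeI} and the strong-converse condition \eqref{eq.strong.conv.1}. The difference is in Step 2. The paper first merges the two tail conditions into the two-sided statement $\lim_{n\rightarrow\infty}\inf_s\Pr\{|Z_{ns}^*-C_c|>\delta\}=0$ for all $\delta>0$ (its \eqref{eq.strong.conv.cor.5}) and then invokes the generic extraction Lemma \ref{lemma.x_ns}, which, for a \emph{fixed} sequence $x_{ns}$ with $\inf_s x_{ns}\rightarrow 0$, produces near-infimizing states $s(n)$ with $x_{ns(n)}<\inf_s x_{ns}+1/n$. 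Applied at a fixed $\delta$ this yields a sequence that a priori depends on $\delta$, and the paper is silent on why one sequence serves all $\delta$ simultaneously; your block diagonalisation with $\delta_k=1/k$ is precisely the missing (routine) uniformisation, so your write-up is, if anything, more complete on this point. You also streamline the bookkeeping by observing that the lower tail involves $\sup_s$ and hence is controlled along \emph{every} state sequence automatically, so only the upper tail (the $\inf_s$ statement coming from $\check{I}(\bX^*;\bY^*)=C_c$) needs the state selection; the paper instead carries both tails through the extraction step. In short: same skeleton, but your explicit diagonalisation buys the uniformity in $\delta$ that the paper's appeal to Lemma \ref{lemma.x_ns} leaves implicit, at the cost of a slightly longer construction.
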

\begin{proof}
Observe that $\uuline{I}(\bX^*;\bY^*) = C_c$ implies
\ba
C_c = \uuline{I}(\bX^*;\bY^*) \le \check{I}(\bX^*;\bY^*) \le \sup_{p(\bx)} \check{I}(\bX;\bY) = C_c
\ea
so that $\check{I}(\bX^*;\bY^*) = C_c$ follows, which also implies that
\ba
\lim_{n\rightarrow\infty} \inf_{s} \Pr\bLF Z_{n s}^* > C_c + \delta \bRF =0\ \forall \ \delta>0
\ea
On the other hand, $\uuline{I}(\bX^*;\bY^*) = C_c$ implies
\ba
\lim_{n\rightarrow\infty} \sup_{s} \Pr\bLF Z_{n s}^* < C_c - \delta \bRF =0\ \forall \ \delta>0
\ea
and hence
\ba
\label{eq.strong.conv.cor.5}
\lim_{n\rightarrow\infty} \inf_s \Pr\{|Z_{n s}^* - C_c|> \delta\} = 0 \ \forall \delta>0
\ea
follows. Next, we need the following technical Lemma.

\begin{lemma}
\label{lemma.x_ns}
Let $\{x_{ns}\}$ be a non-negative compound sequence such that
\ba
\label{eq.Lemma.x_ns.1}
\lim_{n\rightarrow\infty} \inf_{s} x_{ns} = 0
\ea
Then, there exists such sequence of states $s(n)$ that
\ba
\label{eq.Lemma.x_ns.2}
\lim_{n\rightarrow\infty} x_{ns(n)} = 0
\ea
\end{lemma}
\begin{proof}
When $\inf_s$ is achieved, the statement is trivial. To prove it in the general case, observe that, from the definition of $\inf_s$ and for any $n$, there always exists such $s(n)$ that
\bal
x_{ns(n)} < \inf_s x_{ns} +1/n
\eal
so that taking $\lim_{n\rightarrow \infty}$ of both sides, one obtains \eqref{eq.Lemma.x_ns.2}\footnote{this way of proof was suggested by a reviewer.}.
\end{proof}

Using this Lemma, \eqref{eq.strong.conv.cor.5} implies the existence of a sequence of channel states $s(n)$ such that \eqref{eq.strong.conv.cor.1} holds.
\end{proof}

\begin{remark}
Note that, under the conditions of Corollary \ref{cor.strong.conv.cor.1}, the sequence $s(n)$ of worst-case channel states is information-stable even though no assumption of information stability was made upfront.
\end{remark}

\begin{remark}
In light of Lemma \ref{lemma.x_ns}, condition \eqref{eq.strong.conv.2a} means that there exists such sequence of (bad) channel states $s(n)$ that the information spectrum of the corresponding sequence of normalized information densities $Z_{n s(n)}$  does not exceed $C_c$ under any input, i.e.
\ba
\exists s(n):\ \lim_{n\rightarrow\infty} \Pr\{Z_{n s(n)}> C_c + \delta\} = 0 \ \forall \delta>0
\ea
\end{remark}

\section{$\varepsilon$-Capacity of Compound Channels}
\label{sec.C_eps}

Let us now consider the so-called $\varepsilon$-channel capacity, where the error probability is not required to be arbitrary small but rather to be not larger than a given value $\varepsilon$ asymptotically. $(n,r_n,\varepsilon_n)$-code over a compound channel is defined in the same way as before. $\varepsilon$-achievable rate and capacity are defined as in \cite{Verdu}\cite{Han} (for the non-compound setting), where the extension to the compound setting follows from \eqref{eq.e_n.comp} and the requirement of codewords to be independent of channel state.

\begin{defn}
Rate $R$ is $\varepsilon$-achievable over a compound channel if there exists $(n,r_n,\varepsilon_n)$-code (where codewords are independent of channel state) such that
\ba
\limsup_{n\rightarrow\infty} \varepsilon_n \le \varepsilon, \ \liminf_{n\rightarrow\infty} r_n \ge R
\ea
\end{defn}

\begin{defn}
$\varepsilon$-capacity $C_{\varepsilon}$ of a compound channel is the largest $\varepsilon$-achievable rate over that channel:
\ba
C_{\varepsilon}= \sup\{R: R\ \mbox{is $\varepsilon$-achievable} \}
\ea
\end{defn}

To characterise $C_{\varepsilon}$ of the general compound channel, let us introduce the following quantities:
\ba
&F_{\bX}(R) \triangleq \limsup_{n\rightarrow\infty}\sup_s \Pr\bLF \frac{1}{n} i(X^n;Y^n|s) \le R \bRF \\
&\uuline{I}_{\varepsilon}(\bX;\bY) \triangleq \sup\{R: F_{\bX}(R) \le \varepsilon\}
\ea
Roughly speaking, $F_{\bX}(R)$ is the asymptotic CDF of information density rate of the compound channel and, as will be shown below, $\uuline{I}_{\varepsilon}(\bX;\bY)$ is $\varepsilon$-achievable rate over that channel. Its $\varepsilon$-capacity is as follows.

\begin{thm}
\label{thm.Ce.general}
Consider the general compound channel where channel state $s\in\sS$ is independent of the input and is known to the receiver; the transmitter  knows only the (arbitrary) uncertainty set $\sS$. Its $\varepsilon$-capacity is
\ba
C_{\varepsilon}= \sup_{p(\bx)} \uuline{I}_{\varepsilon}(\bX;\bY)
\ea
\end{thm}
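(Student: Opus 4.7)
The plan is to mirror the proof of Theorem~\ref{thm.C.general}, with the compound Feinstein bound (Lemma~\ref{lemma.comp.Feinstein}) furnishing achievability and the compound Verdu-Han bound (Lemma~\ref{lemma.comp.Verdu.Han}) furnishing the converse; the only change is that the asymptotic-zero threshold used for $C_c$ is replaced by the level $\varepsilon$. The key auxiliary fact is that $F_{\bX}(R)$ is non-decreasing in $R$, so by the definition $\uuline{I}_{\varepsilon}(\bX;\bY)=\sup\{R: F_{\bX}(R)\le\varepsilon\}$ we have $F_{\bX}(R)\le\varepsilon$ for every $R<\uuline{I}_{\varepsilon}(\bX;\bY)$ and $F_{\bX}(R)>\varepsilon$ for every $R>\uuline{I}_{\varepsilon}(\bX;\bY)$.

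For achievability I would fix $p(\bx)$ and any $\gamma>0$, set $r_n=\uuline{I}_{\varepsilon}(\bX;\bY)-2\gamma$, and invoke Lemma~\ref{lemma.comp.Feinstein} to obtain an $(n,r_n,\varepsilon_n)$-code (with codewords independent of $s$) satisfying
\begin{align*}
\varepsilon_n \le \sup_{s\in\sS}\Pr\{Z_{ns}\le \uuline{I}_{\varepsilon}(\bX;\bY)-\gamma\}+e^{-\gamma n}.
\end{align*}
Taking $\limsup_{n\to\infty}$ and using the monotonicity above gives $\limsup_n\varepsilon_n\le F_{\bX}(\uuline{I}_{\varepsilon}(\bX;\bY)-\gamma)\le\varepsilon$, while $\liminf_n r_n=\uuline{I}_{\varepsilon}(\bX;\bY)-2\gamma$; since $\gamma>0$ is arbitrary, $\uuline{I}_{\varepsilon}(\bX;\bY)$ is $\varepsilon$-achievable, and maximizing over $p(\bx)$ yields $C_{\varepsilon}\ge\sup_{p(\bx)}\uuline{I}_{\varepsilon}(\bX;\bY)$.

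For the converse, let $\uuline{I}_{\varepsilon}^{*}=\sup_{p(\bx)}\uuline{I}_{\varepsilon}(\bX;\bY)$ and suppose, toward contradiction, that some rate $R>\uuline{I}_{\varepsilon}^{*}$ is $\varepsilon$-achievable. Let $\bX$ be uniform over the codewords of the corresponding $(n,r_n,\varepsilon_n)$-code, so $\uuline{I}_{\varepsilon}(\bX;\bY)\le\uuline{I}_{\varepsilon}^{*}$. Pick $\gamma=(R-\uuline{I}_{\varepsilon}^{*})/3>0$; then $\liminf_n r_n\ge R$ forces $r_n-\gamma\ge\uuline{I}_{\varepsilon}(\bX;\bY)+\gamma$ for all sufficiently large $n$, and Lemma~\ref{lemma.comp.Verdu.Han} together with the inclusion of events gives
\begin{align*}
\varepsilon_n &\ge \sup_{s\in\sS}\Pr\{Z_{ns}\le r_n-\gamma\}-e^{-\gamma n}\\
&\ge \sup_{s\in\sS}\Pr\{Z_{ns}\le \uuline{I}_{\varepsilon}(\bX;\bY)+\gamma\}-e^{-\gamma n}.
\end{align*}
Taking $\limsup_{n\to\infty}$ yields $\limsup_n\varepsilon_n\ge F_{\bX}(\uuline{I}_{\varepsilon}(\bX;\bY)+\gamma)>\varepsilon$, a contradiction; therefore $C_{\varepsilon}\le\uuline{I}_{\varepsilon}^{*}$.

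The main obstacle I anticipate is bookkeeping around one-sided continuity at the threshold: the sup defining $\uuline{I}_{\varepsilon}$ need not be attained, so I rely on the strict definitional gap (weak inequality on one side, strict on the other) rather than on equality at the threshold, and I must align the $\limsup$ in $\varepsilon_n$ with the $\liminf$ in $r_n$ by choosing $\gamma$ to open a fixed positive margin that absorbs both. A subtle point in the converse is that the induced input $\bX$ from the codebook need not maximize $\uuline{I}_{\varepsilon}$, but it does satisfy $\uuline{I}_{\varepsilon}(\bX;\bY)\le\uuline{I}_{\varepsilon}^{*}$, which is precisely what the contradiction argument requires. As in Theorem~\ref{thm.C.general}, the argument applies verbatim under either maximum or average error probability, so the $\varepsilon$-capacity coincides in both settings.
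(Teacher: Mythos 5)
Your proposal is correct and follows essentially the same route as the paper: achievability via the compound Feinstein bound with $r_n=\uuline{I}_{\varepsilon}(\bX;\bY)-2\gamma$ and the monotonicity of $F_{\bX}$, and the converse via the compound Verdu--Han bound applied to the codeword-induced input, using $F_{\bX}(\uuline{I}_{\varepsilon}(\bX;\bY)+\gamma)>\varepsilon$; your contradiction phrasing is just a reformulation of the paper's direct converse argument. The care you take about the threshold not being attained and about $\uuline{I}_{\varepsilon}(\bX;\bY)\le\uuline{I}_{\varepsilon}^{*}$ for the induced input matches what the paper implicitly relies on.
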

\begin{proof}
The proof follows the steps of that of Theorem \ref{thm.C.general}. First, fix $p(\bx)$ and set $r_n \le \uuline{I}_{\varepsilon}(\bX;\bY) - 2 \gamma$. From Lemma \ref{lemma.comp.Feinstein}, one obtains a code such that
\ba \notag
\limsup_{n\rightarrow\infty} \varepsilon_{n} &\le \limsup_{n\rightarrow\infty} \sup_{s\in \sS} \Pr\bLF Z_{ns} \le \uuline{I}_{\varepsilon}(\bX;\bY) - \gamma \bRF \\
&= F_{\bX}(\uuline{I}_{\varepsilon}(\bX;\bY) - \gamma) \le \varepsilon
\ea
so that $\uuline{I}_{\varepsilon}(\bX;\bY) - 2\gamma$ is achievable for any $\gamma>0$, from which one obtains $C_{\varepsilon} \ge \sup_{p(\bx)} \uuline{I}_{\varepsilon}(\bX;\bY)$.

Next, let $R=\sup_{p(\bx)} \uuline{I}_{\varepsilon}(\bX;\bY)$ and set $r_n \ge R + 2 \gamma$ and use Lemma \ref{lemma.comp.Verdu.Han} to obtain
\ba \notag
\limsup_{n\rightarrow\infty}\varepsilon_{n} &\ge \limsup_{n\rightarrow\infty}\sup_{s\in \sS} \Pr\bLF Z_{ns} \le R + \gamma \bRF\\ \notag
&\ge \limsup_{n\rightarrow\infty}\sup_{s\in \sS} \Pr\bLF Z_{ns} \le \uuline{I}_{\varepsilon}(\bX;\bY) + \gamma \bRF\\
&= F_{\bX}(\uuline{I}_{\varepsilon}(\bX;\bY) + \gamma) > \varepsilon
\ea
where the last inequality follows from the definition of $\uuline{I}_{\varepsilon}(\bX;\bY)$, so that no rate above $R$ is $\varepsilon$-achievable and hence $C_{\varepsilon} \le \sup_{p(\bx)} \uuline{I}_{\varepsilon}(\bX;\bY)$.
\end{proof}

Similarly to the previous section, one can exploit the uniform convergence property and extend Theorem \ref{thm.C.uniform} to $\varepsilon$-capacity. To this end, let
\ba
\label{eq.F_XsR}
F_{\bX}(R,s) \triangleq \limsup_{n\rightarrow\infty} \Pr\bLF \frac{1}{n} i(X^n;Y^n|s)\le R \bRF
\ea
and define the $\varepsilon$-inf-information rate for channel state $s$:
\ba
\underline{I}_{\varepsilon}(\bX;\bY|s) \triangleq \sup\{R: F_{\bX}(R,s) \le \varepsilon\}
\ea

\begin{defn}
\label{defn.eps-uniform.ch}
Let $\bX_{\delta}$ be a $\delta$-suboptimal input so that $\uuline{I}_{\varepsilon}(\bX_{\delta};\bY_{\delta})\ge C_{\varepsilon} -\delta$.
A compound channel is $\varepsilon$-uniform if there exists $\delta\ge 0$ such that, for any $\bX_{\delta}$ and any rate $R$ such that $C_{\varepsilon} -2\delta \le R \le C_{\varepsilon} +2\delta$, the convergence to the limit in \eqref{eq.F_XsR} is uniform in $s\in \sS$ for any $\delta$-suboptimal input, $\bX=\bX_{\delta}$.
\end{defn}

It is straightforward to see that any finite-state channel is $\varepsilon$-uniform under any input. Following the steps of the previous section, one obtains the following bound which results in the familiar $\sup-\inf$ capacity formula.

\begin{prop}
\label{prop.uulineIe<=underlineIe}
The following inequality holds for a general compound channel:
\ba
\label{eq.uulineIe<=underlineIe}
\uuline{I}_{\varepsilon}(\bX,\bY) \le  \underline{I}_{\varepsilon}(\bX,\bY) \triangleq \inf_s \underline{I}_{\varepsilon}(\bX,\bY|s)
\ea
with equality in the inequality for an $\varepsilon$-uniform compound channel under any $\delta$-suboptimal input, $\bX=\bX_{\delta}$.
\end{prop}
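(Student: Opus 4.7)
The plan is to follow the template of Proposition \ref{prop.uuline(I)<=underline(I)}, with $\lim_n$ replaced by $\limsup_n$ and ``probability goes to zero'' replaced by ``probability eventually stays below $\varepsilon$''.

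For the inequality \eqref{eq.uulineIe<=underlineIe}, I would argue by contradiction. Abbreviating $\uuline{I}_{\varepsilon} = \uuline{I}_{\varepsilon}(\bX;\bY)$ and $\underline{I}_{\varepsilon} = \underline{I}_{\varepsilon}(\bX;\bY)$, suppose $\uuline{I}_{\varepsilon} > \underline{I}_{\varepsilon}$ and set $R = (\uuline{I}_{\varepsilon} + \underline{I}_{\varepsilon})/2$. Since $R < \uuline{I}_{\varepsilon}$, the definition of $\uuline{I}_{\varepsilon}$ gives $F_{\bX}(R) \le \varepsilon$. Since $R > \underline{I}_{\varepsilon} = \inf_s \underline{I}_{\varepsilon}(\bX;\bY|s)$, there exists some $s_0$ with $\underline{I}_{\varepsilon}(\bX;\bY|s_0) < R$, whence $F_{\bX}(R, s_0) > \varepsilon$. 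The pointwise bound $\Pr\{Z_{n s_0} \le R\} \le \sup_s \Pr\{Z_{n s} \le R\}$ valid for every $n$, combined with monotonicity of $\limsup$, gives $F_{\bX}(R) \ge F_{\bX}(R, s_0) > \varepsilon$, a contradiction.

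For the equality under the $\varepsilon$-uniform hypothesis and a $\delta$-suboptimal input $\bX_{\delta}$, I would first note that $\uuline{I}_{\varepsilon}(\bX_{\delta};\bY_{\delta}) \in [C_{\varepsilon} - \delta, C_{\varepsilon}]$ by $\delta$-suboptimality together with $C_{\varepsilon} = \sup_{p(\bx)} \uuline{I}_{\varepsilon}(\bX;\bY)$. Picking any $R = \uuline{I}_{\varepsilon}(\bX_{\delta};\bY_{\delta}) + \eta$ with $\eta \in (0,\delta)$ places $R$ inside the window $[C_{\varepsilon} - 2\delta, C_{\varepsilon} + 2\delta]$ on which Definition \ref{defn.eps-uniform.ch} supplies uniform-in-$s$ convergence of the $\limsup$ in \eqref{eq.F_XsR}. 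This uniformity permits the interchange $F_{\bX_{\delta}}(R) = \sup_s F_{\bX_{\delta}}(R, s)$. By the definition of $\uuline{I}_{\varepsilon}$, $F_{\bX_{\delta}}(R) > \varepsilon$, hence $\sup_s F_{\bX_{\delta}}(R, s) > \varepsilon$, and some $s_0$ satisfies $F_{\bX_{\delta}}(R, s_0) > \varepsilon$, i.e. $\underline{I}_{\varepsilon}(\bX_{\delta};\bY_{\delta}|s_0) \le R$. Therefore $\inf_s \underline{I}_{\varepsilon}(\bX_{\delta};\bY_{\delta}|s) \le R = \uuline{I}_{\varepsilon}(\bX_{\delta};\bY_{\delta}) + \eta$; letting $\eta \downarrow 0$ and combining with the forward inequality yields $\uuline{I}_{\varepsilon}(\bX_{\delta};\bY_{\delta}) = \underline{I}_{\varepsilon}(\bX_{\delta};\bY_{\delta})$.

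The main obstacle will be making the interchange $\limsup_n \sup_s = \sup_s \limsup_n$ rigorous from Definition \ref{defn.eps-uniform.ch}. I would unpack ``uniform-in-$s$ convergence to the limit in \eqref{eq.F_XsR}'' as saying that $\sup_{m \ge n} \Pr\{Z_{ms} \le R\} \to F_{\bX_{\delta}}(R, s)$ uniformly in $s$, after which a standard $\varepsilon/3$ argument exchanges the two operations; if uniformity is intended for the whole sequence (forcing the $\limsup$ to be an ordinary $\lim$), the exchange is immediate. A secondary concern is the boundary case $\uuline{I}_{\varepsilon}(\bX_{\delta};\bY_{\delta}) = C_{\varepsilon}$, which only narrows but does not close the interval of usable rates $R$, so the argument still goes through.
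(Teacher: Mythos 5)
Your proof is correct and takes essentially the same route as the paper's: the inequality rests on the elementary bound $\sup_s F_{\bX}(R,s) \le F_{\bX}(R)$ (which the paper packages as a set containment through the intermediate quantity $\tilde{I}_{\varepsilon}(\bX;\bY)=\sup\{R:\sup_s F_{\bX}(R,s)\le\varepsilon\}$ and a small lemma, while you run a contradiction at the midpoint rate), and the equality rests on the same uniform-convergence interchange of $\limsup_{n}$ and $\sup_s$ on the rate window $[C_{\varepsilon}-2\delta,C_{\varepsilon}+2\delta]$ from Definition \ref{defn.eps-uniform.ch}. Your version applies the interchange at a single rate $R=\uuline{I}_{\varepsilon}(\bX_{\delta};\bY_{\delta})+\eta$ and spells out its justification somewhat more explicitly than the paper, but the key ideas coincide.
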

\begin{proof}
see Appendix.
\end{proof}

Using Proposition \ref{prop.uulineIe<=underlineIe}, the $\varepsilon$-capacity of an $\varepsilon$-uniform compound channel can be expressed using the familiar $\sup-\inf$ expression.

\begin{thm}
\label{thm.Ce.uniform}
Consider the general compound channel where the channel state $s \in \mathcal{S}$ is known to the receiver but not the transmitter and is independent of the channel input; the transmitter knows the (arbitrary) uncertainty set $\mathcal{S}$. Its compound  $\varepsilon$-capacity is bounded by
\ba
C_{\varepsilon} \le \sup_{p(\bx)} \inf_{s\in \sS} \underline{I}_{\varepsilon}(\bX;\bY|s)
\ea
with equality for an $\varepsilon$-uniform compound channel. In particular, this holds when $\sS$ is of finite cardinality.
\end{thm}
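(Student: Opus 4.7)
The plan is to derive Theorem \ref{thm.Ce.uniform} as a corollary of Theorem \ref{thm.Ce.general} combined with Proposition \ref{prop.uulineIe<=underlineIe}, in precise analogy with how Theorem \ref{thm.C.uniform} was deduced from Theorem \ref{thm.C.general} via Proposition \ref{prop.uuline(I)=underline(I)}. Everything needed for this reduction has already been established in the preceding paragraphs; the task is just to chain the pieces together correctly.

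First I would establish the inequality. Theorem \ref{thm.Ce.general} gives the unconditional identity $C_{\varepsilon} = \sup_{p(\bx)} \uuline{I}_{\varepsilon}(\bX;\bY)$. Proposition \ref{prop.uulineIe<=underlineIe} supplies the pointwise bound $\uuline{I}_{\varepsilon}(\bX;\bY) \le \inf_s \underline{I}_{\varepsilon}(\bX;\bY|s)$ for every input $p(\bx)$, so taking $\sup_{p(\bx)}$ of both sides yields
\ba
C_{\varepsilon} = \sup_{p(\bx)} \uuline{I}_{\varepsilon}(\bX;\bY) \le \sup_{p(\bx)} \inf_{s\in\sS} \underline{I}_{\varepsilon}(\bX;\bY|s).
\ea

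Next, for the equality under $\varepsilon$-uniformity, I would restrict the outer supremum to $\delta$-suboptimal inputs $\bX_{\delta}$, with $\delta$ as in Definition \ref{defn.eps-uniform.ch}. Since by construction such inputs exhaust $C_{\varepsilon}$ up to $\delta$, no loss is incurred once $\delta$ is finally sent to zero. For each such $\bX_{\delta}$, the equality part of Proposition \ref{prop.uulineIe<=underlineIe} gives $\uuline{I}_{\varepsilon}(\bX_{\delta};\bY_{\delta}) = \inf_s \underline{I}_{\varepsilon}(\bX_{\delta};\bY_{\delta}|s)$, which upgrades the displayed inequality above to an equality after taking $\sup$ over $\bX_{\delta}$. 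For the finite-state claim, I would simply note that uniform convergence in $s$ of the $\limsup$ defining $F_{\bX}(R,s)$ in \eqref{eq.F_XsR} is automatic over a finite set $\sS$ (any finite collection of convergent real sequences converges uniformly), so every input -- not just the $\delta$-suboptimal ones -- satisfies the defining condition of $\varepsilon$-uniformity, and the equality applies.

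The main obstacle I anticipate is the subtlety in the equality step: Proposition \ref{prop.uulineIe<=underlineIe} guarantees equality only for $\delta$-suboptimal inputs, whereas the outer supremum in the bound ranges over all inputs. This gap is handled exactly as in the proof of Theorem \ref{thm.C.uniform}, by observing that the restriction to $\delta$-suboptimal inputs is harmless (they already supremize the left-hand side up to $\delta$) and that $\delta$ may be taken arbitrarily small at the end, so no quantitative loss remains.
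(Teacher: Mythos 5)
Your proposal is correct and follows essentially the same route as the paper, which states Theorem \ref{thm.Ce.uniform} without a separate proof precisely because it is obtained by combining Theorem \ref{thm.Ce.general} with Proposition \ref{prop.uulineIe<=underlineIe}, mirroring the derivation of Theorem \ref{thm.C.uniform} from Theorem \ref{thm.C.general} and Proposition \ref{prop.uuline(I)=underline(I)}, including the remark that restricting to $\delta$-suboptimal inputs suffices and that finite $\sS$ gives $\varepsilon$-uniformity under any input.
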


\section{Mixed and Composite Channels}
\label{sec.Mixed.Composite}

Let us consider a mixed channel of the form:
\ba
p(y^n|x^n)= \sum_{s=1}^{\infty} \alpha_s p_s(y^n|x^n)
\ea
where $\alpha_s \ge 0$, $s=1,2,...$, $\sum_s \alpha_s =1$, which is a mixture of individual channel states. The capacity of this channel in the general case (e.g. information-unstable) was found in \cite{Han}:
\ba
\label{eq.Cmix}
C_{mix} = \sup_{p(\bx)} \inf_{s: \alpha_s >0} \underline{I} (\bX;\bY|s)
\ea
where $\underline{I} (\bX;\bY|s)$ in the inf-information rate induced by $p_s(y^n|x^n)$. Following Proposition \ref{prop.uuline(I)<=underline(I)}, the compound channel capacity is upper bounded by the mixed channel capacity:
\ba
\label{eq.Cc<=Cmix}
C_c = \sup_{p(\bx)} \uuline{I}(\bX;\bY) \le C_{mix}
\ea
where the compound channel state set $\sS=\{s: \alpha_s >0\}$. As the examples in the next Section demonstrate, the inequality can be strict. Comparing \eqref{eq.Cmix} to Theorem \ref{thm.C.uniform}, one concludes that \eqref{eq.Cc<=Cmix} holds with equality provided that the compound channel is uniform (which holds if $\sS$ is of finite cardinality).

Composite channels have been introduced and studied in \cite{Effros}. This type of channels is similar to compound channels except that there is a probability measure associated with each channel state: $\{\alpha_s, p_s(y^n|x^n)\}$. A channel state $p_s(y^n|x^n)$ is selected with probability $\alpha_s$ and kept constant during the whole transmission. Since the channel description is entirely probabilistic, the general formula in \cite{Verdu} applies and its capacity is the same as the mixed channel capacity in \eqref{eq.Cmix}: $C_{com}=C_{mix}$, and the inequality in \eqref{eq.Cc<=Cmix} applies.

\section{Examples}
\label{sec.Examples}

\subsection{Example 1}

To demonstrate the difference between Theorems \ref{thm.C.general} and \ref{thm.C.uniform} and the fact that inequality in \eqref{eq.prop.uuline(I)<=underline(I)} can be strict, consider the following binary non-stationary channel with memory:
\ba
p_s(y^n|x^n)=p_s(y^n) \ \mbox{if}\ n \le s
\ea
i.e. the output is independent of the input. If $n>s$, then the channel is $n$-th extension of BSC with zero cross-over probability, and $\sS = \{1,2,...\}$. This can model a channel with memory where the noise coherence time $\tau = s$ so that blocklength $n>\tau$ is required to achieve low error probability. Since $i(X^n;Y^n|s)=0$ if $s\ge n$ , it follows that $\uuline{I}(\bX;\bY)=0$ while $\underline{I}(\bX;\bY|s)=\ln 2\ \forall s$ under i.i.d. equiprobable input, so that
\ba
\uuline{I}(\bX;\bY)=0 < \underline{I}(\bX;\bY) = \inf_s \underline{I}(\bX;\bY|s) =\ln 2
\ea
and hence
\ba
C_c = \sup_{p(\bx)} \uuline{I}(\bX;\bY)=0 < \ln 2 = \sup_{p(\bx)} \inf_{s\in \sS} \underline{I}(\bX;\bY|s)
\ea
The compound capacity $C_c$ is zero because for any blocklength, does not matter how large, there are always channel states with error probability close to 1 so that arbitrary low error probability is not attainable. The standard $\sup-\inf$ expression falls short of the channel capacity in this case because this compound channel is not uniform. It also demonstrates that Theorem 3.3.5 in \cite{Han} cannot ensure reliable communications for infinite-state compound channels. Note that if the coherence time becomes bounded, i.e. $\tau=s\le S < \infty$, then $C_c = \sup_{p(\bx)} \inf_{s\le S} \underline{I}(\bX;\bY|s) = \ln 2$ as one can use sufficiently-long codewords constructed for memoryless BSC (notice also that the channel becomes uniform in this case).

This example can be extended to a scenario where the channel is $\mathrm{BSC}(q_1)$ if $n \le s$ and $\mathrm{BSC}(q_2)$ otherwise, where $\mathrm{BSC}(q)$ is the $n$-th extension of a binary symmetric channel with crossover probability $q$, $0\le q_2 < q_1 \le 1/2$, so that
\ba\notag
C_c &= \ln 2-H(q_1)\\
&< \ln 2 -H(q_2)\\ \notag
 &= \sup_{p(\bx)} \inf_{s\in \sS} \underline{I}(\bX;\bY|s)
\ea
where $H(q)$ is the binary entropy function.

\subsection{Example 2}

Let us consider the following additive noise compound channel model:
\ba
Y_k = X_k + Z_{ks}
\ea
where $k$ is (discrete) time index, $s$ is a state, the compound noise process $\{Z_{ks}\}_{k=1}^{\infty}$ is arbitrary but independent of $\{X_{k}\}_{k=1}^{\infty}$, and all alphabets are binary. Using Theorem \ref{thm.C.general}, its compound channel capacity can be evaluated via the properties in Proposition \ref{prop.properties.I}:
\ba
\label{eq.Example2.2}
C_c = \sup_{p(\bx)} \uuline{I} (\bX;\bY) = \ln2 - \ooline{H}(\bZ)
\ea
To see this, observe that
\ba
\label{eq.Example2.3} \notag
\uuline{H}(\bY) - \ooline{H}(\bZ) &\le \uuline{I} (\bX;\bY)\\
&\le \ooline{H}(\bY) - \ooline{H}(\bZ)\\  \notag
&\le \ln 2 - \ooline{H}(\bZ)
\ea
since $\ooline{H}(\bY|\bX)=\ooline{H}(\bZ)$. On the other hand,
\ba
\ln2 \ge \uuline{H}(\bY) \ge \uuline{H}(\bY|\bZ) = \uuline{H}(\bX)
\ea
and likewise for the sup-entropy rates. Using i.i.d. equiprobable sequence for $\bX$ results in $\ooline{H}(\bY)= \uuline{H}(\bY) = \uuline{H}(\bX) =\ln 2$ and thus the lower and upper bounds in \eqref{eq.Example2.3} coincide resulting in \eqref{eq.Example2.2} (this also shows that i.i.d. equiprobable signaling is optimal regardless of the statistics of the noise).

When there is only one channel state (i.e. non-compound channel), the capacity was obtained before in \cite{Verdu} using the general formula there:
\ba
\label{eq.C.Zn}
C = \sup_{p(\bx)} \underline{I} (\bX;\bY) = \ln2 - \overline{H}(\bZ)
\ea
While the two expressions look remarkably similar, they may produce significantly different results. To see this, consider the following compound noise process:
\ba
\label{eq.Zsn}
Z_s^n = \{w_1,w_2,...w_s,0,0...0\}
\ea
i.e. for a given state $s$, first $s$ symbols are i.i.d. equiprobable binary random variables $w_1...w_s$ and the last $n-s$ symbols are zeros. The associated probability distribution $p_s(z^n)=1/2^n$ if $s\ge n$ so that $\ooline{H}(\bZ)=\ln 2$ and $C_c = 0$. This result can be explained by observing that for any $n$, does not matter how large, there are always channel states $s\ge n$ for which the channel is BSC(1/2), i.e. useless. On the other hand, using \eqref{eq.C.Zn} for any channel state $s$ results in
\ba
C_s = \sup_{p(\bx)} \underline{I} (\bX;\bY|s) = \ln 2 - \overline{H}(\bZ|s) = \ln 2
\ea
since, as it can be easily demonstrated, $\overline{H}(\bZ|s)=0$ for any $s$ (loosely speaking, this is because the random part of the sequence in \eqref{eq.Zsn} is negligible when $n \rightarrow \infty$). If one attempts to use Theorem \ref{thm.C.uniform} (or, equivalently, Theorem 3.3.5 in \cite{Han}),
\ba
\sup_{p(\bx)} \inf_{s} \underline{I}(\bX;\bY|s) = \ln 2 = C_s > C_c =0
\ea
since, as can be easily seen, $\underline{I}(\bX;\bY|s) = \ln 2$ when the input is i.i.d. equiprobable. The discrepancy is explained by the fact that this compound channel is not uniform and thus Theorem \ref{thm.C.uniform} and Theorem 3.3.5 in \cite{Han} do not apply.


\subsection{Example 3}
To demonstrate the practical utility of Theorems \ref{thm.C.general}, \ref{thm.C.uniform}, let us consider the following discrete-time wireless channel model:
\ba
y_i = h x_i +\xi_i
\ea
where $h$ is the channel gain, $\xi$ is the noise of variance $\sigma_{\xi}^2$, and $i$ is discrete time. The channel is memoryless. The channel gain $h$ models the wireless propagation path loss from the Tx to the Rx. Noise $\xi$ models thermal noise as well as external (e.g. multi-user) interference.

First, assume that $h$ is a given (fixed) constant known to the Tx and Rx. Further assume that $\sigma_{\xi}$ is randomly selected at the beginning and held constant during  the transmission, so that $\sigma_{\xi} = \sigma_1$ with probability $p_1>0$ and $\sigma_{\xi} = \sigma_2$ with probability $p_2 = 1-p_1$, $\sigma_1 > \sigma_2$. This can model a scenario where interference (from another user) is present with probability $p_1$ and absent with probability $p_2$, so that $\sigma_2^2 = \sigma_0^2$, $\sigma_1^2 = \sigma_0^2 + \sigma_I^2$, where $\sigma_{0(I)}^2$ is the noise (interference) power. Clearly, the channel is non-ergodic (information-unstable) so that
\ba
\frac{1}{n} i(X^n;Y^n|h) &\rightarrow I_{\bx}(h,\sigma_{\xi})
\ea
where $I_{\bx}(h,\sigma_{\xi})$ is the mutual information rate for given $h$, $\sigma_{\xi}$ and $p(\bx)$. Since $\sigma_{\xi}$ is random, so is $I_{\bx}(h,\sigma_{\xi})$ and thus $\frac{1}{n} i(X^n;Y^n|h)$ converges to $I_{\bx}(h,\sigma_{k})$ with probability $p_k$, $k=1,2$. The largest achievable rate under given $p(\bx)$ and arbitrary-small error probability is
\ba
R = \underline{I}(\bX;\bY|h) = I_{\bx}(h,\sigma_{1}) < I_{\bx}(h)
\ea
where $I_{\bx}(h) = p_1 I_{\bx}(h,\sigma_{1}) + p_2 I_{\bx}(h,\sigma_{2})$ is the regular mutual information rate, i.e. falls short of the mutual information rate (since the channel is information-unstable), where we assumed that $I_{\bx}(h,\sigma)$ is decreasing in $\sigma$. The difference can be significant if the noise power is large enough.

Now assume that $h$ is not known to the Tx but is known to belong to the uncertainty set $\mathcal{S} = [h_1, h_2]$, $0 \le h_1 < h_2$ (e.g. due to uncertainty in the user location, which affects the propagation path loss), so that a single code has to be designed to operate on all such channels. It can be seen that this compound channel is uniform. The compound capacity of this information-unstable channel is
\ba \notag
C &= \sup_{p(\bx)} \inf_h \underline{I}(\bX;\bY|h)\\
 &= \sup_{p(\bx)} I_{\bx}(h_1, \sigma_{1})\\  \notag
 &< \sup_{p(\bx)} I_{\bx}(h_1)
\ea
i.e. falls short of the regular compound channel capacity (which would be the capacity if the channel were information-stable).

It is clear that this example also extends to the case of any number of possible levels of $\sigma_{\xi}$ or when $\sigma_{\xi}$ is a continuous random variable characterized by the density $f(\sigma)$, in which case $\sigma_1 = \sup\{\sigma: f(\sigma)>0\}$ is the supremum of the support set of $\sigma_{\xi}$. A compound channel with memory can be considered in a similar way.

\subsection{Example 4: the impact of the Rx CSI}
\label{sec.Rx.CSI}

All the results in this paper are based on the assumption of the full Rx CSI. A question arises as to whether some of these results hold if this assumption is removed. The following example from \cite{Lapidoth-98B} demonstrates that the key result in Theorem \ref{thm.C.general} does not hold in general without such assumption.

Consider the following compound channel, which is binary, deterministic and fixed in time:
\bal
y_k = x_k + \theta_k
\eal
where $k$ is discrete time and the state $s$ is defined from
\bal
s = \sum_{i=1}^{\infty} 2^{-i} \theta_i,\ 0\le s \le 1,
\eal
i.e. $\theta_i$ is $i$-th binary digit of $s$. It is straightforward to verify that, for each channel state, this channel is information-stable for each $s$ and, for the uniform input $p(x^n)=1/2^n$,
\bal\notag
n^{-1}i(X^n;Y^n|s)\overset{\Pr}{=} \ln 2,\ \underline{I}(\bX;\bY|s)=\ln 2,\ \uuline{I}(\bX;\bY)=\ln 2,
\eal
i.e. this is a uniform compound channel, and
\bal
\sup_{p(\bx)} \uuline{I}(\bX;\bY)=\ln 2
\eal
Yet, with no Rx CSI, the capacity of this compound channel is $C_c=0$ \cite{Lapidoth-98B}. This can be easily established by observing that this is a binary discrete memoryless channel in disguise, which is required to work for every possible (and unknown) noise sequence and hence the same strategy can be used for the binary symmetric channel with cross-over probability of 1/2, for which the capacity is zero. Hence, Theorem \ref{thm.C.general} does not hold for this channel under no Rx CSI. This example also shows that Theorem 3.3.5 in \cite{Han} does not hold in general for infinite-state channels.

\section{Conclusion}

The general formula for the compound channel capacity with full CSI-R has been established using the information density approach, which does not require the channel to be stationary, ergodic, or information-stable, and which applies to any channel uncertainty set (not only countable or finite-state). The conditions for the worst-case and compound capacities to be equal are given. The compound inf-information rate plays a key role for the general formula. Its properties are studied, including the data processing inequality and optimality of independent inputs for the general compound memoryless channel. As a by-product, the AVC capacity is established under deterministic code and maximum error probability. The $\varepsilon$-capacity of the general compound channel is established and the sufficient and necessary conditions for the strong converse to hold are given.

Examples are provided, which show that finite and infinite-state compound channels can behave differently and which demonstrate the utility of the results in wireless communications.

\section{Acknowledgement}
The authors are grateful to S. Verdu and E. Telatar for insightful discussions and suggestions, and A. Lapidoth for valuable comments.

\section{Appendix}

\subsection{Proof of Lemma \ref{lemma.comp.Feinstein}}

Let us define
\ba
B_s(x^n)=\{y^n: i(x^n;y^n|s) \ge \ln \alpha\}, \ \alpha=M_n e^{n\gamma}, \\
\label{eq.lambda_n}
\lambda_n =\sup_{s\in \sS} \Pr\bLF i(X^n;Y^n|s) \le \ln \alpha \bRF + M_n/\alpha
\ea
and observe, for future use, that
\ba
\notag
1 &\ge \Pr \bLF Y^n \in B_s(x^n)| x^n \bRF \\ \notag
&= \sum_{y^n\in B_s(x^n)} p_s(y^n|x^n) \\ \notag
& \overset{(a)}{\ge} \alpha \sum_{y^n\in B_s(x^n)} p_s(y^n) \\
&= \alpha P_s(B_s(x^n))
\ea
from which it follows that
\ba
\label{eq.Pr.Bs<1/a}
P_s(B_s(x^n)) \le 1/\alpha \ \forall s, x^n,
\ea
where (a) follows from $p_s(y^n|x^n) \ge \alpha p_s(y^n)$ $\forall y^n \in B_s(x^n)$.

We use an iterative codebook construction similar to that in Section 3.5 of \cite{Ash} but properly extended to the compound channel setting here. Fix the input distribution $p(\bx)$. Find $x^n$ such that
\ba
x^n:\ \inf_s P_s(B_s(x^n)|x^n) \ge 1-\lambda_n
\ea
and use it as codeword 1, $\bu_1=x^n$ (note that this codeword is independent of channel state $s$); set the decision region $D_{1s}=B_s(\bu_1)$ for this codeword, so that probability of correct decision for this codeword is at least $1-\lambda_n$.

Next, find $x^n \neq \bu_1$ such that
\ba
x^n:\ \inf_s P_s(B_s(x^n)-D_{1s}|x^n) \ge 1-\lambda_n
\ea
and use it as codeword 2, $\bu_2=x^n$; set the decision region $D_{2s}=B_s(\bu_2)-D_{1s}$.

For codeword $K$, find $x^n \neq \bu_k, k=1...K-1$, such that
\ba
x^n:\ \inf_s P_s\left(B_s(x^n)-\bigcup_{k=1}^{K-1}D_{ks}|x^n\right) \ge 1-\lambda_n
\ea
and set $\bu_K=x^n$, $D_{Ks}=B_s(\bu_K)-\bigcup_{k=1}^{K-1}D_{ks}$.

Assume that the process stops at $k=K$, i.e. no further $x^n$ can be found satisfying the required inequality, so that:
\ba
\label{eq.K}
\inf_s P_s\left(B_s(x^n)-D_s|x^n\right) < 1-\lambda_n \ \forall x^n \neq \bu_k, k=1...K.
\ea
where $D_s=\bigcup_{k=1}^{K}D_{ks}$. The same inequality also holds for $x^n=\bu_k$, since
\ba
\label{eq.K.uk}
B_s(\bu_k)-D_s = B_s(\bu_k)-\bigcup_{l=1}^{K}B_s(\bu_l) = \emptyset
\ea
The following Lemma shows that a sufficiently large number of codewords can be constructed in this way.

\begin{lemma}
\label{lemma.K>Mn}
The algorithm above generates $K>M_n$ codewords.
\end{lemma}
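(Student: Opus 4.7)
The plan is to adapt Ash's greedy-construction argument for the single-channel Feinstein lemma to the compound setting. I would argue by contradiction: assume the algorithm terminates with $K \le M_n$. The termination clause \eqref{eq.K}, combined with the empty-set observation \eqref{eq.K.uk}, gives
\[
\inf_{s\in\sS} P_s(B_s(x^n)-D_s\mid x^n) < 1-\lambda_n \quad\text{for every }x^n,
\]
where $D_s=\bigcup_{k=1}^{K} D_{ks}$. Writing $\beta_s(x^n):=P_s(B_s(x^n)-D_s\mid x^n)$, the strategy is to integrate this inequality against $p(x^n)$ and confront the result with a lower bound on $E_{X^n}[\beta_s(X^n)]$ that is valid uniformly in $s$.

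For each fixed $s$ I would split $\beta_s(x^n)=P_s(B_s(x^n)\mid x^n)-P_s(B_s(x^n)\cap D_s\mid x^n)$ and handle the two pieces separately. The expectation of the first piece equals $\Pr\{i(X^n;Y^n|s)\ge\ln\alpha\}\ge 1-(\lambda_n-M_n/\alpha)$, where the bound is uniform in $s$ by the definition \eqref{eq.lambda_n} of $\lambda_n$ (the $\sup_s$ sits outside the probability, which is the whole point of that definition). The expectation of the second piece is at most $P_s(D_s)$; since the $D_{ks}$ are disjoint by construction and $P_s(B_s(\bu_k))\le 1/\alpha$ for every $s$ by \eqref{eq.Pr.Bs<1/a}, this is at most $K/\alpha$. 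Combining yields $E_{X^n}[\beta_s(X^n)]\ge 1-\lambda_n+(M_n-K)/\alpha$ for every $s$, which is $\ge 1-\lambda_n$ whenever $K\le M_n$. A measurable selection $s(x^n)$ attaining (within an arbitrarily small slack) the infimum in the termination condition then turns the pointwise strict inequality into $E_{X^n}[\beta_{s(X^n)}(X^n)]\le 1-\lambda_n$, which clashes with the per-state lower bound, giving the required contradiction and hence $K>M_n$.

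The main obstacle is exactly this last step: swapping the infimum over the arbitrary (possibly uncountable) state set $\sS$ with the expectation over $X^n$, since in general $E[\inf_s f_s]\le\inf_s E[f_s]$ -- the wrong direction to be useful. What is expected to rescue the argument is that the per-state lower bound $1-\lambda_n+(M_n-K)/\alpha$ is the \emph{same} quantity for every $s$, the uniformity arising precisely from the $\sup_s$ being outside the probability in \eqref{eq.lambda_n}; so the argument can be run along the measurable-selection path $s(x^n)$ without ever having to appeal to $\inf_s E_{X^n}[\beta_s(X^n)]$. Some care is also needed with strict versus non-strict inequality -- the termination condition is strict while averaging only preserves non-strict -- and this is what separates $K>M_n$ from the weaker $K\ge M_n$.
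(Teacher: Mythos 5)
Your fixed-$s$ computation is sound and is in fact the same as the paper's: for every \emph{fixed} $s$, $E_{X^n}\left[P_s(B_s(X^n)\mid X^n)\right]=\Pr\{i(X^n;Y^n|s)\ge\ln\alpha\}\ge 1-\lambda_n+M_n/\alpha$ by the definition \eqref{eq.lambda_n}, and $E_{X^n}\left[P_s(B_s(X^n)\cap D_s\mid X^n)\right]\le P_s(D_s)\le K/\alpha$ by disjointness of the $D_{ks}$ and \eqref{eq.Pr.Bs<1/a}. The genuine gap is the final ``clash.'' Both of these bounds are statements about the average over $X^n$ at one fixed state; neither is a pointwise-in-$x^n$ bound. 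Hence they say nothing about the mixed quantity $E_{X^n}\left[\beta_{s(X^n)}(X^n)\right]$ obtained by following a selection $s(x^n)$ of near-minimizing states: along such a path the first piece becomes $\sum_{x^n}p(x^n)\Pr\{i(x^n;Y^n|s(x^n))\ge\ln\alpha\mid x^n\}$, which is not $\Pr\{i(X^n;Y^n|s)\ge\ln\alpha\}$ for any single $s$ and can drop below $1-\lambda_n+M_n/\alpha$ (the definition of $\lambda_n$ controls only $\sup_s$ of the fixed-$s$ average, which is weaker than the average of per-$x^n$ worst cases); likewise the second piece is no longer $P_s(D_s)$ under a single output measure, so the disjointness-plus-$1/\alpha$ argument giving $K/\alpha$ is also lost. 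The asserted contradiction between $E\left[\beta_{s(X^n)}(X^n)\right]\le 1-\lambda_n$ and the per-state lower bound therefore never materializes; it is exactly the forbidden interchange $E[\inf_s\beta_s]\le\inf_s E[\beta_s]$ reappearing, and the fact that the per-state bound is the same number for every $s$ does not repair it. (The strict-versus-nonstrict issue you raise is not the real problem: strict pointwise inequalities survive averaging against a probability distribution.)

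For comparison, the paper avoids mixing states altogether: it argues from \eqref{eq.K} and \eqref{eq.K.uk} that there is a \emph{single} channel state $s_0$ with $P_{s_0}(B_{s_0}(x^n)-D_{s_0}\mid x^n)<1-\lambda_n$ for every $x^n$, and then runs precisely your two-piece averaging at that fixed $s_0$, obtaining $\lambda_n<\Pr\{i(X^n;Y^n|s_0)\le\ln\alpha\}+K/\alpha\le\lambda_n-M_n/\alpha+K/\alpha$, i.e. $K>M_n$. So the missing ingredient in your write-up, relative to the paper, is the extraction of one state that witnesses the stopping condition uniformly in $x^n$ --- resolving the ``$\forall x^n\,\exists s$'' versus ``$\exists s\,\forall x^n$'' quantifier order, which is where the whole difficulty of the compound extension is concentrated. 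Once a single state is fixed, your computation goes through essentially verbatim; without that step the proof is incomplete.
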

\begin{proof}
To see this, observe that it follows from \eqref{eq.K} and \eqref{eq.K.uk} that there exists such channel state $s_0$ that
\ba
P_s\left(B_s(x^n)-D_s|x^n\right) < 1-\lambda_n \ \forall x^n, s=s_0
\ea
For this channel state, one obtains:
\ba \notag
\lambda_n &< 1- \sum_{x^n} p(x^n) P_{s_0}\left(B_0\cap D_{s_0}^c|x^n\right)\\ \notag
&= 1- \sum_{x^n} p(x^n) (P_{s_0}\left(B_0|x^n\right) -  P_{s_0}\left(B_0\cap D_{s_0}|x^n\right))\\
\label{eq.lambda_n bound}
&= P_{s_0}\left(B_{s_0}^c(X^n)\right) + \sum_{x^n} p(x^n) P_{s_0}\left(B_0\cap D_{s_0}|x^n\right)
\ea
where $B_0 = B_{s_0}(x^n)$, $D_s^c$ denotes the complement of $D_s$. Note that the 1st term in \eqref{eq.lambda_n bound} is
\ba
\label{eq.Bs^c}
t_1= P_{s_0}\left(B_{s_0}^c(X^n)\right) = \Pr\bLF i(X^n;Y^n|{s_0})<\ln\alpha \bRF
\ea
and 2nd term $t_2$ can be upper bounded as follows:
\ba \notag
t_2&= \sum_{x^n} p(x^n) P_{s_0}\left(B_0\cap D_{s_0}|x^n\right)\\ \notag &\le \sum_{x^n} p(x^n) P_{s_0}\left(D_{s_0}|x^n\right) \\ \notag
&= \sum_{x^n} p(x^n) \sum_{k=1}^{K} P_{s_0}\left(D_{k{s_0}}|x^n\right)\\ \notag
&=\sum_{k=1}^{K} \Pr\left(Y^n \in D_{k{s_0}}\right)\\ \notag
&\le \sum_{k=1}^{K} \Pr\left(Y^n \in B_{s_0}(\bu_k)\right)\\
\label{eq.K/a}
&\le K/\alpha
\ea
where we have used the facts that (i) the sets $\{D_{ks}\}_{k=1}^{K}$ are non-overlapping and (ii) $D_{ks} \in B_s(\bu_k)$. The last inequality follows from $\Pr\left(Y^n \in B_s(\bu_k)\right) \le 1/\alpha$, which follows from \eqref{eq.Pr.Bs<1/a}. Combining \eqref{eq.Bs^c} with \eqref{eq.K/a} and using \eqref{eq.lambda_n}, one finally obtains:
\ba
\lambda_n &< \Pr\bLF i(X^n;Y^n|s_0)\le \ln\alpha \bRF + K/\alpha\\ \notag
\lambda_n &= \sup_{s\in \sS} \Pr\bLF i(X^n;Y^n|s) \le \ln \alpha \bRF +
M_n/\alpha \\
&\ge \Pr\bLF i(X^n;Y^n|s_0) \le \ln \alpha \bRF +
M_n/\alpha
\ea
from which it follows that $M_n < K$.
\end{proof}

Thus, one can always select $M_n$ codewords using this iterative method. For this codebook, the maximum error probability $\varepsilon_{n,max}$ satisfies
\ba \notag
\varepsilon_{n,max} &= \sup_s \max_k P_s(D_{ks}^c|\bu_k)\\ \notag
&= \max_k \sup_s P_s(D_{ks}^c|\bu_k)\\ \notag
&= \max_k (1-\inf_s P_s(D_{ks}|\bu_k))\\
&\le \lambda_n
\ea
where $P_s(D_{ks}^c|\bu_k)$ represents error probability when $\bu_k$ is transmitted under channel state $s$ and where $\inf_s P_s(D_{ks}|\bu_k) \ge 1-\lambda_n$ by code construction. Since $\varepsilon_{n,max} \le \lambda_n$, so is the average error probability $\varepsilon_n \le \lambda_n$, from which \eqref{eq.comp.Feinstein} follows.

\subsection{Proof of Proposition \ref{prop.uuline(I)=underline(I)}}

We begin with the following Lemma.
\begin{lemma}
\label{lemma.lim sup = sup lim}
Let the sequence $f_n(s)\ge 0$ be such that $f_n(s) \rightarrow 0$ as $n\rightarrow\infty$ for any $s$. Then, the following holds if and only if the convergence is uniform,
\ba
\label{eq.lim sup = sup lim}
\lim_{n\rightarrow\infty} \sup_s f_n(s) = \sup_s \lim_{n\rightarrow\infty} f_n(s) = 0
\ea
\end{lemma}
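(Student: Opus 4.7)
The plan is to separate the two directions of the equivalence and observe first that the second equality in \eqref{eq.lim sup = sup lim}, namely $\sup_s \lim_{n\rightarrow\infty} f_n(s) = 0$, is immediate from the hypothesis of pointwise convergence to zero: since $\lim_{n\rightarrow\infty} f_n(s) = 0$ for every $s$, the supremum of the resulting constantly-zero function is zero, independent of any uniformity. So the substantive content of the lemma is the equivalence between uniform convergence of $f_n(\cdot)$ to zero and the first equality $\lim_{n\rightarrow\infty} \sup_s f_n(s) = 0$.

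For the \emph{sufficiency} direction (uniform convergence implies \eqref{eq.lim sup = sup lim}), I would fix any $\varepsilon > 0$ and invoke uniform convergence to produce $n_0(\varepsilon)$, \emph{independent of $s$}, such that $f_n(s) < \varepsilon$ for every $s$ and every $n \ge n_0$. Taking $\sup_s$ on both sides gives $0 \le \sup_s f_n(s) \le \varepsilon$ for every $n \ge n_0$, where the lower bound uses $f_n(s)\ge 0$. Since $\varepsilon > 0$ was arbitrary, this forces $\lim_{n\rightarrow\infty} \sup_s f_n(s) = 0$.

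For the \emph{necessity} direction I would reverse the argument. Assume $\lim_{n\rightarrow\infty} \sup_s f_n(s) = 0$. Then for any $\varepsilon > 0$ there exists $n_0(\varepsilon)$ with $\sup_s f_n(s) < \varepsilon$ for every $n \ge n_0$. Because $f_n(s) \le \sup_{s'} f_n(s')$ for each individual $s$, this gives $f_n(s) < \varepsilon$ simultaneously for all $s$ and all $n \ge n_0$, which is precisely the definition of uniform convergence of $f_n$ to zero.

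I do not expect a substantive obstacle here: the result is a folklore analysis fact, and each direction reduces to the observation that the inequality $\sup_s f_n(s) < \varepsilon$ is equivalent to $f_n(s) < \varepsilon$ holding across all $s$ with a common threshold $n_0(\varepsilon)$, which is exactly the distinction between uniform and pointwise convergence. No structure on the index set is needed and $\sup_s$ need not be attained, since the argument works verbatim with $\sup$ in place of $\max$. The reason to record this lemma explicitly is that it isolates the interchange-of-limits phenomenon that drives Proposition \ref{prop.uuline(I)=underline(I)} and the $\sup - \inf$ capacity expression of Theorem \ref{thm.C.uniform}, where uniform convergence in $s$ is precisely what converts the compound inf-information rate $\uuline{I}$ into the more familiar $\inf_s \underline{I}(\bX;\bY|s)$.
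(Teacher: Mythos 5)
Your proof is correct and follows essentially the same route as the paper's: the second equality is dispatched by pointwise convergence, sufficiency by taking $\sup_s$ of the uniform bound $0\le f_n(s)<\epsilon$ for $n\ge n_0(\epsilon)$, and necessity by reading the bound $\sup_s f_n(s)<\epsilon$ pointwise in $s$ with the common threshold $n_0(\epsilon)$. No gaps.
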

\begin{proof}
First, note that $f_n(s) \rightarrow 0$ as $n\rightarrow\infty$ for any $s$ implies 2nd equality in \eqref{eq.lim sup = sup lim}. To prove the sufficiency for the 1st one, note that, from uniform convergence, there exists $n_0(\epsilon)$ such that
\ba
0 \le f_n(s) < \epsilon
\ea
for any $\epsilon>0$ and any $n \ge n_0(\epsilon)$. Taking $\lim_{n\rightarrow\infty}\sup_s$ of both sides, one obtains 1st equality. To prove the "only if" part, observe that the 1st equality in \eqref{eq.lim sup = sup lim} implies that for any $\epsilon >0$ there exists $n_0(\epsilon)$ such that
\bal
0 \le \sup_s f_n(s) < \epsilon\ \forall n> n_0(\epsilon)
\eal
which implies $0\le f_n(s) < \epsilon$ and hence the uniform convergence.
\end{proof}

We now show that \eqref{eq.prop.uuline(I)=underline(I)} holds for uniform compound channels. Indeed, set $R=\underline{I}(\bX_{\delta},\bY_{\delta})-\gamma$, $\gamma >0$,
\ba
f_n(s)=\Pr\bLF \frac{1}{n} i(X^n_{\delta};Y^n_{\delta}|s) \le R \bRF,
\ea and observe that
\ba
\lim_{n\rightarrow\infty}\sup_s f_n(s) = \sup_s \lim_{n\rightarrow\infty} f_n(s) = 0 \ \forall \gamma >0,
\ea
where the 1st equality is from Lemma \ref{lemma.lim sup = sup lim} and the 2nd one - from the definition of $\underline{I}(\bX,\bY)$. From this, it follows that $\uuline I(\bX_{\delta},\bY_{\delta}) \ge \underline{I}(\bX_{\delta},\bY_{\delta})$. Combining this with \eqref{eq.prop.uuline(I)<=underline(I)}, one obtains \eqref{eq.prop.uuline(I)=underline(I)}. To show the "only if" part, observe that
\ba\notag
0 &= \sup_s \lim_{n\rightarrow\infty} f_n(s)\\ \notag
&= \sup_s \lim_{n\rightarrow\infty} \Pr\bLF n^{-1} i(X^n_{\delta};Y^n_{\delta}|s) \le \uuline{I} - \gamma \bRF\\ \notag
&= \lim_{n\rightarrow\infty}\sup_s  \Pr\bLF n^{-1} i(X^n_{\delta};Y^n_{\delta}|s) \le \uuline{I} - \gamma \bRF\\
&= \lim_{n\rightarrow\infty}\sup_s f_n(s)
\ea
where 2nd and last equalities are due to $\uuline I(\bX_{\delta},\bY_{\delta}) = \underline{I}(\bX_{\delta},\bY_{\delta})$; 1st and 3rd equalities are due to the definitions of $\underline{I}(\bX_{\delta},\bY_{\delta})$ and $\uuline I(\bX_{\delta},\bY_{\delta})$. Evoking now Lemma \ref{lemma.lim sup = sup lim}, one obtains the "only if" part.

\subsection{Proof of Proposition \ref{prop.prop.XY}}

While \eqref{eq.prop.1} and \eqref{eq.prop.2} are intuitive, we give below rigorous proofs. \eqref{eq.prop.1} is proved by contradiction: assume that $\uuline{\bX} > \ooline{\bX}$, let $r=(\uuline{\bX}+\ooline{\bX})/2$, $\delta = (\uuline{\bX}-\ooline{\bX})/2>0$, so that
\ba
r= \uuline{\bX} -\delta = \ooline{\bX} + \delta
\ea
and hence
\ba\notag
0 &= \lim_{n\rightarrow\infty}\sup_s \Pr \bLF X_{ns} \le \uuline{\bX}-\delta \bRF\\ \notag
  &= \lim_{n\rightarrow\infty}\sup_s \Pr \bLF X_{ns} \le \ooline{\bX}+\delta \bRF\\ \notag
  &= 1 - \lim_{n\rightarrow\infty}\inf_s \Pr \bLF X_{ns} > \ooline{\bX}+\delta \bRF\\
  &\ge 1 - \lim_{n\rightarrow\infty}\sup_s \Pr \bLF X_{ns} \ge \ooline{\bX}+\delta \bRF    = 1
\ea
i.e. a contradiction, where 1st and last equalities are from the definitions of $\uuline{\bX}$ and $\ooline{\bX}$.

To prove \eqref{eq.prop.2}, notice that
\ba\notag
\uuline{(-\bX)} &= \sup \bLF x: \lim_{n\rightarrow\infty}\sup_s \Pr \bLF -X_{ns} \le x \bRF=0 \bRF\\ \notag
&= \sup \bLF x: \lim_{n\rightarrow\infty}\sup_s \Pr \bLF X_{ns} \ge -x \bRF=0 \bRF\\ \notag
&= -\inf \bLF z: \lim_{n\rightarrow\infty}\sup_s \Pr \bLF X_{ns} \ge z \bRF=0 \bRF\\
 &= -\ooline{(\bX)}
\ea
where $z=-x$.

To prove 2nd inequality in \eqref{eq.prop.3}, we show 1st that
\ba
\uuline{(\bX+\bY)}\le \uuline{\bX} + \ooline{\bY}
\ea
To this end, notice that proving this inequality is equivalent to proving that
\ba
\lim_{n\rightarrow\infty} \sup_s \Pr \bLF X_{ns}+Y_{ns} \le \alpha \bRF = 0
\ea
implies $\alpha \le \uuline{\bX}+\ooline{\bY}$, from which the desired inequality follows by taking $\sup$ of both sides. To prove this implication, observe that
\ba
\notag
0 &= \lim_{n\rightarrow\infty} \sup_s \Pr \bLF X_{ns}+Y_{ns} \le \alpha \bRF \\ \notag
&= \lim_{n\rightarrow\infty} \sup_s (P_{1,ns} + P_{2,ns})\\ \notag
&\ge \lim_{n\rightarrow\infty} \sup_s P_{1,ns}\\
\label{eq.P1'}
&\ge \lim_{n\rightarrow\infty} \sup_s P_{1,ns}'\\
\label{eq.P1'+P2'}
&= \lim_{n\rightarrow\infty} \sup_s (P_{1,ns}' + P_{2,ns}')\\
\label{eq.ans<=alpha-b-delta}
&= \lim_{n\rightarrow\infty} \sup_s \Pr \bLF X_{ns} \le \alpha - \ooline{\bY}-\delta \bRF =0
\ea
for any $\delta > 0$, where
\ba \notag
P_{1,ns} &= \Pr\{X_{ns}+Y_{ns} \le \alpha|B_{ns}\} \Pr\{B_{ns}\}\\ \notag
P_{2,ns} &= \Pr\{X_{ns}+Y_{ns} \le \alpha|B_{ns}^c\} \Pr\{B_{ns}^c\}\\ \notag
P_{1,ns}' &= \Pr\{X_{ns}\le \alpha-\ooline{\bY}-\delta|B_{ns}\}\Pr\{B_{ns}\}\\ \notag
P_{2,ns}' &= \Pr\{X_{ns} \le \alpha - \ooline{\bY} - \delta|B_{ns}^c\} \Pr\{B_{ns}^c\},
\ea
$B_{ns}$ denotes the event $\{Y_{ns} \le \ooline{\bY}+\delta\}$ and $B_{ns}^c$ is its complement; \eqref{eq.P1'} follows from the definition of $B_{ns}$; \eqref{eq.P1'+P2'} follows from
\ba
\lim_{n\rightarrow\infty} \sup_s P_{2,ns}' \le \lim_{n\rightarrow\infty} \sup_s \Pr\{B_{ns}^c\} = 0
\ea
where the equality follows from the definitions of $\ooline{\bY}$ and $B_{ns}^c = \{Y_{ns} > \ooline{\bY}+\delta\}$. Finally, \eqref{eq.ans<=alpha-b-delta} implies that $\alpha - \ooline{\bY}-\delta \le \uuline{\bX}$ so that $\alpha \le \uuline{\bX} + \ooline{\bY}+\delta$ for any $\delta >0$ from which $\alpha \le \uuline{\bX} + \ooline{\bY}$ follows. 2nd inequality in \eqref{eq.prop.3} follows from the symmetry of $\uuline{(\bX+\bY)}$ while the 1st inequality follows from the 2nd by observing that
\ba
\uuline{(\bX+\bY)} + \uuline{(-\bY)} = \uuline{(\bX+\bY)} - \ooline{\bY} \le  \uuline{\bX}
\ea
and re-labeling the sequences.

\eqref{eq.prop.4} follows from \eqref{eq.prop.3} via \eqref{eq.prop.2}.

\subsection{Proof of Proposition \ref{prop.zns}}

The proof consists of two parts.

\textit{Part 1}: $\uuline{\bZ} \le \tilde{Z}$. This is proved by contradiction. Assume that $\uuline{\bZ} > \tilde{Z}$ which is equivalent to $\uuline{\bZ} \ge \tilde{Z} +3\delta$ for some $\delta>0$. From the definition of $\tilde{Z}$, there are infinitely many $n$ such that $\inf_s E\{Z_{ns}\} \le \tilde{Z} + \delta/2$ and from the definition of $\inf_s$, there are such channel states $s=s(n)$ that
\ba
E\{Z_{ns(n)}\} \le \inf_s E\{Z_{ns}\} + \delta/2 \le \tilde{Z} + \delta
\ea
for all such $n$, which are denoted as $n_k,\ k=1...\infty$. Let $Z_k =  Z_{n_k s(n_k)}$ and $\tilde{Z}_{k} =  E\{Z_k\}$, and observe that
\ba
\label{eq.znks.1}
0 &= \lim_{k\rightarrow\infty}\sup_ s \Pr\{Z_{n_k s} > E\{Z_{n_k s}\} +\delta\}\\ \notag
&\ge \lim_{k\rightarrow\infty} \Pr\{Z_{k} > \tilde{Z}_{k} +\delta \}\\
\label{eq.znks.3}
&\ge \lim_{k\rightarrow\infty} \Pr\{Z_{k} > \tilde{Z} + 2\delta\} = 0
\ea
where the last equality follows from the 1st one, so that
\ba
\label{eq.znks.4}
\lim_{k\rightarrow\infty} \Pr\{Z_{k} \le \tilde{Z} + 2\delta\} = 1
\ea
where \eqref{eq.znks.1} follows from Lemma \ref{lemma.Comp.Chebyshev} below, \eqref{eq.znks.3} follows from $\tilde{Z}_{k} \le \tilde{Z} + \delta$. On the other hand,
\ba
\lim_{k\rightarrow\infty} \Pr\{Z_k \le \tilde{Z} + 2\delta\} &\le \lim_{k\rightarrow\infty} \Pr\{Z_k \le \uuline{\bZ} - \delta\} \\ \notag
&\le \lim_{k\rightarrow\infty}\sup_ s \Pr\{Z_{n_k s} \le \uuline{\bZ} -\delta \} = 0
\ea
where 1st inequality is due to $\uuline{\bZ} \ge \tilde{Z} +3\delta$, which is a contradiction to \eqref{eq.znks.4}.

\begin{lemma}[Convergence in Probability for a Compound Sequence]
\label{lemma.Comp.Chebyshev}
Let $\{Z_{ns}\}_{n=1}^{\infty}$ be a compound sequence of random variables of variance $\sigma_{ns}^2$ each such that \eqref{eq.sigmans=0} holds. Then,
\ba
\lim_{n\rightarrow\infty}\sup_ s \Pr\{|Z_{n s} - E\{Z_{ns}\}| > \varepsilon \} = 0 \ \forall \varepsilon>0
\ea
\begin{proof}
From Chebyshev inequality,
\ba
\Pr\{|Z_{n s} - E\{Z_{ns}\}| > \varepsilon \} \le \sigma_{ns}^2/\varepsilon^2
\ea
Using $\lim_{n\rightarrow\infty}\sup_ s$ on both sides results in desired equality.
\end{proof}
\end{lemma}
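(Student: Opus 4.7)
\textbf{Proof proposal for Lemma \ref{lemma.Comp.Chebyshev}.} The plan is to reduce the statement to a pointwise (in $n$ and $s$) variance bound via Chebyshev's inequality and then push the hypothesis $\lim_{n\to\infty}\sup_s\sigma_{ns}^2=0$ through a supremum and a limit. Since the random variable $Z_{ns}$ has finite variance $\sigma_{ns}^2$ for each fixed pair $(n,s)$, Chebyshev's inequality applies individually to every such pair, yielding
\ba
\Pr\{|Z_{ns}-E\{Z_{ns}\}|>\varepsilon\}\le \frac{\sigma_{ns}^2}{\varepsilon^2}
\ea
for every $n,s$ and every $\varepsilon>0$. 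The key observation is that the right-hand side is independent of the underlying distribution once the variance is fixed, so the bound is uniform in $s$ for a given $n$.

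The next step is to take $\sup_s$ on both sides. Because the inequality holds for every $s$, we get
\ba
\sup_s \Pr\{|Z_{ns}-E\{Z_{ns}\}|>\varepsilon\}\le \frac{\sup_s \sigma_{ns}^2}{\varepsilon^2},
\ea
which is still a valid inequality between non-negative quantities for each $n$. Taking $\lim_{n\to\infty}$ on both sides and invoking the hypothesis \eqref{eq.sigmans=0}, the right-hand side vanishes, and since probabilities are non-negative the left-hand side is squeezed to zero, giving the desired conclusion for any $\varepsilon>0$.

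I do not anticipate any real obstacle here: the only subtlety is recognizing that the Chebyshev bound is a uniform-in-distribution function of the variance alone, so the passage through $\sup_s$ and then $\lim_n$ is mechanical. No appeal to uniform convergence arguments (as in Lemma \ref{lemma.lim sup = sup lim}) is needed, since the bound on the probability is itself a single deterministic quantity once $\sigma_{ns}^2$ is known; the hypothesis then does all the work.
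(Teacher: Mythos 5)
Your proof is correct and follows essentially the same route as the paper: Chebyshev's inequality applied pointwise in $(n,s)$, then $\sup_s$ and $\lim_{n\to\infty}$ pushed through the bound $\sigma_{ns}^2/\varepsilon^2$, with the hypothesis \eqref{eq.sigmans=0} forcing the limit to zero. Your write-up merely makes explicit the monotonicity of $\sup_s$ and the squeeze by non-negativity, which the paper leaves implicit.
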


\textit{Part 2}: $\uuline{\bZ} \ge \tilde{Z}$. This follows from the following chain of inequalities:
\ba
0 &= \lim_{n\rightarrow\infty}\sup_ s \Pr\{Z_{n s} \le E\{Z_{ns}\} -\delta \}\\ \notag
&\ge \lim_{n\rightarrow\infty}\sup_ s  \Pr\{Z_{n s} \le \inf_s E\{Z_{ns}\} -\delta \} \\ \notag
&\ge \lim_{n\rightarrow\infty}\sup_ s  \Pr\{Z_{n s} \le \tilde{Z} - 2\delta \} = 0
\ea
for any $\delta>0$, i.e. $\uuline{\bZ} \ge \tilde{Z} - 2\delta$, which implies  $\uuline{\bZ} \ge \tilde{Z}$, where 1st equality follows from Lemma \ref{lemma.Comp.Chebyshev} and the last inequality is due to $\inf_s E\{Z_{ns}\} \ge \tilde{Z} - \delta$ for sufficiently large $n$ (from the definition of $\tilde{Z}$).

\subsection{Proof of Proposition \ref{prop.properties.I}}

To prove \eqref{eq.properties.1}, observe that
\ba
\notag
\lim_{n\rightarrow\infty}\sup_ s &\Pr\left\{\frac{1}{n} \ln \frac{p_{s x^n}(X^n)}{p_{s y^n}(X^n)} \le -\delta \right\}\\  \notag
  &= \lim_{n\rightarrow\infty} \sup_s \sum_{x^n: p_{s x^n}(x^n) \le p_{s y^n}(x^n) e^{-\delta n}} p_{s x^n}(x^n)\\ \notag
&\le  \lim_{n\rightarrow\infty} \sup_ s \sum_{x^n} p_{s y^n}(x^n) e^{-\delta n} \\
&=  \lim_{n\rightarrow\infty} e^{-\delta n} = 0 \ \forall \delta>0
\ea
from which \eqref{eq.properties.1} follows.

Eq. \eqref{eq.properties.2} follows by observing that $\uuline{I}(\bX;\bY)$ is the compound inf-divergence rate between $(\bX,\bY)$ and $(\bX',\bY')$, where $\bX'$ and $\bY'$ are independent of each other and have the same distributions as $\bX$ and $\bY$.

Eq. \eqref{eq.properties.3} follows from the symmetry of information density: $i(x^n;y^n|s)=i(y^n;x^n|s)$.

Eq. \eqref{eq.properties.4}-\eqref{eq.properties.6} follow from using $\uuline{(\cdot)}$ on
\ba
i(x^n;y^n|s) = \ln \frac{1}{p_{s}(y^n)} - \ln \frac{1}{p_{s}(y^n|x^n)}
\ea
and applying the inequalities in \eqref{eq.prop.3}. \eqref{eq.properties.4a}-\eqref{eq.properties.5a} follow from \eqref{eq.properties.4}-\eqref{eq.properties.5}.

To prove 1st inequality in \eqref{eq.properties.7}, notice that
\ba
i(x^n,y^n;z^n|s) = i(x^n;z^n|s)+i(y^n;z^n|x^n,s),
\ea
use $\uuline{(\cdot)}$ and the inequality in \eqref{eq.prop.3}. 2nd inequality follows from $\uuline{I}(\bY;\bZ|\bX) \ge 0$ and the equality part follows from
\ba
\uuline{I}(\bX,\bY;\bZ) \le \uuline{I}(\bX;\bZ) + \ooline{I}(\bY;\bZ|\bX) = \uuline{I}(\bX;\bZ)
\ea
1st inequality in \eqref{eq.properties.8} follows from $p_s(x^n|y^n) \le 1$ when the alphabet is discrete. To prove the last inequality, let $Z_{ns} = -n^{-1}\ln p_s(X^n)$ and observe the following:
\ba \notag
\Pr\{Z_{ns} \ge \ln N_x +\delta\} &= \sum_{x^n: p_s(x^n) \le e^{-n(\ln N_x +\delta)}} p_s(x^n)\\ \notag
&\le \sum_{x^n} e^{-n(\ln N_x +\delta)}\\
&= e^{-n(\ln N_x +\delta)} N_x^n = e^{-n\delta}
\ea
so that
\ba \notag
\lim_{n\rightarrow\infty} \sup_ s \Pr\{Z_{ns} \ge \ln N_x +\delta\} =0 \ea
and therefore $\uuline{H}(\bX) \le \ooline{H}(\bX) \le \ln N_x +\delta$ for any $\delta>0$, from which the desired inequality follows. This also implies the last inequalities in \eqref{eq.properties.9}-\eqref{eq.properties.10}.

2nd inequality in \eqref{eq.properties.9} follows from $\uuline{H}(\bY|\bX) \ge 0$ and \eqref{eq.properties.4}, \eqref{eq.properties.3}.

2nd inequality in \eqref{eq.properties.10} can be obtained via similar reasoning using
\ba
\ooline{I}(\bX;\bY) \le \ooline{H}(\bX) - \uuline{H}(\bX|\bY)
\ea

Eq. \eqref{eq.properties.9a} follow from \eqref{eq.properties.6}.

\subsection{Proof of Proposition \ref{prop.uulineI<=liminf.inf.I}}

Let $Z_{ns}=\frac{1}{n} i(X^n;Y^n|s)$ and observe that
\ba \notag
\frac{1}{n} &I(X^n;Y^n|s) = E\bLF Z_{ns} \bRF\\
 &\ge E\{Z_{ns}1[Z_{ns} \le 0]\} + E\{Z_{ns}1[Z_{ns} \ge \uuline{I}-\delta]\}
\ea
for any $0<\delta<\uuline{I}$, where $1[\cdot]$ is the indicator function and $\uuline{I}=\uuline{I}(\bX,\bY)$. 1st term $t_1$ can be lower bounded as follows:
\ba \notag
t_1 &= E\{Z_{ns}1[Z_{ns} \le 0]\}\\ \notag
&= \sum_{x^n,y^n: z_{ns}\le 0} p_s(y^n)p(x^n) w_{ns}\ln w_{ns}\\ \notag
&\ge -\frac{1}{n e} \sum_{x^n,y^n: z_{ns}\le 0} p_s(y^n)p_s(x^n)\\
&\ge -\frac{1}{n e}
\ea
where $w_{ns}=p_s(y^n|x^n)/p_s(y^n)$ and 1st inequality follows from $w\ln w \ge -1/e$. 2nd term $t_2$ can be lower bounded as follows:
\ba \notag
t_2 &= E\{Z_{ns}1[Z_{ns} \ge \uuline{I}-\delta]\}\\ \notag
&= \sum_{x^n,y^n: z_{ns}\ge \uuline{I}-\delta} z_{ns} p_s(y^n|x^n)p(x^n)\\ \notag
&\ge (\uuline{I}-\delta) \Pr\{Z_{ns}\ge \uuline{I}-\delta\}
\ea
Combining these two bounds, one obtains:
\ba \notag
\label{eq.Ins.I-delta}
\liminf_{n\rightarrow\infty} \inf_s \frac{1}{n} I(X^n;Y^n|s) &\ge (\uuline{I}-\delta) \lim_{n\rightarrow\infty} \inf_s \Pr\{Z_{ns}\ge \uuline{I}-\delta\} \\
&= \uuline{I}-\delta
\ea
where the equality follows from
\ba \notag
0 &=\lim_{n\rightarrow\infty} \sup_s \Pr\{Z_{ns} < \uuline{I}-\delta\} \\
&= 1 - \lim_{n\rightarrow\infty} \inf_s \Pr\{Z_{ns}\ge \uuline{I}-\delta\}
\ea
Since the inequality in \eqref{eq.Ins.I-delta} holds for each $\delta>0$, one obtains 1st inequality in \eqref{eq.uulineI<=liminf.inf.I} by taking $\delta \rightarrow 0$; 2nd one follows in the standard way.

\subsection{Proof of Proposition \ref{prop.uulineI=liminf.inf.I}}

Observe that
\ba
\label{eq.prop.uulineI=liminf.inf.I.3}\notag
E\{Z_{ns}\} &=  \overbrace{E\{Z_{ns}1[Z_{ns} \le 0]\}}^{t_1} + \overbrace{E\{Z_{ns}1[0< Z_{ns} < \uuline{I} - \delta ]\}}^{t_2}\\ \notag
 &+ \overbrace{E\{Z_{ns}1[|\uuline{I} - Z_{ns}| \le \delta]\}}^{t_3}\\ \notag
&+  \underbrace{E\{Z_{ns}1[\uuline{I} + \delta < Z_{ns} < \ln N +\delta ]\}}_{t_4}\\
 &+ \underbrace{E\{Z_{ns}1[Z_{ns} \ge \ln N +\delta ]\}}_{t_5}
\ea
where $0<\delta < \uuline{I}$, $N$ is the cardinality of either input or output alphabet (whichever is less) and $\uuline{I}=\uuline{I}(\bX,\bY)$. Let $t_1...t_5$ denote the terms on the righthand side of \eqref{eq.prop.uulineI=liminf.inf.I.3}, so that
\ba
\label{eq.prop.uulineI=liminf.inf.I.4}
\underline{\lim}\ E\{Z_{ns}\} \le \overline{\lim}\ t_1 + \overline{\lim}\ t_2 + \overline{\lim}\ t_3 + \underline{\lim}\ t_4 + \overline{\lim}\ t_5
\ea
where $\underline{\lim} = \liminf_{n\rightarrow\infty} \inf_s$ and $\overline{\lim} = \limsup_{n\rightarrow\infty} \sup_s$. It follows from the proof of Proposition \ref{prop.uulineI<=liminf.inf.I} that $t_1 \ge -1/(n e)$ so that $\overline{\lim}\ t_1 = 0$.

Without loss of generality, assume that the input alphabet is of finite cardinality and observe that the following holds:
\ba
Z_{ns} &= \frac{1}{n} \ln \frac{p_s(X^n|Y^n)}{p(X^n)} \le \frac{1}{n} \ln \frac{1}{p(X^n)}
\ea
since $p_s(x^n|y^n)\le 1$, so that
\ba \notag
E\{Z_{ns}1[Z_{ns} \ge \ln N +\delta ]\} &\le \frac{1}{n} \sum_{x^n: p(x^n)\le e^{-n\alpha}} p(x^n) \ln \frac{1}{p(x^n)} \\ \notag
&\le \sum_{x^n: p(x^n)\le e^{-n\alpha}} \alpha e^{-n\alpha}\\ \notag
&\le \alpha e^{-n\alpha} N^n\\
&= (\ln N + \delta) e^{-n\delta}
\ea
where $\alpha = \ln N +\delta$; $p(x^n)\le e^{-n\alpha}$ follows from $Z_{ns} \ge \ln N +\delta$; 2nd inequality is due to the fact that $-w \ln w$ is an increasing function if $w < 1/e$ . Taking $\lim_{n\rightarrow\infty} \sup_s$ of both sides, it follows that
\ba
\lim_{n\rightarrow\infty} \sup_s E\{Z_{ns}1[Z_{ns} \ge \ln N +\delta ]\} = 0 \ \forall \delta>0
\ea
so that $\overline{\lim}\ t_5 = 0$.

Next, observe that
\ba \notag
\label{eq.prop.uulineI=liminf.inf.I.8}
t_2 &= \sum_{x^n,y^n: 0<z_{ns}<\uuline{I} - \delta} z_{ns} p_s(y^n,x^n) \\ \notag
&\le (\uuline{I} -\delta) \sum_{x^n,y^n: 0<z_{ns}<\uuline{I} - \delta} p_s(y^n,x^n)\\
&\le (\uuline{I} -\delta) \Pr\{Z_{ns}<\uuline{I} - \delta\}
\ea
where $z_{ns}=n^{-1}i(x^n;y^n|s)$ so that
\ba
\overline{\lim}\ t_2 \le (\uuline{I} -\delta) \overline{\lim}\ \Pr\{Z_{ns}<\uuline{I} - \delta\} = 0
\ea
Using the same argument as for $t_2$, one obtains:
\ba
\underline{\lim}\ t_4 \le (\ln N +\delta) \underline{\lim}\ \Pr\{Z_{ns}> \uuline{I} + \delta\} =0
\ea
where the equality follows from \eqref{eq.prop.uulineI=liminf.inf.I.1}. Finally, one obtains:
\ba
\label{eq.prop.uulineI=liminf.inf.I.9}\notag
\underline{\lim}\ E\{Z_{ns}\} &\le \overline{\lim}\ t_3\\\notag
 &\le (\uuline{I} + \delta) \overline{\lim}\ \Pr\{|\uuline{I} - Z_{ns}| \le \delta\}\\
 & = \uuline{I} + \delta
\ea
where the equality follow from $\overline{\lim}\ \Pr\{|\uuline{I} - Z_{ns}| \le \delta\} =1$, which in turn is implied by \eqref{eq.prop.uulineI=liminf.inf.I.1}. Since \eqref{eq.prop.uulineI=liminf.inf.I.9} holds for any $\delta>0$, it follows that $\underline{\lim}\ E\{Z_{ns}\} \le\uuline{I}$, which in combination with \eqref{eq.uulineI<=liminf.inf.I} results in $\underline{\lim}\ E\{Z_{ns}\} =\uuline{I}$.

\subsection{Proof of Proposition \ref{prop.limsup.sup.I<=oolineI}}

The 1st inequality was established in \eqref{eq.uulineI<=liminf.inf.I}. The 2nd inequality is well-known. The last inequality can be established as follows. Let $i_{sn} = n^{-1} i(X^n;Y^n|s)$, $I_{sn} = E\{i_{sn}\}$, $\ooline{I}=\ooline{I}(\bX;\bY)$, $\overline{\lim} =\limsup_{n\rightarrow\infty} \sup_s$, and observe that the following chain inequality holds for any $\delta>0$:
\ba \notag
\overline{\lim}\ I_{sn} &= \overline{\lim}\ \lim_{a\rightarrow\infty} E\{i_{sn} 1[i_{sn} \le a]\} \\ \notag
&\le \limsup_{n\rightarrow\infty} \lim_{a\rightarrow\infty} \sup_s E\{i_{sn} 1[i_{sn} \le a]\} \\
\label{eq.prop.limsup.sup.I<=oolineI.4}
&= \lim_{a\rightarrow\infty} \overline{\lim}\ E\{i_{sn} 1[i_{sn} \le a]\} \\ \notag
&\le \lim_{a\rightarrow\infty} (\overline{\lim}\ E\{i_{sn} 1[i_{sn} \le \ooline{I} +\delta ]\}\\ \notag
 &\qquad + \overline{\lim}\ E\{i_{sn} 1[\ooline{I} +\delta < i_{sn} \le a]\})\\
\label{eq.prop.limsup.sup.I<=oolineI.6}\notag
&\le \lim_{a\rightarrow\infty} ((\ooline{I} +\delta) \overline{\lim}\ \Pr\{i_{sn} \le \ooline{I} +\delta\}\\ \notag
 &\qquad + a\ \overline{\lim}\ \Pr\{i_{sn}> \ooline{I} +\delta\})\\ \notag
&= \ooline{I} +\delta
\ea
where the last equality follows from $\overline{\lim}\ \Pr\{i_{sn} \le \ooline{I} +\delta\}=1,\ \overline{\lim}\ \Pr\{i_{sn}> \ooline{I} +\delta\}=0$; \eqref{eq.prop.limsup.sup.I<=oolineI.4} follows from the uniform convergence so that $\limsup_{n\rightarrow\infty} \lim_{a\rightarrow\infty} = \lim_{a\rightarrow\infty} \limsup_{n\rightarrow\infty}$; \eqref{eq.prop.limsup.sup.I<=oolineI.6} follows in the same way as in \eqref{eq.prop.uulineI=liminf.inf.I.8}. Since this chain inequality holds for any $\delta>0$, \eqref{eq.uuI<=liminfI<limsupI<=ooI} follows.

To see that the uniform convergence holds under a finite alphabet, assume, without loss of generality, that the input alphabet is finite. Then, for any $a>0$,
\ba
I_n(a) \le I_n \le I_n(a) + \Delta I_n(a)
\ea
where $\Delta I_n(a) = \sup_s E\{i_{sn} 1[i_{sn}>a]\}$, so that
\ba
|I_n - I_n(a)| \le \Delta I_n(a)
\ea
Noting that, under finite input alphabet,
\ba
i_{ns} \le Z_n = \frac{1}{n} \ln \frac{1}{p(X^n)}
\ea
one obtains for $a>\max[1, \ln N_x]$:
\ba \notag
\Delta I_n(a) &\le E\{Z_{n} 1[Z_{n}>a]\} \\ \notag
&= \frac{1}{n} \sum_{x^n: p(x^n)< e^{-na}} p(x^n) \ln \frac{1}{p(x^n)} \\ \notag
&\le \sum_{x^n: p(x^n)< e^{-na}} a e^{-na}\\ \notag
&\le a e^{-na} N_x^n =  a e^{-n(a-\ln N_x)} \\
&\le a e^{-a+ \ln N_x} \rightarrow 0
\ea
as $a\rightarrow\infty$ and the convergence is uniform in $n$ (in fact, larger $n$ imply faster convergence). 2nd inequality follows from the fact that $-w \ln w$ is an increasing function for $w<1/e$.

\subsection{Proof of Proposition \ref{prop.ineq.tildeI}}

The 1st inequality is proved by contradiction. Let $\uuline{I}= \uuline{I}(\bX;\bY),\ \check{I} = \check{I}(\bX;\bY)$, assume $\uuline{I} - \check{I} = 2 \delta >0$ and set
\ba
R = (\uuline{I} + \check{I})/2 = \uuline{I} - \delta = \check{I} + \delta
\ea
so that
\ba \notag
0 &= \lim_{n\rightarrow\infty} \sup_s \Pr\{Z_{ns} <  \uuline{I} - \delta\}\\ \notag
&= \lim_{n\rightarrow\infty} \sup_s \Pr\{Z_{ns} <R\}\\ \notag
&= 1 - \lim_{n\rightarrow\infty} \inf_s \Pr\{Z_{ns} \ge R\}\\
&= 1 - \lim_{n\rightarrow\infty} \inf_s \Pr\{Z_{ns} \ge \check{I} + \delta\} = 1
\ea
i.e. a contradiction.

The 2nd inequality is also proved by contradiction. Let $\bar{I} = \inf_s \bar{I}(\bX;\bY|s)$, assume $\check{I} - \bar{I} = 2 \delta >0$ and set
\ba
R = (\bar{I} + \check{I})/2 = \bar{I} + \delta = \check{I} - \delta
\ea
so that, from the definition of $\check{I}$,
\ba\notag
0 < \epsilon &= \limsup_{n\rightarrow\infty} \inf_s \Pr\{Z_{ns} > \check{I} - \delta\}\\ \notag
&\le \inf_s \limsup_{n\rightarrow\infty} \Pr\{Z_{ns} > \check{I} - \delta\}\\ \notag
&= \inf_s \limsup_{n\rightarrow\infty} \Pr\{Z_{ns} > \bar{I} + \delta\}\\ \notag
&\le \limsup_{n\rightarrow\infty} \Pr\{Z_{ns^*} > \bar{I} + \delta\}\\
 &\le \limsup_{n\rightarrow\infty} \Pr\{Z_{ns^*} > \bar{I}(\bX;\bY|s^*) + \delta/2\} = 0
\ea
i.e. a contradiction, where $s^*$ is such channel state that
\ba
\bar{I}(\bX;\bY|s^*) \le \inf_s \bar{I}(\bX;\bY|s) + \delta/2
\ea
The last inequality can be proved in a similar way.

To prove \eqref{eq.uulineI<=liminf.inf.I.2}, observe that
\ba \notag
\frac{1}{n} &I(X^n;Y^n|s) = E\bLF Z_{ns} \bRF\\
 &\ge E\{Z_{ns}1[Z_{ns} \le 0]\} + E\{Z_{ns}1[Z_{ns} \ge \uuline{I}-\delta]\}
\ea
for any $0<\delta<\uuline{I}$, where $1[\cdot]$ is the indicator function and $\uuline{I}=\uuline{I}(\bX,\bY)$. The 1st term $t_1$ can be lower bounded as follows:
\ba \notag
t_1 &= E\{Z_{ns}1[Z_{ns} \le 0]\}\\ \notag
&= \frac{1}{n}\sum_{x^n,y^n: z_{ns}\le 0} p_s(y^n)p(x^n) w_{ns}\ln w_{ns}\\ \notag
&\ge -\frac{1}{n e} \sum_{x^n,y^n: z_{ns}\le 0} p_s(y^n)p_s(x^n)\\
&\ge -\frac{1}{n e}
\ea
where $w_{ns}=p_s(y^n|x^n)/p_s(y^n)$ and the 1st inequality follows from $w\ln w \ge -1/e$. The 2nd term $t_2$ can be lower bounded as follows:
\ba \notag
t_2 &= E\{Z_{ns}1[Z_{ns} \ge \uuline{I}-\delta]\}\\ \notag
&= \sum_{x^n,y^n: z_{ns}\ge \uuline{I}-\delta} z_{ns} p_s(y^n|x^n)p(x^n)\\
&\ge (\uuline{I}-\delta) \Pr\{Z_{ns}\ge \uuline{I}-\delta\}
\ea
Combining these two bounds, one obtains:
\ba \notag
\label{eq.Ins.I-delta.2}
\liminf_{n\rightarrow\infty} &\inf_s \frac{1}{n} I(X^n;Y^n|s)\\ \notag
 &\ge (\uuline{I}-\delta) \lim_{n\rightarrow\infty} \inf_s \Pr\{Z_{ns}\ge \uuline{I}-\delta\} \\
&= \uuline{I}-\delta
\ea
where the equality follows from
\ba \notag
0 &=\lim_{n\rightarrow\infty} \sup_s \Pr\{Z_{ns} < \uuline{I}-\delta\} \\
&= 1 - \lim_{n\rightarrow\infty} \inf_s \Pr\{Z_{ns}\ge \uuline{I}-\delta\}
\ea
Since the inequality in \eqref{eq.Ins.I-delta.2} holds for each $\delta>0$, one obtains the 1st inequality in \eqref{eq.uulineI<=liminf.inf.I.2} by taking $\delta \rightarrow 0$. To establish the 2nd one, let $\check{I} = \check{I}(\bX;\bY)$ and observe that
\ba \notag
I_{ns}(a) = &\underbrace{E\{Z_{ns}1[Z_{ns} \le \check{I} +\delta]\}}_{e_1}\\
 &+ \underbrace{E\{Z_{ns}1[\check{I}+\delta < Z_{ns} \le a]\}}_{e_2}
\ea
for some $\delta>0$, where $1[\cdot]$ is the indicator function. The two expectation terms can be upper bounder as
\ba\notag
&e_1 \le (\check{I} +\delta)\Pr\{Z_{ns} \le \check{I} +\delta\} \\
&e_2 \le a \cdot \Pr\{Z_{ns} > \check{I} +\delta\}
\ea
so that
\ba \notag
\liminf_{n\rightarrow\infty} &\inf_s \frac{1}{n} I(X^n;Y^n|s) = \liminf_{n\rightarrow\infty} \inf_s \lim_{a\rightarrow \infty} I_{ns}(a)\\ \notag
&= \lim_{a\rightarrow \infty} \liminf_{n\rightarrow\infty} \inf_s I_{ns}(a)\\ \notag
&\le \lim_{a\rightarrow \infty} \liminf_{n\rightarrow\infty} \inf_s ( (\check{I} +\delta)\Pr\{Z_{ns} \le \check{I} +\delta\}\\ \notag
 &\mbox{\qquad} + a \cdot \Pr\{Z_{ns} > \check{I} +\delta\})\\ \notag
&\le \lim_{a\rightarrow \infty} ((\check{I} +\delta) \limsup_{n\rightarrow\infty} \sup_s  \Pr\{Z_{ns} \le \check{I} +\delta\}\\ \notag
 &\mbox{\qquad} + a \cdot \liminf_{n\rightarrow\infty} \inf_s \Pr\{Z_{ns} > \check{I} +\delta\})\\
\label{eq.strong.conv.14}
&= \check{I} +\delta
\ea
where the 2nd equality is due to uniform convergence and the last equality is due to
\ba
&\liminf_{n\rightarrow\infty} \inf_s \Pr\{Z_{ns} > \check{I} +\delta\}) = 0\\  \notag
&\limsup_{n\rightarrow\infty} \sup_s  \Pr\{Z_{ns} \le \check{I} +\delta\}\\
 &\mbox{\qquad} = 1 - \liminf_{n\rightarrow\infty} \inf_s \Pr\{Z_{ns} > \check{I} +\delta\}) = 1
\ea
Since \eqref{eq.strong.conv.14} holds for arbitrary small $\delta >0$, it follows that
\ba
\label{eq.strong.conv.14a}
\liminf_{n\rightarrow\infty} \inf_s \frac{1}{n} I(X^n;Y^n|s) \le  \check{I}
\ea
for any input.

\subsection{Proof of Theorem \ref{thm.strong.conv}}

To prove sufficiency, let the equality in \eqref{eq.strong.conv.1} to hold and select a code satisfying
\ba
\liminf_{n\rightarrow\infty} r_n =R = C_c + 3\delta
\ea
for some $\delta >0$, so that
\ba
r_n \ge R -\delta = C_c + 2\delta = \sup_{p(\bx)} \check{I}(\bX;\bY) + 2\delta
\ea
for sufficiently large $n$. Using Lemma \ref{lemma.comp.Verdu.Han} for this code, one obtains:
\ba\notag
\lim_{n\rightarrow\infty} \varepsilon_{n} &\ge \lim_{n\rightarrow\infty} \sup_{s} \Pr\bLF Z_{ns} \le r_n - \delta \bRF\\ \notag
&\ge \lim_{n\rightarrow\infty} \sup_{s} \Pr\bLF Z_{ns} \le \sup_{p(\bx)} \check{I}(\bX;\bY) + \delta \bRF\\ \notag
&\ge \lim_{n\rightarrow\infty} \sup_{s} \Pr\bLF Z_{ns} \le \check{I}(\bX;\bY) + \delta \bRF\\ \notag
&= 1- \lim_{n\rightarrow\infty} \inf_{s} \Pr\bLF Z_{ns} > \check{I}(\bX;\bY) + \delta \bRF\\
&= 1
\ea
so that \eqref{eq.strong.conv.d1} holds, where the last equality is due to
\ba
\lim_{n\rightarrow\infty} \inf_{s} \Pr\bLF Z_{ns} > \check{I}(\bX;\bY) + \delta \bRF = 0
\ea
which follows from \eqref{eq.tilde{I}}.

To prove the necessary part, assume that \eqref{eq.strong.conv.d1} holds and, using Lemma \ref{lemma.comp.Feinstein}, select a code satisfying
\ba
\lim_{n\rightarrow\infty} r_n =R = C_c + \delta
\ea
for some $\delta >0$. This implies that
\ba
r_{n} \le C_c + 2\delta
\ea
for any sufficiently large $n$. Applying Lemma \ref{lemma.comp.Feinstein}, one obtains
\ba\notag
1 = \lim_{n\rightarrow\infty} \varepsilon_{n} &\le \lim_{n\rightarrow\infty} \sup_{s} \Pr\bLF Z_{ns} \le r_{n} + \delta \bRF\\ \notag
&\le \lim_{n\rightarrow\infty} \sup_{s} \Pr\bLF Z_{ns} \le C_c + 3\delta \bRF\\
&=1
\ea
from which it follows that
\ba
\label{eq.strong.conv.10}
\lim_{n\rightarrow\infty} \inf_{s} \Pr\bLF Z_{ns} > C_c + 3\delta \bRF =0
\ea
which implies \eqref{eq.strong.conv.2a} and $\check{I}(\bX;\bY) \le C_c$ (under any input) so that, from Proposition \ref{prop.ineq.tildeI},
\ba
\label{eq.strong.conv.10a}
C_c = \sup_{p(\bx)} \uuline{I}(\bX;\bY) \le \sup_{p(\bx)} \check{I}(\bX;\bY) \le C_c
\ea
from which \eqref{eq.strong.conv.1} follows.

To establish the sufficiency of \eqref{eq.strong.conv.2a}, observe that it implies the 2nd inequality in \eqref{eq.strong.conv.10a} from which \eqref{eq.strong.conv.1} follows, which is sufficient.

To establish \eqref{eq.strong.conv.2}, observe that $C_c = \sup_{p(\bx)} \uuline{I}(\bX;\bY)$ implies that there exists such input $\bX^*$ that $\uuline{I}(\bX^*;\bY^*) > C_c -2\delta$ so that, for any such $\bX^*$,
\ba \notag
\label{eq.strong.conv.11}
0 &= \lim_{n\rightarrow\infty} \sup_s \Pr\bLF \frac{1}{n} i(X^{n*};Y^{n*}|s) < \uuline{I}(\bX^*;\bY^*) -\delta \bRF \\
&\ge \lim_{n\rightarrow\infty} \sup_s \Pr\bLF \frac{1}{n} i(X^{n*};Y^{n*}|s) < C_c -3\delta \bRF =0
\ea
Combining this with \eqref{eq.strong.conv.10} applied to input $\bX^*$, one obtains
\ba \notag
\lim_{n\rightarrow\infty} &\inf_s \Pr\{|Z_{ns}^* - C_c|> 3\delta\} \le \lim_{n\rightarrow\infty} \inf_s \Pr\{Z_{ns}^* > C_c + 3\delta\}\\
 &+ \lim_{n\rightarrow\infty} \sup_s \Pr\{Z_{ns}^* < C_c - 3\delta\} = 0
\ea
from which \eqref{eq.strong.conv.2} follows.

To establish \eqref{eq.strong.conv.1a}, apply $\sup_{p(\bx)}$ to \eqref{eq.uulineI<=liminf.inf.I.2}  to obtain
\ba \notag
C_c &= \sup_{p(\bx)} \uuline{I}(\bX;\bY)\\ \notag
 &\le \liminf_{n\rightarrow\infty} \sup_{p(x^n)} \inf_s \frac{1}{n} I(X^n;Y^n|s)\\
  &\le \sup_{p(\bx)} \check{I}(\bX;\bY) = C_c
\ea
from which the desired result follows.

\subsection{Proof of Proposition \ref{prop.uulineIe<=underlineIe}}

First, observe that
\ba \notag
\label{eq.Fx>=sup_s Fxs}\notag
\sup_s F_{\bX}(R,s) &= \sup_s \limsup_{n\rightarrow\infty} \Pr\bLF Z_{ns} \le R \bRF\\ \notag
&\le \limsup_{n\rightarrow\infty} \sup_s \Pr\bLF Z_{ns}\le R \bRF\\
 &= F_{\bX}(R)
\ea
so that
\ba
\label{eq.uulineIe<=tildeIe} \notag
\uuline{I}_{\varepsilon}(\bX;\bY) &= \sup\{R: F_{\bX}(R) \le \varepsilon\}\\ \notag
 &\le \tilde{I}_{\varepsilon}(\bX;\bY)\\
  &= \sup\{R: \sup_s F_{\bX}(R,s) \le \varepsilon\}
\ea
Next, we need the following Lemma.

\begin{lemma}
For the general compound channel, it holds that
\ba
\label{eq.tildeIe=underlineIe}
\tilde{I}_{\varepsilon}(\bX;\bY) = \underline{I}_{\varepsilon}(\bX,\bY)=\inf_s \underline{I}_{\varepsilon}(\bX,\bY|s)
\ea
\end{lemma}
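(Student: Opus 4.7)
The plan is to exploit the fact that $F_{\bX}(R,s)$ is non-decreasing in $R$ for every fixed $s$ (since $\{Z_{ns}\le R\}\subseteq\{Z_{ns}\le R'\}$ whenever $R\le R'$, and the limsup preserves this monotonicity). Once this is in hand, the problem reduces to a level-set identity of the form $\sup\{R:\sup_s f(R,s)\le\varepsilon\}=\inf_s\sup\{R:f(R,s)\le\varepsilon\}$ for a family of monotone non-decreasing functions, and both inequalities can be obtained by taking $R$ from one feasible set and checking it lies in the other. Note that $\sup_s F_{\bX}(R,s)\le\varepsilon$ is literally equivalent to $F_{\bX}(R,s)\le\varepsilon$ for all $s$, which is the bridge between the two formulations.

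For the inequality $\tilde I_{\varepsilon}(\bX;\bY)\le\inf_s \underline{I}_{\varepsilon}(\bX,\bY|s)$, I would fix any $R$ with $\sup_s F_{\bX}(R,s)\le\varepsilon$. Then $F_{\bX}(R,s)\le\varepsilon$ for every $s$, so $R$ lies in the set $\{R':F_{\bX}(R',s)\le\varepsilon\}$ for every $s$, giving $R\le\underline{I}_{\varepsilon}(\bX,\bY|s)$ for every $s$ and hence $R\le\inf_s\underline{I}_{\varepsilon}(\bX,\bY|s)$. Taking the supremum over all such $R$ yields the desired inequality.

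For the reverse inequality, I would pick any $R<\inf_s\underline{I}_{\varepsilon}(\bX,\bY|s)$, so that $R<\underline{I}_{\varepsilon}(\bX,\bY|s)$ for every $s$. By definition of $\underline{I}_{\varepsilon}(\bX,\bY|s)$ as a supremum, for each $s$ there exists $R'(s)>R$ with $F_{\bX}(R'(s),s)\le\varepsilon$, and by the monotonicity of $F_{\bX}(\cdot,s)$ we get $F_{\bX}(R,s)\le F_{\bX}(R'(s),s)\le\varepsilon$. Since this holds uniformly for every $s$, $\sup_s F_{\bX}(R,s)\le\varepsilon$, so $R$ is feasible in the definition of $\tilde I_{\varepsilon}(\bX;\bY)$, giving $R\le\tilde I_{\varepsilon}(\bX;\bY)$. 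Letting $R\uparrow\inf_s\underline{I}_{\varepsilon}(\bX,\bY|s)$ yields $\inf_s\underline{I}_{\varepsilon}(\bX,\bY|s)\le\tilde I_{\varepsilon}(\bX;\bY)$, completing the proof.

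There is no real obstacle here: the argument is essentially the standard ``level-set swap'' for monotone functions, and the only mildly delicate point is the use of strict inequality $R<\underline{I}_{\varepsilon}(\bX,\bY|s)$ in the reverse direction, which is needed because the supremum in the definition of $\underline{I}_{\varepsilon}$ may or may not be attained; this is handled cleanly by taking the limit $R\uparrow\inf_s\underline{I}_{\varepsilon}(\bX,\bY|s)$ at the end.
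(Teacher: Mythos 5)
Your proof is correct and follows essentially the same route as the paper: one direction via the set inclusion $\{R:\sup_s F_{\bX}(R,s)\le\varepsilon\}\subseteq\{R:F_{\bX}(R,s)\le\varepsilon\}$, and the other by showing that any $R$ strictly below $\inf_s \underline{I}_{\varepsilon}(\bX,\bY|s)$ satisfies $F_{\bX}(R,s)\le\varepsilon$ for every $s$. The only difference is presentational: you argue the second direction directly (making the monotonicity of $F_{\bX}(\cdot,s)$ explicit and letting $R$ increase to the infimum), whereas the paper packages the same facts as a midpoint contradiction.
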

\begin{proof}
Using $F_{\bX}(R,s) \le \sup_s F_{\bX}(R,s)$, observe that
\ba\notag
\Omega &= \{R: \sup_s F_{\bX}(R,s) \le \varepsilon \}\\
 &\in \Omega_{s} = \{R: F_{\bX}(R,s) \le \varepsilon \} \ \forall s
\ea
so that
\ba\notag
\tilde{I}_{\varepsilon}(\bX,\bY) &= \sup\{R: R \in \Omega \}\\
 &\le \sup\{R: R \in \Omega_s \}\\ \notag
 &= \underline{I}_{\varepsilon}(\bX,\bY|s)
\ea
and hence $\tilde{I}_{\varepsilon}(\bX;\bY) \le \underline{I}_{\varepsilon}(\bX;\bY)$. The equality is proved by contradiction. Assume that $\tilde{I}_{\varepsilon}(\bX;\bY) < \underline{I}_{\varepsilon}(\bX;\bY)$ and set $R'=(\tilde{I}_{\varepsilon}(\bX;\bY) + \underline{I}_{\varepsilon}(\bX;\bY))/2$ so that $R'> \tilde{I}_{\varepsilon}(\bX;\bY)$ and hence $\sup_s F_{\bX}(R',s)>  \varepsilon$. On the other hand,
\ba
R' <  \underline{I}_{\varepsilon}(\bX,\bY) \le \underline{I}_{\varepsilon}(\bX,\bY|s) \ \forall s
\ea
implies $F_{\bX}(R',s) \le \varepsilon \ \forall s$ so that $\sup_s F_{\bX}(R',s) \le \varepsilon$ - a contradiction.
\end{proof}

Now, combing \eqref{eq.tildeIe=underlineIe} with \eqref{eq.uulineIe<=tildeIe}, \eqref{eq.uulineIe<=underlineIe} follows. To prove the equality for an $\varepsilon$-uniform compound channel under $\bX_{\delta}$, let $Z_{ns\delta} = n^{-1} i(X^n_{\delta};Y^n_{\delta}|s)$ and establish $\uuline{I}_{\varepsilon}(\bX_{\delta};\bY_{\delta}) = \tilde{I}_{\varepsilon}(\bX_{\delta};\bY_{\delta})$:
\ba \notag
\label{eq.uulineIe=tildeIe}
\uuline{I}_{\varepsilon}(\bX_{\delta};\bY_{\delta}) &= \sup\bLF R: \limsup_{n\rightarrow\infty} \sup_s \Pr\bLF Z_{ns\delta} \le R \bRF \le \varepsilon \bRF\\ \notag
&= \sup\bLF R: \sup_s \limsup_{n\rightarrow\infty} \Pr\bLF Z_{ns\delta} \le R \bRF \le \varepsilon \bRF\\
&= \tilde{I}_{\varepsilon}(\bX_{\delta};\bY_{\delta})
\ea
where the supremum is taken over $C_{\varepsilon} -2\delta \le R \le C_{\varepsilon} +2\delta$; the 2nd equality follows from the fact that $\limsup$ and $\sup$ can be swapped for an $\varepsilon$-uniform compound channel (due to the uniform convergence property).


\end{document}